\documentclass[11pt]{article}
\usepackage[margin=1in]{geometry}
\usepackage[colorlinks=true, allcolors=blue]{hyperref}

\usepackage{fixmath}
\usepackage{bm}
\usepackage{amsbsy}
\usepackage{color}
\usepackage{verbatim}
\usepackage{multirow}
\usepackage{amssymb}
\usepackage{amsthm}
\usepackage{array}
\usepackage{mathtools}
\usepackage{amsmath}
\usepackage{graphicx}
\usepackage{tikz}
\usetikzlibrary{positioning}
\usepackage{xcolor}
\usepackage{thm-restate}

\newtheorem{theorem}{Theorem}
\newtheorem{lemma}[theorem]{Lemma}
\newtheorem{corollary}[theorem]{Corollary}

\newcommand{\lra}[1]{\ensuremath{(#1)}}

\newcommand{\lrA}[1]{\ensuremath{\left(#1\right)}}

\newcommand{\lrC}[1]{\ensuremath{\left\{#1\right\}}}

\def\C{\mathcal{C}}
\def\M{\mathcal{M}}

\def\P{\mathcal{P}}

\def\E{\mathcal{E}}

\def\S{\mathcal{S}}

\let\epsilon=\varepsilon

\newcommand{\EE}[1]{\ensuremath{\mathbb{E}[#1]}}

\title{Approximation Algorithms for Packing Cycles and Paths in Complete Graphs}
\author
{
Jingyang Zhao\footnote{University of Electronic Science and Technology of China. Email: \texttt{jingyangzhao1020@gmail.com}.}
\and
Mingyu Xiao\footnote{University of Electronic Science and Technology of China.
Email: \texttt{myxiao@gmail.com}.}
}

\date{}

\begin{document}

\maketitle

\begin{abstract}
Given an edge-weighted (metric/general) complete graph with $n$ vertices, the maximum weight (metric/general) $k$-cycle/path packing problem is to find a set of $\frac{n}{k}$ vertex-disjoint $k$-cycles/paths such that the total weight is maximized. In this paper, we consider approximation algorithms. For metric $k$-cycle packing, we improve the previous approximation ratio from $3/5$ to $7/10$ for $k=5$, and from $7/8\cdot(1-1/k)^2$ for $k>5$ to $(7/8-0.125/k)(1-1/k)$ for constant odd $k>5$ and to $7/8\cdot (1-1/k+\frac{1}{k(k-1)})$ for even $k>5$. For metric $k$-path packing, we improve the approximation ratio from $7/8\cdot (1-1/k)$ to $\frac{27k^2-48k+16}{32k^2-36k-24}$ for even $10\geq k\geq 6$. For the case of $k=4$, we improve the approximation ratio from $3/4$ to $5/6$ for metric 4-cycle packing, from $2/3$ to $3/4$ for general 4-cycle packing, and from $3/4$ to $14/17$ for metric 4-path packing.

\medskip
{
\noindent\bf{Keywords}: \rm{Approximation Algorithms, Cycle Packing, Path Packing}
}
\end{abstract}

\section{Introduction}
In a graph with $n$ vertices, a \emph{$k$-cycle/path packing} is a set of $\frac{n}{k}$ vertex-disjoint $k$-cycles/paths (i.e., a simple cycle/path on $k$ different vertices) covering all vertices. For an edge-weighted complete graph, every edge has a non-negative weight. Moreover, it is called a \emph{metric} graph if the weight satisfies the triangle inequality; Otherwise, it is called a \emph{general} graph. Given a (metric/general) graph, the maximum weight (metric/general) $k$-cycle/path packing problem ($k$CP/$k$PP) is to find a $k$-cycle/path packing such that the total weight of the $k$-cycles/paths in the packing is maximized.

When $k=n$, $k$CP is the well-known maximum weight traveling salesman problem (MAX TSP). One may obtain approximation algorithms of $k$CP and $k$PP by using approximation algorithms of MAX TSP. In the following, we let $\alpha$ (resp., $\beta$) denote the current-best approximation ratio of MAX TSP on metric (resp., general) graphs. We have $\alpha=7/8$~\cite{DBLP:journals/tcs/KowalikM09} and $\beta=4/5$~\cite{DBLP:conf/ipco/DudyczMPR17}.

\subsection{Related Work}
For $k=2$, $k$CP and $k$PP are equivalent with the maximum weight perfect matching problem, which can be solved in $O(n^3)$ time~\cite{gabow1974implementation,lawler1976combinatorial}. For $k\geq3$, metric $k$CP and $k$PP become NP-hard~\cite{KirkpatrickH78}, and general $k$CP and $k$PP become APX-hard even on $\{0,1\}$-weighted graphs (i.e., a complete graph with edge weights 0 and 1)~\cite{manthey2008approximating}. There is a large number of results on approximation algorithms.

\textbf{General $k$CP.}
For $k=3$, Hassin and Rubinstein~\cite{hassin2006approximation,DBLP:journals/dam/HassinR06a} proposed a randomized $(0.518-\epsilon)$-approximation algorithm, Chen et al.~\cite{DBLP:journals/dam/ChenTW09,DBLP:journals/dam/ChenTW10} proposed an improved randomized $(0.523-\epsilon)$-approximation algorithm, and Van Zuylen~\cite{DBLP:journals/dam/Zuylen13} proposed a deterministic algorithm with the same approximation ratio. For lager $k$, Li and Yu~\cite{li2023cyclepack} proposed a $2/3$-approximation algorithm for $k=4$ and a $\beta\cdot(1-1/k)^2$-approximation algorithm for $k\geq5$. On $\{0,1\}$-weighted graphs, Bar-Noy et al.~\cite{DBLP:journals/dam/Bar-NoyPRV18} gave a $3/5$-approximation algorithm for $k=3$, and Berman and Karpinski~\cite{DBLP:conf/soda/BermanK06} gave a $6/7$-approximation algorithm for $k=n$.

\textbf{Metric $k$CP.}
For $k=3$, Hassin et al.~\cite{hassin1997approximation1} firstly gave a deterministic $2/3$-approximation algorithm, Chen et al.~\cite{DBLP:journals/jco/ChenCLWZ21} proposed a randomized $(0.66768-\epsilon)$-approximation algorithm, and Zhao and Xiao~\cite{DBLP:journals/corr/abs-2402-08216} proposed a deterministic $(0.66835-\epsilon)$-approximation algorithm.
For lager $k$, Li and Yu~\cite{li2023cyclepack} proposed a $3/4$-approximation algorithm for $k=4$, a $3/5$-approximation algorithm for $k=5$, and an $\alpha\cdot(1-1/k)^2$-approximation algorithm for $k\geq6$.

\textbf{General $k$PP.}
For $k=3$, Hassin and Rubinstein~\cite{hassin2006approximation} proposed a randomized $(0.5223-\epsilon)$-approximation algorithm, Chen et al.~\cite{tanahashi2010deterministic} proposed a deterministic $(0.5265-\epsilon)$-approximation algorithm, and Bar-Noy et al.~\cite{DBLP:journals/dam/Bar-NoyPRV18} proposed an improved $7/12$-approximation algorithm. For lager $k$, Hassin and Rubinstein~\cite{hassin1997approximation} proposed a $3/4$-approximation algorithm for $k=4$, and a $\beta\cdot(1-1/k)$-approximation algorithm for $k\geq5$. On $\{0,1\}$-weighted graphs, Hassin and Schneider~\cite{hassin2013local} gave a $0.55$-approximation algorithm for $k=3$, the ratio for $k=3$ was improved to $3/4$~\cite{DBLP:journals/dam/Bar-NoyPRV18}, and Berman and Karpinski~\cite{DBLP:conf/soda/BermanK06} gave a $6/7$-approximation algorithm for $k=n$.

\textbf{Metric $k$PP.}
Li and Yu~\cite{li2023cyclepack} proposed a $3/4$-approximation algorithm for $k=3$, a $3/4$-approximation algorithm for $k=5$, and an $\alpha\cdot(1-1/k)$-approximation algorithm for $k\geq6$. The best-known result for $k=4$ is still $3/4$ due to the general $4$PP, by Hassin and Rubinstein~\cite{hassin1997approximation}. On $\{1,2\}$-weighted graphs, there is a $9/10$-approximation algorithm for $k=4$~\cite{DBLP:journals/jda/MonnotT08}.

General/metric $k$CP and $k$PP can be seen as a special case of the weighted $k$-set packing problem, which admits an approximation ratio of $\frac{1}{k-1}-\varepsilon$~\cite{arkin1998local}, $\frac{2}{k+1}-\varepsilon$~\cite{DBLP:journals/njc/Berman00}, and $\frac{2}{k+1-1/31850496}-\varepsilon$~\cite{DBLP:conf/stacs/Neuwohner21}. Recently, these results have been further improved (see~\cite{DBLP:conf/ipco/Neuwohner22,DBLP:conf/soda/Neuwohner23,thiery2023improved}). They can be used to obtain a $1/1.786\approx0.559$-approximation ratio for general 3CP~\cite{thiery2023improved}.

We will prove that for $k$CP/$k$PP a $\rho$-approximation algorithm on $\{0,1\}$-weighted graphs can be used directly to obtain an $(1+\rho)/2$-approximation algorithm on $\{1,2\}$-weighted graphs. Based on this, for a better view, we separately list the best-known results of $k$PP and $k$CP with $k\in\{3,4,n\}$ on $\{0,1\}$-weighted or $\{1,2\}$-weighted graphs in Table~\ref{special}. For $k\notin\{3,4,n\}$, the best-known results
are the corresponding results on general or metric graphs.

\begin{table}[ht]
\centering
\begin{tabular}{ccc}
\hline
   & $\{0,1\}$-weighted graphs & $\{1,2\}$-weighted graphs\\
\hline
  MAX $n$CP & $6/7$~\cite{DBLP:conf/soda/BermanK06} & $13/14$~\cite{DBLP:conf/soda/BermanK06}\\
  MAX $n$PP & $6/7$~\cite{DBLP:conf/soda/BermanK06} & $13/14$~\cite{DBLP:conf/soda/BermanK06}\\
\hline
  MAX $4$CP & $2/3$~\cite{li2023cyclepack} & $5/6$~\cite{li2023cyclepack}\\
  MAX $4$PP & $3/4$~\cite{hassin1997approximation} & $9/10$~\cite{DBLP:journals/jda/MonnotT08}\\
\hline
  MAX $3$CP & $3/5$~\cite{DBLP:journals/dam/Bar-NoyPRV18} & $4/5$~\cite{DBLP:journals/dam/Bar-NoyPRV18}\\
  MAX $3$PP & $3/4$~\cite{DBLP:journals/dam/Bar-NoyPRV18} & $7/8$~\cite{DBLP:journals/dam/Bar-NoyPRV18}\\
\hline
\end{tabular}
\caption{An overview of best-known approximation ratios for $k$PP and $k$CP with $k\in\{3,4,n\}$ on $\{0,1\}$-weighted or $\{1,2\}$-weighted graphs}
\label{special}
\end{table}

\subsection{Our Results}
In this paper, we study approximation algorithms for metric/general $k$CP and $k$PP. We mainly consider $k$ as a constant. The contributions can be summarized as follows.

Firstly, we consider metric $k$CP. We propose a $(7/8-0.125/k)(1-1/k)$-approximation algorithm for constant odd $k$ and a $7/8\cdot (1-1/k+\frac{1}{k(k-1)})$-approximation algorithm for even $k$, which improve the best-known approximation ratio of $3/5$ for $k=5$~\cite{li2023cyclepack} and $7/8\cdot(1-1/k)^2$ for $k\geq6$~\cite{li2023cyclepack}. Moreover, we propose an algorithm based on the maximum weight matching, which can further improve the approximation ratio from $17/25$ to $7/10$ for $k=5$. An illustration of the improved results for metric $k$CP with $k\geq 5$ can be seen in Table~\ref{res1}.

\begin{table}[ht]
\centering
\begin{tabular}{ccccc}
\hline
  Metric $k$CP & 5 & 6 & 7 & 8\\
\hline
  Previous Ratio~\cite{li2023cyclepack} & 0.600 & 0.607 & 0.642 & 0.669\\
\hline
  Our Ratio & $\boldsymbol{0.700}$ & $\boldsymbol{0.758}$ & $\boldsymbol{0.734}$ & $\boldsymbol{0.781}$\\
\hline
\end{tabular}
\caption{Improved approximation ratios for metric $k$CP with $k\geq 5$}
\label{res1}
\end{table}

Secondly, we consider metric $k$PP. We propose a $\frac{27k^2-48k+16}{32k^2-36k-24}$-approximation algorithm for even $10\geq k\geq 6$, which improves the best-known approximation ratio of $7/8\cdot(1-1/k)$~\cite{hassin1997approximation}. An illustration of the improved results for metric $k$PP with even $10\geq k\geq 6$ can be seen in Table~\ref{res2}.

\begin{table}[ht]
\centering
\begin{tabular}{cccc}
\hline
  Metric $k$PP & 6 & 8 & 10\\
\hline
  Previous Ratio~\cite{hassin1997approximation} & 0.729 & 0.765 & 0.787\\
\hline
  Our Ratio & $\boldsymbol{0.767}$ & $\boldsymbol{0.783}$ & $\boldsymbol{0.794}$\\
\hline
\end{tabular}
\caption{Improved approximation ratios for metric $k$PP with even $10\geq k\geq 6$}
\label{res2}
\end{table}

Thirdly, we focus on the case of $k=4$ for metric/general $k$CP and $k$PP. For metric 4CP, we propose a $5/6$-approximation algorithm, improving the best-known ratio $3/4$~\cite{li2023cyclepack}, and as a corollary, we also give a $7/8$-approximation algorithm on $\{1,2\}$-weighted graphs. For general 4CP, we propose a $3/4$-approximation algorithm, improving the best-known ratio $2/3$~\cite{li2023cyclepack}. For metric 4PP, we propose a $14/17$-approximation algorithm, improving the best-known ratio $3/4$~\cite{hassin1997approximation}. An illustration of the improved results for the case of $k=4$ can be seen in Table~\ref{res3}.

\begin{table}[ht]
\centering
\begin{tabular}{ccc}
\hline
& Metric Graphs & General Graphs\\
\hline
  4CP & $3/4~\cite{li2023cyclepack}\rightarrow \boldsymbol{5/6}$ & $2/3~\cite{li2023cyclepack}\rightarrow \boldsymbol{3/4}$\\
\hline
  4PP & $3/4~\cite{hassin1997approximation}\rightarrow \boldsymbol{14/17}$ & $3/4$~\cite{hassin1997approximation}\\
\hline
\end{tabular}
\caption{Improved results for the case of $k=4$}
\label{res3}
\end{table}

At last, we prove that for $k$CP/$k$PP a $\rho$-approximation algorithm on $\{0,1\}$-weighted graphs can be used to obtain a $(1+\rho)/2$-approximation algorithm on $\{1,2\}$-weighted graphs. Moreover, based on the $3/5$-approximation algorithm for 3CP on $\{0,1\}$-weighted graphs~\cite{DBLP:journals/dam/Bar-NoyPRV18}, we obtain a $9/11$-approximation algorithm for 3CP on $\{1,2\}$-weighted graphs. 

\subsection{Paper Organization}
The remaining parts of the paper are organized as follows. 
In Section~\ref{sec2}, we introduce some basic notations.
In Section \ref{sec3}, we consider metric $k$CP. In Section \ref{sec3.1}, we present a better black-box reduction from metric $k$CP to metric TSP, which has already led to an improved approximation ratio for $k\geq5$. In Section \ref{sec3.2}, by using some properties of the current-best approximation algorithm for metric TSP, we obtain a further improved approximation ratio. In Section \ref{sec3.2}, we consider a simple algorithm based on matching and show that the approximation ratio is better for $k=5$. 
In Section \ref{sec4}, we consider metric $k$PP and propose an improved approximation algorithm for even $10\geq k\geq 6$. Note that metric $k$PP is harder to improve, unlike metric $k$CP. 
In Section \ref{sec5}, we propose non-trivial approximation algorithms for metric/general $k$CP and $k$PP with the case of $k=4$. In Section \ref{sec5.1}, we obtain a better approximation algorithm for general 4CP. In Section \ref{sec5.2}, we obtain a better approximation algorithm for metric 4CP. In Section \ref{sec5.3}, we obtain a better approximation algorithm for metric 4PP. 
In Section \ref{sec7}, we reduce $k$CP/$k$PP on $\{1,2\}$-weighted graphs to $k$CP/$k$PP on $\{0,1\}$-weighted graphs, and give a $9/11$-approximation algorithm for 3CP on $\{1,2\}$-weighted graphs.
Finally, we make the concluding remarks in Section \ref{sec6}.

A preliminary version of this paper was 
presented at the 18th International Conference and Workshops on Algorithms and Computation (WALCOM 2024)~\cite{DBLP:conf/walcom/0001024}.

\section{Preliminaries}\label{sec2}
We use $G=(V, E)$ to denote an undirected complete graph with $n$ vertices such that $n\bmod k=0$. There is a non-negative weight function $w: E\to \mathbb{R}_{\geq0}$ on the edges in $E$. For an edge $uv\in E$, we use $w(u,v)$ to denote its weight. A graph is called a \emph{metric} graph if the weight function satisfies the triangle inequality; Otherwise, it is called a \emph{general} graph. For any weight function $w:X\to \mathbb{R}_{\geq0}$, we define $w(Y)=\sum_{x\in Y}w(x)$ for any $Y\subseteq X$.

Two subgraphs or subsets of edges of a graph are \emph{vertex-disjoint} if they do not appear a common vertex. We only consider simple paths and simple cycles with more than two vertices. The \emph{length} of a path/cycle is the number of vertices it contains. A \emph{cycle packing} is a set of vertex-disjoint cycles such that the length of each cycle is at least three and all vertices in the graph are covered. Given a cycle packing $\C$, we use $l(\C)$ to denote the minimum length of cycles in $\C$. We also use $\C^*$ to denote the maximum weight cycle packing.
A path (resp., cycle) on $k$ different vertices $\{v_1,v_2,\dots,v_k\}$ is called a \emph{$k$-path} (resp., \emph{$k$-cycle}), denoted by $v_1v_2\cdots v_k$ (resp., $v_1v_2\cdots v_kv_1$). A \emph{$k$-path packing} (resp., \emph{$k$-cycle packing}) in graph $G$ is a set of vertex-disjoint $n/k$ $k$-paths (resp., $k$-cycles) such that all vertices in the graph are covered. 
Note that we can obtain a $k$-cycle packing by completing every $k$-path of a $k$-path packing. Let $\P^*_k$ (resp., $\C^*_k$) denote the maximum weight $k$-path packing (resp., $k$-cycle packing). We can get $w(\C^*)\geq w(\C^*_k)$ for $k\geq 3$.

A $2$-path packing is usually called a \emph{matching} of size $n/2$. The maximum weight matching of size $n/2$ is denoted by $\M^*$.
An $n$-cycle is also called a \emph{Hamiltonian} cycle. MAX TSP is to find a maximum weight Hamiltonian cycle. Since we consider maximization problems, we simply use general/metric TSP to denote MAX TSP in general/metric graphs. We use $H^*$ to denote the maximum weight Hamiltonian cycle. For a $k$-path $P=v_{1}v_{2}\cdots v_{k}$ where $k$ is even, we define 
\[
\widetilde{w}(P)=\sum_{i=1}^{k/2}w(v_{2j-1},v_{2j}).
\]

\section{Approximation Algorithms for Metric $k$CP}\label{sec3}
In this section, we improve the approximation ratio for metric $k$CP with $k\geq 5$. We will first present a better black-box reduction from metric $k$CP to metric TSP, which is sufficient to improve the previous ratio for $k\geq 5$. Then, based on the approximation algorithm for metric TSP, we prove an improved approximation ratio. Finally, we consider a matching-based algorithm that can further improve the ratio of metric $5$CP.

\subsection{A Better Black-Box}\label{sec3.1}
Given an $\alpha$-approximation algorithm for metric TSP, Li and Yu~\cite{li2023cyclepack} proposed an $\alpha\cdot (1-1/k)^2$-approximation algorithm for metric $k$CP. We will show that the ratio can be improved to $\alpha\cdot (1-0.5/k)(1-1/k)$. Moreover, for even $k$, the ratio can be further improved to $\alpha\cdot (1-0.5/k)(1-1/k+\frac{1}{k(k-1)})$. We first consider a simple algorithm, denoted by Alg.1, which mainly contains three steps.

\medskip
\noindent\textbf{Step~1.} Obtain a Hamiltonian cycle $H$ using an $\alpha$-approximation algorithm for metric TSP;

\noindent\textbf{Step~2.} Get a $k$-path packing $\P_k$ with $w(\P_k)\geq (1-1/k)w(H)$ from $H$: we can obtain a $k$-path packing by deleting one edge per $k$ edges from $H$; since there are $(1-1/k)n$ edges in $\P_k$ and $n$ edges in $H$, if we carefully choose the initial edge, we can make sure that the weight of $\P_k$ is at least $(1-1/k)n\cdot(1/n)\cdot w(H)$, i.e., on average each edge has a weight of at least $(1/n)\cdot w(H)$.

\noindent\textbf{Step~3.} Obtain a $k$-cycle packing $\C_k$ by completing the $k$-path packing $\P_k$.
\medskip

To analyze the approximation quality, we will use the path patching technique, which has been used in some papers~\cite{DBLP:journals/ipl/HassinR02,kostochka1985polynomial,DBLP:journals/tcs/KowalikM09}. 
\begin{lemma}[\cite{DBLP:journals/ipl/HassinR02,kostochka1985polynomial}]\label{patch}
Let $G$ be a metric graph. Given a cycle packing $\C$, there is a polynomial-time algorithm to generate a Hamiltonian cycle $H$ such that $w(H)\geq \lra{1-0.5/l(\C)}w(\C)$.
\end{lemma}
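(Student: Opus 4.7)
\noindent
The plan is to prove Lemma~\ref{patch} by iteratively merging the cycles of $\C$ into a single Hamiltonian cycle via a ``patching'' operation controlled by the triangle inequality. First I would describe the atomic patching step: given two vertex-disjoint cycles $A$ and $B$ together with chosen edges $(a, a^+) \in A$ and $(b, b^+) \in B$, delete both edges and add one of the two cross-edge pairs $\{(a,b), (a^+, b^+)\}$ or $\{(a, b^+), (a^+, b)\}$, choosing whichever produces a single cycle on $V(A) \cup V(B)$. The loss of this operation is $L = w(a,a^+) + w(b,b^+) - w(\text{added cross edges})$, and may be positive or negative depending on the choices made.

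Next I would apply this operation iteratively: order the cycles of $\C$ as $C_1, \ldots, C_m$ with $\ell_i := |C_i| \ge l := l(\C)$, set $H_1 := C_1$, and for $t = 2, \ldots, m$ patch $C_t$ into $H_{t-1}$ to obtain $H_t$. After $m-1$ patchings, $H := H_m$ is a Hamiltonian cycle on $V$, with $w(H) = w(\C) - \sum_{t=2}^{m} L_t$. The central claim is that at each step the deleted edges and cross-pair orientation can be chosen so that $L_t \le w(C_t)/(2\ell_t) \le w(C_t)/(2l)$; summing over $t$ then yields $w(H) \ge w(\C) - w(\C)/(2l) = (1 - 0.5/l(\C))\,w(\C)$, as required.

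The main obstacle, and the technical heart of the proof, is establishing the per-step bound $L_t \le w(C_t)/(2\ell_t)$. The naive strategy of deleting the minimum-weight edge of $C_t$ and using an arbitrary cross-pair orientation gives only $L_t \le w(C_t)/\ell_t$, which would produce the weaker $(1-1/l)$ ratio. To halve this, I would fix a suitable edge $(a, a^+) \in H_{t-1}$ and consider the $2\ell_t$ candidate patchings obtained by varying $(b,b^+) \in C_t$ and the cross-pair orientation. Applying the triangle inequality along $C_t$ should bound the summed cross-edge weight across these candidates in terms of $w(C_t)$, modulo terms that cancel with the deleted-edge contributions, so that the best candidate attains $L_t \le w(C_t)/(2\ell_t)$; the factor of $\frac{1}{2}$ is gained precisely by exploiting \emph{both} cross-pair orientations rather than only one. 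Choosing $(a, a^+)$ (or further averaging over it) so that this argument actually produces a bound in terms of $w(C_t)$, rather than merely in terms of the cross edge weights between $H_{t-1}$ and $C_t$, is the subtle technical step.
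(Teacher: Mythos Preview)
The paper does not prove Lemma~\ref{patch}; it is quoted from \cite{DBLP:journals/ipl/HassinR02,kostochka1985polynomial}. So there is no in-paper proof to compare against, but your proposal has a genuine gap.

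The per-step bound $L_t \le w(C_t)/(2\ell_t)$ is false. Take two triangles $C_1, C_2$ in which every intra-triangle edge has weight $2$ and every cross edge has weight $1$; this is a metric. Here $w(C_2)=6$ and $\ell_2=3$, so your target is $L_2\le 1$. But \emph{every} patching deletes two weight-$2$ edges and adds two weight-$1$ edges, forcing $L_2=2$. The global bound $(1-0.5/3)\,w(\C)=10$ is still attained (indeed with equality, since $w(H)=12-2=10$), but only because the ``missing'' unit of loss ought to be charged to $C_1$, a cycle to which your iterative scheme assigns zero loss. More generally, the averaging you outline---over the $2\ell_t$ choices of edge-plus-orientation in $C_t$ for a fixed $(a,a^+)$---yields only $\min\big(w(a,a^+),\,w(C_t)/\ell_t\big)$ as an upper bound on the best loss, not $w(C_t)/(2\ell_t)$: the two triangle-inequality estimates on the total cross-edge weight (one giving $\ge 2\ell_t\,w(a,a^+)$, the other $\ge 2w(C_t)$) do not add.

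The standard argument behind the lemma avoids this by treating all cycles at once rather than iteratively. Delete a uniformly random edge $e_j$ from each $C_j$ and orient each resulting path uniformly at random, then link the paths cyclically $C_1\to C_2\to\cdots\to C_m\to C_1$. Conditioning on $e_j,e_{j+1}$ and averaging over the two independent orientations, the four cross edges between their endpoints sum to at least $w(e_j)+w(e_{j+1})$ by the triangle inequality, so the linking edge between $C_j$ and $C_{j+1}$ has expected weight at least $\tfrac14\big(w(e_j)+w(e_{j+1})\big)$. Summed over all $j$, the linking edges recover in expectation half of the expected deletion cost $\sum_j w(C_j)/\ell_j$, and that is where the factor $\tfrac12$ comes from: each $C_j$ is credited through linking edges on \emph{both} of its sides. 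Your iterative scheme leaves $C_1$ without a second neighbour, which is exactly why the bookkeeping cannot close.
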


Since the length of every $k$-cycle in the maximum weight $k$-cycle packing $\C^*_k$ equals to $k$, we have $l(\C^*_k)=k$. By Lemma~\ref{patch}, we have the following lemma.
\begin{lemma}\label{lb1}
$w(H^*)\geq (1-0.5/k)w(\C^*_k)$.
\end{lemma}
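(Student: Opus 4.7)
The plan is to invoke Lemma~\ref{patch} as a black box with the cycle packing $\C := \C^*_k$, and then compare the resulting Hamiltonian cycle against the optimum Hamiltonian cycle $H^*$. There is essentially no obstacle: the inequality follows immediately from two observations that have already been set up in the text.

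First, I would note that every cycle in $\C^*_k$ is, by definition of a $k$-cycle packing, a cycle on exactly $k$ vertices, so the minimum cycle length satisfies $l(\C^*_k) = k$. Next, I would apply Lemma~\ref{patch} to $\C^*_k$ in the given metric graph $G$; this produces, in polynomial time, a Hamiltonian cycle $H$ with
\[
w(H) \;\geq\; \bigl(1 - 0.5/l(\C^*_k)\bigr)\, w(\C^*_k) \;=\; (1 - 0.5/k)\, w(\C^*_k).
\]
Finally, since $H^*$ is by definition the maximum weight Hamiltonian cycle in $G$, we have $w(H^*) \geq w(H)$, and chaining the two inequalities yields the desired bound $w(H^*) \geq (1 - 0.5/k)\, w(\C^*_k)$.

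The only thing worth remarking on is that the lemma is purely existential (it lower-bounds $w(H^*)$) rather than algorithmic: we are not claiming an efficient way to produce $H^*$, but only using the patched cycle $H$ as a feasible competitor to justify the inequality. No case analysis, no parity distinction, and no use of the triangle inequality beyond what is already embedded in Lemma~\ref{patch} is needed, so the proof should be a single short paragraph.
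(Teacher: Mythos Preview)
Your proposal is correct and follows exactly the paper's own argument: the paper simply notes that $l(\C^*_k)=k$ and then invokes Lemma~\ref{patch}, with the comparison to $H^*$ being implicit. Your write-up just makes that final step $w(H^*)\geq w(H)$ explicit.
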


\begin{theorem}\label{tm1}
Given an $\alpha$-approximation algorithm for metric TSP, for metric $k$CP, Alg.1 is a polynomial-time $\alpha\cdot (1-0.5/k)(1-1/k)$-approximation algorithm.
\end{theorem}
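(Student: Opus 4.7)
The plan is to chain together four inequalities, one per step of the algorithm, plus the auxiliary bound from Lemma~\ref{lb1}. First, by the $\alpha$-approximation guarantee invoked in Step~1, the Hamiltonian cycle $H$ returned satisfies $w(H)\geq \alpha\cdot w(H^*)$. Second, Lemma~\ref{lb1} already gives $w(H^*)\geq (1-0.5/k)\,w(\C^*_k)$, so combining these two inequalities immediately bounds $w(H)$ below by $\alpha(1-0.5/k)\,w(\C^*_k)$.

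Next I would justify the weight bound claimed in Step~2 by a standard averaging (shifting) argument. Write $H=e_0 e_1 \cdots e_{n-1}$ and, for each offset $i\in\{0,1,\dots,k-1\}$, consider the $k$-path packing $\P_k^{(i)}$ obtained by deleting the $n/k$ edges $\{e_{i},e_{i+k},e_{i+2k},\dots\}$ (indices mod $n$). Since $n$ is divisible by $k$, every edge of $H$ is deleted in exactly one of the $k$ choices, so $\sum_{i=0}^{k-1}w(\P_k^{(i)})=(k-1)\,w(H)$, and hence some offset $i^*$ yields $w(\P_k^{(i^*)})\geq \frac{k-1}{k}w(H)=(1-1/k)\,w(H)$. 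This is exactly the packing produced in Step~2.

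Finally, for Step~3 I would note that completing a $k$-path $v_1 v_2\cdots v_k$ into a $k$-cycle only appends the single edge $v_1 v_k$, whose weight is non-negative; hence the resulting $k$-cycle packing $\C_k$ satisfies $w(\C_k)\geq w(\P_k)$. Chaining all of the above,
\[
w(\C_k)\;\geq\; w(\P_k)\;\geq\; (1-1/k)\,w(H)\;\geq\; \alpha(1-1/k)\,w(H^*)\;\geq\; \alpha(1-0.5/k)(1-1/k)\,w(\C^*_k),
\]
which is the claimed ratio. Polynomial running time is immediate since Steps~1 and~2 invoke a polynomial-time metric TSP approximation and a linear scan over the $k$ offsets, and Step~3 just adds one chord per path.

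There is no real obstacle here: the statement is essentially an accounting of three independent loss factors. The only delicate point is being careful in Step~2 that the averaging argument really yields the factor $1-1/k$ rather than something weaker (one must use that $n/k$ is an integer so the $k$ shifted packings partition $E(H)$ evenly), and in Step~3 that non-negativity of weights—rather than the triangle inequality—is what gives $w(\C_k)\geq w(\P_k)$; metricity of $G$ is only needed to invoke Lemma~\ref{patch} through Lemma~\ref{lb1}.
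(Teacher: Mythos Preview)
Your proof is correct and follows essentially the same approach as the paper: chain $w(\C_k)\geq w(\P_k)\geq (1-1/k)w(H)\geq \alpha(1-1/k)w(H^*)$ and then apply Lemma~\ref{lb1}. You supply more detail than the paper does (the explicit shifting argument for Step~2 and the remark that Step~3 needs only non-negativity), but the structure and the key inequalities are identical.
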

\begin{proof}
By the algorithm, we can easily get 
\[
w(\C_k)\geq w(\P_k)\geq (1-1/k)w(H)\geq\alpha\cdot(1-1/k)w(H^*).
\]
By Lemma~\ref{lb1}, we have $w(\C_k)\geq \alpha\cdot (1-0.5/k)(1-1/k)w(\C^*_k)$. Therefore, the algorithm achieves an approximation ratio of $\alpha\cdot (1-0.5/k)(1-1/k)$ for metric $k$CP.
\end{proof}

Next, we propose an improved $\alpha\cdot (1-0.5/k)(1-1/k+\frac{1}{k(k-1)})$-approximation algorithm for even $k$, denoted by Alg.2. The previous two steps of Alg.2 are the same as Alg.1. However, Alg.2 will obtain a better $k$-cycle packing in Step 3:

\medskip
\noindent\textbf{New Step~3.} For each $k$-path $P_i=v_{i1}v_{i2}\cdots v_{ik}\in\P_k$, we obtain $k-1$ $k$-cycles $\{C_{i1},\dots,C_{i(k-1)}\}$ where $C_{ij}=v_{i1}v_{i2}\cdots v_{ij}v_{ik}v_{i(k-1)}\cdots v_{i(j+1)}v_{i1}$ (See Figure~\ref{fig01} for an illustration); let $C_{ij_i}$ denote the maximum weight cycle from these cycles; return a $k$-cycle packing $\C_k=\{C_{ij_i}\}_{i=1}^{n/k}$.
\medskip
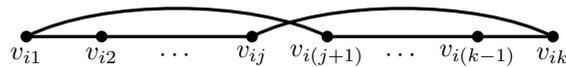
\begin{figure}[ht]
\centering
\begin{tikzpicture}
\filldraw [black]
(1,0) circle [radius=2pt]
(2,0) circle [radius=2pt]
(4,0) circle [radius=2pt]

(5,0) circle [radius=2pt]
(7,0) circle [radius=2pt]
(8,0) circle [radius=2pt];
\node (down) at (1,-0.25) {\small $v_{i1}$};
\node (down) at (2,-0.25) {\small $v_{i2}$};
\node (down) at (3,-0.25) {\small $\cdots$};
\node (down) at (4,-0.25) {\small $v_{ij}$};

\node (down) at (5,-0.25) {\small $v_{i(j+1)}$};
\node (down) at (6,-0.25) {\small $\cdots$};
\node (down) at (7,-0.25) {\small $v_{i(k-1)}$};
\node (down) at (8,-0.25) {\small $v_{ik}$};

\draw[very thick] (1,0) to (4,0);
\draw[very thick] (1,0) ..controls (2,0.5) and (4,0.5).. (5,0);
\draw[very thick] (5,0) to (8,0);
\draw[very thick] (4,0) ..controls (5,0.5) and (7,0.5).. (8,0);
\end{tikzpicture}
\caption{An illustration of the $k$-cycle $C_{ij}$ obtained from $P_i$, where $j\in\{1,2,\dots,k-1\}$}
\label{fig01}
\end{figure}

\begin{lemma}\label{path-cycle}
It holds that $w(\C_k)\geq\frac{k-2}{k-1}w(\P_k)+\frac{2}{k-1}\widetilde{w}(\P_k)$.
\end{lemma}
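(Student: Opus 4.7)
The plan is to expand each $w(C_{ij})$ in terms of $w(P_i)$ plus chord edges, sum over $j$, and then use the triangle inequality to extract $2\widetilde{w}(P_i)$.

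First I would observe, by directly inspecting $C_{ij}=v_{i1}\cdots v_{ij}v_{ik}v_{i(k-1)}\cdots v_{i(j+1)}v_{i1}$, that the cycle uses every edge of $P_i$ except $v_{ij}v_{i(j+1)}$ and adds exactly two chord edges $v_{ij}v_{ik}$ and $v_{i(j+1)}v_{i1}$. Hence
\[
w(C_{ij}) = w(P_i) - w(v_{ij},v_{i(j+1)}) + w(v_{ij},v_{ik}) + w(v_{i(j+1)},v_{i1}).
\]
Summing this identity over $j=1,\dots,k-1$, the path-edge terms telescope to $w(P_i)$, so
\[
\sum_{j=1}^{k-1} w(C_{ij}) = (k-2)\,w(P_i) + S_i, \qquad S_i := \sum_{j=1}^{k} \bigl[w(v_{ij},v_{ik}) + w(v_{ij},v_{i1})\bigr],
\]
where I have extended the chord sum to include $j=k$ in the first sum and $j=1$ in the second (both contribute $0$ since the graph has no self-loops in the weighting sense, $w(v,v)=0$).

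Next I would lower bound $S_i$ using the triangle inequality, which is where metricity is used. For each $l\in\{1,\dots,k/2\}$, the triangle inequality through $v_{ik}$ gives
\[
w(v_{i(2l-1)},v_{ik}) + w(v_{i(2l)},v_{ik}) \ \geq\ w(v_{i(2l-1)},v_{i(2l)}),
\]
and similarly through $v_{i1}$. Summing both inequalities over $l$ pairs up the $k$ vertices of $P_i$ into the $k/2$ consecutive pairs that define $\widetilde{w}(P_i)$, so each bound contributes $\widetilde{w}(P_i)$, yielding $S_i \ge 2\,\widetilde{w}(P_i)$. This is the one nontrivial step, but it falls out cleanly because $k$ is even, which is precisely what makes the pairing exact.

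Finally, since $C_{ij_i}$ is the heaviest among the $k-1$ cycles, it is at least the average,
\[
w(C_{ij_i}) \ \geq\ \frac{1}{k-1}\sum_{j=1}^{k-1} w(C_{ij}) \ \geq\ \frac{k-2}{k-1}\,w(P_i) + \frac{2}{k-1}\,\widetilde{w}(P_i).
\]
Summing over all $n/k$ paths $P_i\in\P_k$ gives the desired inequality $w(\C_k) \geq \frac{k-2}{k-1}w(\P_k) + \frac{2}{k-1}\widetilde{w}(\P_k)$. The main obstacle is identifying the correct pairing that makes the triangle-inequality bound tight enough to produce a factor of $2$ in front of $\widetilde{w}(P_i)$; once the symmetric roles of the two endpoints $v_{i1}$ and $v_{ik}$ as chord anchors are noticed, the rest is bookkeeping.
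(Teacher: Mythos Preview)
Your proof is correct and follows essentially the same approach as the paper: expand $w(C_{ij})$, average over $j$, and bound each chord sum by $\widetilde{w}(P_i)$ via pairing consecutive vertices with the triangle inequality. The only cosmetic difference is that you extend both chord sums to run over $j=1,\dots,k$ using the convention $w(v,v)=0$ and apply the triangle inequality uniformly, whereas the paper keeps the boundary terms $w(v_{i1},v_{i2})$ and $w(v_{i(k-1)},v_{ik})$ exact and applies the triangle inequality only to the remaining pairs; both routes yield the same bound.
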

\begin{proof}
Since $C_{ij_i}$ is the maximum weight cycle from the $k-1$ cycles, we have
\begin{align*}
w(C_{ij_i})&\geq \frac{1}{k-1}\sum_{j=1}^{k-1}w(C_{ij})\\
&= \frac{1}{k-1}\sum_{j=1}^{k-1}(w(P_i)+w(v_{i1},v_{i(j+1)})+w(v_{ij},v_{ik})-w(v_{ij},v_{i(j+1)}))\\
&= \frac{1}{k-1}\lrA{(k-1)w(P_i)+\sum_{j=1}^{k-1}(w(v_{i1},v_{i(j+1)})+w(v_{ij},v_{ik}))-w(P_i)}\\
&= \frac{1}{k-1}\lrA{(k-2)w(P_i)+\sum_{j=1}^{k-1}(w(v_{i1},v_{i(j+1)})+w(v_{ij},v_{ik}))}.\\
\end{align*}
By the triangle inequality, we can get that
\begin{align*}
\sum_{j=1}^{k-1}w(v_{i1},v_{i(j+1)})=&\ w(v_{i1},v_{i2})+\sum_{j=2}^{k/2}(w(v_{i1},v_{i(2j-1)})+w(v_{i1},v_{i(2j)}))\\
\geq&\ w(v_{i1},v_{i2})+\sum_{j=2}^{k/2}w(v_{i(2j-1)},v_{i(2j)})\\
=&\ \sum_{j=1}^{k/2}w(v_{i(2j-1)},v_{i(2j)})
=\widetilde{w}(P_i).
\end{align*} 
Similarly, we can get $\sum_{j=1}^{k-1}w(v_{ij},v_{ik})\geq\widetilde{w}(P_i)$. Hence, we can get 
\begin{align*}
w(C_{ij_i})&\geq\frac{1}{k-1}\lrA{(k-2)w(P_i)+\sum_{j=1}^{k-1}(w(v_{i1},v_{i(j+1)})+w(v_{ij},v_{ik}))}\\
&\geq\frac{(k-2)w(P_i)+2\widetilde{w}(P_i)}{k-1}.    
\end{align*}
By doing this for all $k$-paths in $\P_k$, we can get a $k$-cycle packing $\C_k$ such that $w(\C_k)\geq \frac{(k-2)w(\P_k)+2\widetilde{w}(\P_k)}{k-1}$.
\end{proof}

\begin{theorem}\label{tm2}
Given an $\alpha$-approximation algorithm for metric TSP, for metric $k$CP with even $k$, Alg.2 is a polynomial-time $\alpha\cdot (1-0.5/k)(1-1/k+\frac{1}{k(k-1)})$-approximation algorithm.
\end{theorem}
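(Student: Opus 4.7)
The plan is to reuse the offset-averaging idea that justified Step~2 of Alg.1, but now applied to the stronger linear combination appearing in Lemma~\ref{path-cycle} rather than to $w(\P_k)$ alone. Steps~1 and~2 are unchanged, so $w(H)\geq\alpha\,w(H^*)$, and Step~2 has $k$ natural choices for which edge of $H$ to delete first; call the resulting packings $\P_k^{(s)}$ for $s\in\{1,\dots,k\}$. Applying Lemma~\ref{path-cycle} to any of them gives
\[
w(\C_k^{(s)})\geq \frac{k-2}{k-1}w(\P_k^{(s)})+\frac{2}{k-1}\widetilde{w}(\P_k^{(s)}),
\]
so I would choose the offset that maximizes the right-hand side and then lower-bound that maximum by the average over $s$.

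The averaging of $w(\P_k^{(s)})$ is exactly the calculation already used in Alg.1: each edge of $H$ is removed in precisely one of the $k$ offsets, so $\frac{1}{k}\sum_{s}w(\P_k^{(s)})=(1-1/k)w(H)$. The new ingredient is the average of $\widetilde{w}(\P_k^{(s)})$. Here I would track, for a fixed edge $e_j$ of $H$, its position inside the path that contains it as $s$ varies. As $s$ ranges over $k$ consecutive residues, $(j-s)\bmod k$ sweeps through all of $\{0,1,\dots,k-1\}$: the edge is the one deleted when the residue is $0$, and it sits at an odd intra-path position (hence contributes to $\widetilde{w}$) exactly when the residue lies in the $k/2$ odd values $\{1,3,\dots,k-1\}$. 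Since $k$ is even, this count is exactly $k/2$ and no odd residue collides with the deletion index $0$. Hence $\sum_{s}\widetilde{w}(\P_k^{(s)})=\frac{k}{2}w(H)$, giving average $w(H)/2$.

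Plugging both averages into the Lemma~\ref{path-cycle} bound, the averaged quantity is
\[
\frac{k-2}{k-1}\cdot\frac{k-1}{k}w(H)+\frac{2}{k-1}\cdot\frac{1}{2}w(H)=\frac{k-2}{k}w(H)+\frac{1}{k-1}w(H)=\Big(1-\frac{1}{k}+\frac{1}{k(k-1)}\Big)w(H),
\]
so the best offset delivers $w(\C_k)\geq(1-1/k+\tfrac{1}{k(k-1)})w(H)$. Chaining $w(H)\geq\alpha\,w(H^*)$ with Lemma~\ref{lb1} then yields the claimed ratio $\alpha(1-0.5/k)(1-1/k+\tfrac{1}{k(k-1)})$, and every step runs in polynomial time because there are only $k$ offsets to try.

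The one place where care is needed, and the only real obstacle, is the parity count for $\widetilde{w}$: one must verify that when $k$ is even, odd intra-path positions never coincide with the deleted position as $s$ varies, so the $k/2$ odd positions are attained without loss. Once this is checked, the rest is the same averaging template already used in Alg.1, combined with the two earlier lemmas.
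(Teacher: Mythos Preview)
Your proposal is correct and follows essentially the same approach as the paper. The only cosmetic difference is that the paper averages the combined quantity $(k-2)w(\P_k^{(s)})+2\widetilde{w}(\P_k^{(s)})$ directly by counting that it comprises $\frac{n(k-2)(k-1)+nk}{k}$ multi-edges of $H$, whereas you average $w(\P_k^{(s)})$ and $\widetilde{w}(\P_k^{(s)})$ separately and then recombine; both routes yield $\frac{(k-1)^2+1}{k(k-1)}w(H)$ and the remaining chaining with Lemma~\ref{lb1} is identical.
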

\begin{proof}
Recall that all $k$-paths in $\P_k$ are obtained from the $\alpha$-approximate Hamiltonian cycle $H$. By deleting one edge per $k$ edges from a Hamiltonian cycle $H$ and choosing the initial edge carefully, we can get a $k$-path packing $\P_k$ such that 
\[
(k-2)w(\P_k)+2\widetilde{w}(\P_k)\geq\frac{(k-2)(k-1)+k}{k}w(H)=\frac{(k-1)^2+1}{k}w(H)
\]
since $(k-2)w(\P_k)+2\widetilde{w}(\P_k)$ contains the weight of $\frac{n(k-2)(k-1)+nk}{k}$ (multi-)edges in $H$. By Lemma~\ref{path-cycle}, we can obtain a $k$-cycle packing $\C_k$ such that
\begin{align*}
w(\C_k)&\geq\frac{(k-2)w(\P_k)+2\widetilde{w}(\P_k)}{k-1}\\
&\geq\frac{(k-1)^2+1}{k(k-1)}w(H)=\lrA{1-1/k+\frac{1}{k(k-1)}}w(H). 
\end{align*}
Since $w(H)\geq \alpha\cdot w(H^*)\geq \alpha \cdot (1-0.5/k)w(\C^*_k)$ by Lemma~\ref{lb1}, we have $w(\C_k)\geq \alpha \cdot (1-0.5/k)(1-1/k+\frac{1}{k(k-1)})w(\C^*_k)$, and the approximation ratio is $\alpha \cdot (1-0.5/k)(1-1/k+\frac{1}{k(k-1)})$.
\end{proof}

Note that for metric TSP there is a randomized $(7/8-O(1/\sqrt{n}))$-approximation algorithm~\cite{DBLP:journals/ipl/HassinR02}, a deterministic $(7/8-O(1/\sqrt[3]{n}))$-approximation algorithm~\cite{DBLP:journals/jco/ChenN07}, and a deterministic $7/8$-approximation algorithm~\cite{DBLP:journals/tcs/KowalikM09}. By Theorem~\ref{tm2}, we obtain an approximation ratio of $7/8\cdot (1-0.5/k)(1-1/k)$ for metric $k$CP with odd $k$, and $7/8\cdot (1-0.5/k)(1-1/k+\frac{1}{k(k-1)})$ for metric $k$CP with even $k$.

\subsection{A Further Improvement}\label{sec3.2}
In this subsection, we show that the approximation ratio of Alg.2 can be further improved based on the properties of the current best $7/8$-approximation algorithm for metric TSP~\cite{DBLP:journals/tcs/KowalikM09}. We recall the following result.

\begin{lemma}[\cite{DBLP:journals/tcs/KowalikM09}]\label{7/8}
Let $G$ be a metric graph with even $n$. There is a polynomial-time algorithm to generate a Hamiltonian cycle $H$ in $G$ such that
\[
w(H)\geq \frac{5}{8}w(\C^*)+\frac{1}{2}w(\M^*).
\]
\end{lemma}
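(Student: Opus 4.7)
The plan is to compute, in polynomial time, both a maximum weight cycle cover $\C^*$ (with every cycle of length at least $3$) and the maximum weight perfect matching $\M^*$, and then combine them into a single Hamiltonian cycle $H$ that retains the two claimed fractions of each. The governing idea is that path patching (Lemma~\ref{patch}) loses only a $1-1/8$ factor once the minimum cycle length of the packing is at least $4$, so the proof reduces to producing a \emph{triangle-free} cycle cover $\C'$ satisfying
\[
w(\C')\;\geq\;\tfrac{5}{7}\,w(\C^*)+\tfrac{4}{7}\,w(\M^*).
\]
Lemma~\ref{patch} applied to $\C'$ then yields $w(H)\geq \tfrac{7}{8}w(\C')\geq \tfrac{5}{8}w(\C^*)+\tfrac{1}{2}w(\M^*)$, which is exactly the claim.

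The core step is therefore the triangle-elimination procedure that constructs $\C'$ from $\C^*$ with help from $\M^*$. For each triangle $T$ of $\C^*$, at least one matching edge $e_T\in\M^*$ is incident to $T$ (since each vertex has matching degree $1$); the idea is to use $e_T$ together with a \emph{lightest} edge of $T$ to reroute $T$ into a strictly longer cycle, removing the dropped triangle edge and stitching in $e_T$ with one or two metric shortcuts whose weights are controlled by the triangle inequality on $G$. On the weight side each such surgery forfeits at most $\tfrac{1}{3}w(T)$ from the triangle but inserts $w(e_T)$ from $\M^*$ into $\C'$, while non-triangle cycles of $\C^*$ can be kept intact and contribute their full weight. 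Averaging the per-triangle gains and losses over all triangles of $\C^*$, and charging unused matching edges to the zero side of the ledger, should give exactly the inequality above.

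The main obstacle is the global bookkeeping of the triangle-elimination step in the adversarial case where $\C^*$ is (close to) a disjoint union of triangles and $\M^*$ lives largely inside those triangles. Several triangles may then compete for the same external matching edge, and a naive local rule would either double-use an edge or fail to break some triangle. Resolving this requires a consistent assignment argument: for instance, building an auxiliary bipartite graph between triangles and matching edges incident to them, and appealing to a Hall-type or max-flow argument to pick one anchor edge per triangle. A secondary subtlety is showing that the metric shortcuts used to close the rerouted cycles do not cancel the weight gain from the inserted matching edge; this is the place where the triangle inequality on $G$ is genuinely used. Once the assignment and the local surgery are proved weight-feasible in every case, the entire algorithm runs in polynomial time and the target bound follows by combining it with Lemma~\ref{patch}.
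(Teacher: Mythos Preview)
This lemma is \emph{not} proved in the paper; it is quoted from Kowalik--Mucha and used as a black box, so there is no ``paper's own proof'' to compare against. Your proposal is therefore an attempt to reprove an external result, and the route you sketch is genuinely different from the original argument (which does not pass through an intermediate triangle-free cycle cover but instead builds the Hamiltonian cycle directly by a careful colouring/merging of the cycle cover with the matching).

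More importantly, your key intermediate target is false. You want a cycle packing $\C'$ with $l(\C')\geq 4$ and
\[
w(\C')\;\geq\;\tfrac{5}{7}\,w(\C^*)+\tfrac{4}{7}\,w(\M^*),
\]
so that Lemma~\ref{patch} with factor $7/8$ finishes the job. But take $n=6$, split the vertices into two triangles $T_1,T_2$, give every intra-triangle edge weight $2$ and every cross edge weight $1$ (this is metric). Then $w(\C^*)=12$ (the two triangles) and $w(\M^*)=5$ (at most one weight-$2$ edge from each triangle plus one cross edge). Your target is $\tfrac{5}{7}\cdot 12+\tfrac{4}{7}\cdot 5=\tfrac{80}{7}>11$. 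However, on six vertices the only cycle packing with minimum cycle length at least $4$ is a single $6$-cycle, and any $6$-cycle here has weight at most $10$ (at most four intra-triangle edges). So no $\C'$ with the required weight exists, and the reduction ``eliminate triangles, then patch with factor $7/8$'' cannot yield the stated bound in general.

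The gap is structural, not just a matter of bookkeeping: you are forcing all the loss into the patching step via the uniform factor $1-\tfrac{1}{2l(\C')}$, which is too coarse. In the Kowalik--Mucha argument the cycle cover and the matching are interleaved from the start, and the $\tfrac{5}{8}$ and $\tfrac{1}{2}$ coefficients emerge from a joint analysis rather than from composing two independent inequalities. If you want to recover a self-contained proof, you will need to work with $\C^*$ and $\M^*$ simultaneously when forming $H$, not sequentially.
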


For any $k$-cycle packing with $k$ being even or Hamiltonian cycle with an even number of vertices, the edges can be decomposed into two edge-disjoint matchings of size $n/2$. We can get the following bounds.

\begin{lemma}\label{lb2}
It holds that $w(\M^*)\geq \frac{1}{2}w(\C^*_k)$ for even $k$ and $w(\M^*)\geq \frac{1}{2}w(H^*)$ for even $n$.
\end{lemma}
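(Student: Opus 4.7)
The plan is to exploit the standard observation that an even-length cycle has a $2$-edge-coloring in which each color class is a perfect matching on the cycle's vertex set. I would first verify this for a single even-length cycle $C = v_1 v_2 \cdots v_{2m} v_1$: label its edges alternately red and blue, so the red edges $\{v_1 v_2, v_3 v_4, \dots, v_{2m-1} v_{2m}\}$ and the blue edges $\{v_2 v_3, v_4 v_5, \dots, v_{2m} v_1\}$ each form a perfect matching on $V(C)$, and their union is $E(C)$, so $w(\text{red}) + w(\text{blue}) = w(C)$.

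For the first bound, I would apply this decomposition to every cycle $C \in \C^*_k$ (which is legal since $k$ is even), obtaining a red matching $M_R$ and a blue matching $M_B$ on the whole vertex set $V$: both are perfect matchings of size $n/2$ because $\C^*_k$ covers all vertices by disjoint even cycles. Since $w(M_R) + w(M_B) = w(\C^*_k)$, at least one of them has weight at least $\tfrac{1}{2} w(\C^*_k)$. As $\M^*$ is the maximum weight matching of size $n/2$, we conclude $w(\M^*) \geq \tfrac{1}{2} w(\C^*_k)$.

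For the second bound, I would apply the same alternating $2$-coloring to the Hamiltonian cycle $H^*$; this is well-defined precisely because $n$ is even. Again the two color classes are perfect matchings of size $n/2$ whose weights sum to $w(H^*)$, so the heavier one, and hence $\M^*$, has weight at least $\tfrac{1}{2} w(H^*)$.

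There is no real obstacle here: the only thing to watch is the parity hypothesis, which is exactly what guarantees the alternating coloring produces matchings rather than near-matchings. Both statements follow from the same one-line decomposition argument, applied once cycle-by-cycle (to $\C^*_k$) and once to the single cycle $H^*$.
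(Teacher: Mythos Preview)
Your proposal is correct and matches the paper's approach exactly: the paper states the lemma immediately after the one-line observation that an even-length $k$-cycle packing (or a Hamiltonian cycle on an even number of vertices) decomposes into two edge-disjoint matchings of size $n/2$, and leaves the averaging step implicit. Your argument is simply a fleshed-out version of that same decomposition-plus-averaging idea.
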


Note that for metric $k$CP with even $k$, the number of vertices is always even since it satisfies $n\bmod k=0$. But for odd $k$, the number may be odd, and then there may not exist a matching of size $n/2$. Since we mainly consider the improvements for constant $k$, for the case of odd $k$ and $n$, we can first use $n^{O(k)}=n^{O(1)}$ time to enumerate a $k$-cycle in $\C^*_k$, and then consider an approximate $k$-cycle packing in the rest graph. The approximation ratio preserves. Hence, we assume that $n$ is even for the case of constant $k$. 

\begin{theorem}\label{tm3}
For metric $k$CP, there is a polynomial-time $(7/8-0.125/k)(1-1/k)$-approximation algorithm for constant odd $k$ and a polynomial-time $7/8\cdot (1-1/k+\frac{1}{k(k-1)})$-approximation algorithm for even $k$.
\end{theorem}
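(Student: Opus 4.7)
The plan is to combine the structural bound from the current best $7/8$-approximation algorithm for metric TSP (Lemma~\ref{7/8}) with the matching lower bounds (Lemma~\ref{lb2}), and then feed the resulting Hamiltonian cycle into Alg.1 (for odd $k$) or Alg.2 (for even $k$). The key observation is that instead of passing through $w(H) \ge \frac{7}{8} w(H^*)$ and then invoking Lemma~\ref{lb1} (which costs a factor $(1-0.5/k)$), we can exploit the finer decomposition $w(H) \ge \frac{5}{8}w(\C^*) + \frac{1}{2}w(\M^*)$ and bound $w(\M^*)$ directly against $w(\C^*_k)$, saving most of the patching penalty.

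For even $k$, I would run Alg.2 using the algorithm of Lemma~\ref{7/8} (note that $n$ is even because $n\bmod k = 0$). Since a $k$-cycle packing is a cycle packing, $w(\C^*) \ge w(\C^*_k)$, and since $k$ is even Lemma~\ref{lb2} gives $w(\M^*) \ge \tfrac{1}{2}w(\C^*_k)$. Substituting into Lemma~\ref{7/8} yields
\[
w(H) \;\ge\; \tfrac{5}{8}w(\C^*_k) + \tfrac{1}{2}\cdot\tfrac{1}{2}w(\C^*_k) \;=\; \tfrac{7}{8}w(\C^*_k).
\]
Then the analysis in the proof of Theorem~\ref{tm2} (derived from Lemma~\ref{path-cycle}) shows that Alg.2 turns $H$ into a $k$-cycle packing $\C_k$ with $w(\C_k) \ge \bigl(1 - 1/k + \tfrac{1}{k(k-1)}\bigr) w(H)$, giving the claimed ratio $\tfrac{7}{8}\bigl(1 - 1/k + \tfrac{1}{k(k-1)}\bigr)$.

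For odd $k$, I would first reduce to the case of even $n$ by the enumeration trick mentioned just before the theorem: if $n$ is odd, guess in $n^{O(k)}$ time which $k$-cycle of $\C^*_k$ to remove, run the algorithm on the remainder, and keep the best outcome; the approximation ratio is preserved. With $n$ even, Lemma~\ref{lb2} gives $w(\M^*) \ge \tfrac{1}{2}w(H^*)$, and Lemma~\ref{lb1} gives $w(H^*) \ge (1 - 0.5/k)w(\C^*_k)$, so $w(\M^*) \ge \tfrac{2k-1}{4k}w(\C^*_k)$. Plugging into Lemma~\ref{7/8} and simplifying,
\[
w(H) \;\ge\; \tfrac{5}{8}w(\C^*_k) + \tfrac{1}{2}\cdot\tfrac{2k-1}{4k}w(\C^*_k) \;=\; \tfrac{7k-1}{8k}w(\C^*_k) \;=\; \bigl(\tfrac{7}{8} - \tfrac{0.125}{k}\bigr) w(\C^*_k).
\]
Feeding $H$ into Alg.1 then yields $w(\C_k) \ge (1-1/k)w(H) \ge \bigl(\tfrac{7}{8} - \tfrac{0.125}{k}\bigr)(1 - 1/k) w(\C^*_k)$, which is exactly the stated odd-$k$ ratio.

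There is no real obstacle; the only subtle point is making sure the odd-$k$/odd-$n$ parity issue is handled cleanly, and that we remember to apply Lemma~\ref{7/8} with $\C^*$ bounded below by $\C^*_k$ rather than the weaker $H^*$. The improvement over Theorems~\ref{tm1} and \ref{tm2} comes entirely from using $w(\M^*)$ as a separate slot in the TSP bound, which absorbs all of the patching loss when $k$ is even and most of it when $k$ is odd.
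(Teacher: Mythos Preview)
Your proposal is correct and follows essentially the same argument as the paper's proof: apply Lemma~\ref{7/8} together with Lemma~\ref{lb2} (using $w(\M^*)\ge\tfrac{1}{2}w(\C^*_k)$ for even $k$, and $w(\M^*)\ge\tfrac{1}{2}w(H^*)\ge\tfrac{1}{2}(1-0.5/k)w(\C^*_k)$ for odd $k$ after reducing to even $n$), then feed the resulting Hamiltonian cycle into Alg.1 or Alg.2 respectively. The handling of the odd-$k$/odd-$n$ parity via enumeration and the use of $w(\C^*)\ge w(\C^*_k)$ are exactly as in the paper.
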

\begin{proof}
Consider constant odd $k$. Since we assume that the number of vertices is even, by Lemmas~\ref{7/8} and \ref{lb2}, we can get a Hamiltonian cycle $H$ such that
\[
w(H)\geq(5/8)w(\C^*)+(1/2)w(\M^*)\geq(5/8)w(\C^*)+(1/4)w(H^*).
\]
Recall that $w(H^*)\geq (1-0.5/k)w(\C^*_k)$ by Lemma~\ref{lb1}, and $w(\C^*)\geq w(\C^*_k)$. We can get that 
\[
w(H)\geq(5/8)w(\C^*_k)+(1/4)(1-0.5/k)w(\C^*_k)=(7/8-0.125/k)w(\C^*_k).
\]
By the proof of Theorem~\ref{tm1}, we can get a $k$-cycle packing $\C_k$ with $w(\C_k)\geq (1-1/k)w(H)$. So, we have $w(\C_k)\geq (7/8-0.125/k)(1-1/k)w(\C^*_k)$, and the approximation ratio is $(7/8-0.125/k)(1-1/k)$ for constant odd $k$.

Consider even $k$. Recall that $w(\C^*)\geq w(\C^*_k)$. Since the number of vertices is even, by Lemmas~\ref{7/8} and \ref{lb2}, we can get a Hamiltonian cycle $H$ such that
\[
w(H)\geq(5/8)w(\C^*)+(1/2)w(\M^*)\geq(5/8)w(\C^*_k)+(1/4)w(\C^*_k)=(7/8)w(\C^*_k).
\]
Similarly, by the proof of Theorem~\ref{tm2}, we can get a $k$-cycle packing $\C_k$ such that $w(\C_k)\geq (1-1/k+\frac{1}{k(k-1)})w(H)$. Therefore, we have $w(\C_k)\geq 7/8\cdot (1-1/k+\frac{1}{k(k-1)})w(\C^*_k)$, and the approximation ratio is $7/8\cdot (1-1/k+\frac{1}{k(k-1)})$ for even $k$.
\end{proof}

\subsection{An Improved Algorithm Based on Matching}\label{sec3.3}
Consider metric $k$CP with odd $k$. By deleting the least weighted edge from every $k$-cycle in $\C^*_k$, we can get a $k$-path packing $\P_k$ with $w(\P_k)\geq (1-1/k)w(\C^*_k)$. Note that $\P_k$ can be decomposed into two edge-disjoint matchings of size $p\coloneqq(n/k)\cdot (k-1)/2$. Let $\M^*_p$ be the maximum weight matching of size $p$, which can be computed in polynomial time~\cite{gabow1974implementation,lawler1976combinatorial}. Then, we can get 
\[
2w(\M^*_p)\geq w(\P_k)\geq (1-1/k)w(\C^*_k).
\]
Note that there are also $n/k$ \emph{isolated vertices} not covered by $\M^*_p$. Next, we construct a $k$-cycle packing using $\M^*_p$ with the isolated vertices. The algorithm, denoted by Alg.3, is shown as follows.

\medskip
\noindent\textbf{Step~1.} Arbitrarily partition the $p$ edges of $\M^*_p$ into $n/k$ sets with the same size, denoted by $\S_1,\S_2,\dots,\S_{n/k}$. Note that each edge set contains $m\coloneqq(k-1)/2$ edges. For each of the $n/k$ edge sets, arbitrarily assign an isolated vertex.

\noindent\textbf{Step~2.} Consider an arbitrary edge set $\S_i=\{e_1,e_2,\dots,e_{m}\}$ with the isolated vertex $v$. Assume w.o.l.g. that $w(e_1)\geq w(e_m)\geq w(e_i)$ for $2\leq i<m$, i.e., $w(e_1)+w(e_m)\geq (2/m)w(\S_i)$. Orient each edge $e_i$ uniformly at random from the two choices. Let $t_i$ (resp., $h_i$) denote the tail (resp., head) vertex of $e_i$. Construct a $k$-cycle $C_i$ such that $C_i=vt_1h_1t_2h_2\cdots t_mh_mv$.

\noindent\textbf{Step~3.} Get a $k$-cycle packing $\C_k$ by packing the $k$-cycles from the edge sets and the isolated vertices.
\medskip

Alg.3 can be derandomized efficiently by the method of conditional expectations~\cite{williamson2011design}.

Next, we analyze the expected weight of $C_i=vt_1h_1t_2h_2\cdots t_mh_mv$, obtained from the edge set $\S_i$ and the isolated vertex $v$.
\begin{lemma}\label{lb3}
It holds that $\EE{w(v,t_1)}\geq\frac{1}{2}w(e_1)$, $\EE{w(v,h_m)}\geq\frac{1}{2}w(e_m)$, and $\EE{w(h_i,t_{i+1})}\geq\frac{1}{4}(w(e_i)+w(e_{i+1}))$ for $1\leq i<m$.
\end{lemma}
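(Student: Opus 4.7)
My plan is to reduce each of the three inequalities to a single application of the triangle inequality after taking expectations. Write $e_i=\{a_i,b_i\}$ for each $i$. The random orientation of $e_i$ makes $(t_i,h_i)$ uniform on $\{(a_i,b_i),(b_i,a_i)\}$, and the orientations of distinct edges are mutually independent, so for any edge of $C_i$ that is not in $\M^*_p$ I will compute its expected weight by uniformly averaging over the relevant two or four cases and then bound the average using the triangle inequality.

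First I would handle $\EEE{w(v,t_1)}$. Since $t_1$ equals $a_1$ or $b_1$ each with probability $1/2$,
\[
\EEE{w(v,t_1)}=\tfrac{1}{2}\bigl(w(v,a_1)+w(v,b_1)\bigr)\geq\tfrac{1}{2}w(a_1,b_1)=\tfrac{1}{2}w(e_1),
\]
where the inequality is the triangle inequality on $\{v,a_1,b_1\}$. The bound $\EEE{w(v,h_m)}\geq\tfrac{1}{2}w(e_m)$ follows by the same argument after replacing $e_1$ with $e_m$ and $t_1$ with $h_m$ (the vertex $h_m$ is uniform on $\{a_m,b_m\}$).

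For the third bound, independence of the orientations of $e_i$ and $e_{i+1}$ makes $(h_i,t_{i+1})$ uniform over the four pairs in $\{a_i,b_i\}\times\{a_{i+1},b_{i+1}\}$, so
\[
\EEE{w(h_i,t_{i+1})}=\tfrac{1}{4}S,\qquad S\coloneqq w(a_i,a_{i+1})+w(a_i,b_{i+1})+w(b_i,a_{i+1})+w(b_i,b_{i+1}).
\]
Grouping the four terms of $S$ as $\bigl(w(a_i,a_{i+1})+w(a_{i+1},b_i)\bigr)+\bigl(w(a_i,b_{i+1})+w(b_{i+1},b_i)\bigr)$ and applying the triangle inequality to $\{a_i,b_i,a_{i+1}\}$ and to $\{a_i,b_i,b_{i+1}\}$ yields $S\geq 2w(e_i)$. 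The symmetric regrouping that fixes an endpoint of $e_i$ in each pair gives the analogous bound $S\geq 2w(e_{i+1})$. Averaging these two inequalities gives $S\geq w(e_i)+w(e_{i+1})$, which is exactly what is required.

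The argument is essentially routine; the only delicate points are to verify the correct pairing of the four terms of $S$ in each of the two triangle-inequality applications, and to notice that averaging (rather than taking the maximum) of the two intermediate bounds $S\geq 2w(e_i)$ and $S\geq 2w(e_{i+1})$ produces the symmetric estimate $S\geq w(e_i)+w(e_{i+1})$ that will later be summable across $i$. I do not anticipate a genuine obstacle.
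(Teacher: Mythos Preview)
Your proof is correct and follows essentially the same approach as the paper: both compute the expectations by averaging over the two (resp.\ four) equiprobable orientation outcomes and then apply the triangle inequality. The only difference is cosmetic---the paper states $S\geq w(e_i)+w(e_{i+1})$ in one line ``by the triangle inequality,'' whereas you spell out the two groupings $S\geq 2w(e_i)$ and $S\geq 2w(e_{i+1})$ and then average, which is in fact the cleaner justification.
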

\begin{proof}
Consider $\EE{w(v,t_1)}$. Since we orient the edge $e_1$ uniformly at random, each vertex of $e_1$ has a probability of $1/2$ being $t_1$. Hence, we can get $\EE{w(v,t_1)}=\frac{1}{2}\sum_{u\in e_1}w(v,u)\geq \frac{1}{2}w(e_1)$ by the triangle inequality. Similarly, we can get $\EE{w(v,h_m)}\geq\frac{1}{2}w(e_m)$.

Consider $\EE{w(h_i,t_{i+1})}$. We can get $\EE{w(h_i,t_{i+1})}=\frac{1}{4}\sum_{u\in e_i}\sum_{w\in e_{i+1}}w(u,w)$. Let $e_i=u'u''$ and $e_{i+1}=o'o''$. By the triangle inequality, we can get that
\begin{align*}
\sum_{u\in e_i}\sum_{w\in e_{i+1}}w(u,w)=&\ w(u',o')+w(u',o'')+w(u'',o')+w(u'',o'')\\
\geq&\ w(o',o'')+w(u',u'')
=\ w(e_i)+w(e_{i+1}).
\end{align*} 
Therefore, $\EE{w(h_i,t_{i+1})}\geq\frac{1}{4}(w(e_i)+w(e_{i+1}))$ for $1\leq i<m$.
\end{proof}

\begin{lemma}\label{lb4}
It holds that $\EE{w(C_i)}\geq\frac{3m+1}{2m}w(\S_i)$.
\end{lemma}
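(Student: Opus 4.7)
The plan is to decompose the weight of $C_i = v t_1 h_1 t_2 h_2 \cdots t_m h_m v$ into three kinds of edges and take expectation term by term. Writing
\[
w(C_i) = w(v,t_1) + w(h_m,v) + \sum_{j=1}^{m} w(e_j) + \sum_{j=1}^{m-1} w(h_j, t_{j+1}),
\]
the middle sum is simply $w(\S_i)$ (a deterministic quantity, independent of the random orientations), while the first two terms and the last sum are precisely the quantities bounded from below in Lemma~\ref{lb3}.

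Next, I would substitute the three estimates of Lemma~\ref{lb3} directly: $\EE{w(v,t_1)}\geq \tfrac12 w(e_1)$, $\EE{w(v,h_m)}\geq \tfrac12 w(e_m)$, and $\EE{w(h_j,t_{j+1})}\geq \tfrac14(w(e_j)+w(e_{j+1}))$ for $1\leq j<m$. Telescoping the last bound gives
\[
\sum_{j=1}^{m-1}\EE{w(h_j,t_{j+1})} \geq \tfrac14\bigl(2w(\S_i) - w(e_1) - w(e_m)\bigr),
\]
so after combining everything the $w(e_1)$ and $w(e_m)$ coefficients collect to $\tfrac14$ each, and we obtain
\[
\EE{w(C_i)} \geq w(\S_i) + \tfrac12 w(\S_i) + \tfrac14\bigl(w(e_1)+w(e_m)\bigr) = \tfrac{3}{2}w(\S_i) + \tfrac14\bigl(w(e_1)+w(e_m)\bigr).
\]

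Finally, I would invoke the assumption from Step~2 of Alg.3 that $e_1$ and $e_m$ are the two heaviest edges of $\S_i$, which gives $w(e_1)+w(e_m)\geq \tfrac{2}{m}w(\S_i)$. Plugging this into the previous display yields $\EE{w(C_i)} \geq \bigl(\tfrac32 + \tfrac{1}{2m}\bigr) w(\S_i) = \tfrac{3m+1}{2m} w(\S_i)$, as claimed. No step looks like a serious obstacle: the computation is essentially a telescoping bookkeeping exercise, and the only care needed is making sure the $w(e_1)$ and $w(e_m)$ terms contributed by the two endpoint-to-$v$ edges compensate the negative $-\tfrac14(w(e_1)+w(e_m))$ contributed by the telescoping, leaving a positive coefficient that can then absorb the ``two heaviest edges'' bound.
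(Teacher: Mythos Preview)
Your proof is correct and follows essentially the same argument as the paper: decompose $w(C_i)$ into the deterministic part $w(\S_i)$, the two $v$-incident edges, and the $m-1$ crossing edges, apply Lemma~\ref{lb3} term by term, telescope to obtain $\tfrac32 w(\S_i)+\tfrac14(w(e_1)+w(e_m))$, and finish with the two-heaviest-edges inequality $w(e_1)+w(e_m)\geq \tfrac{2}{m}w(\S_i)$.
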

\begin{proof}
Note that
\begin{align*}
w(C_i)=&\ w(v,t_1)+w(v,h_m)+\sum_{i=1}^{m-1}(w(t_i,h_i)+w(h_i,t_{i+1}))\\
=&\ w(\S_i)+w(v,t_1)+w(v,h_m)+\sum_{i=1}^{m-1}w(h_i,t_{i+1}).
\end{align*}
We can get that
\begin{align*}
\EE{w(C_i)}\geq&\ w(\S_i)+\frac{1}{2}(w(e_1)+w(e_m))+\frac{1}{4}\sum_{i=1}^{m-1}(w(e_i)+w(e_{i+1}))\\
=&\ w(\S_i)+\frac{1}{2}(w(e_1)+w(e_m))+\frac{1}{2}w(\S_i)-\frac{1}{4}(w(e_1)+w(e_m))\\
=&\ \frac{3}{2}w(\S_i)+\frac{1}{4}(w(e_1)+w(e_m))\\
\geq&\ \lrA{\frac{3}{2}+\frac{1}{2m}}w(\S_i)
=\frac{3m+1}{2m}w(\S_i),
\end{align*}
where the first inequality follows from Lemma~\ref{lb3}, and the second from $w(e_1)+w(e_m)\geq (2/m)w(\S_i)$ by the algorithm.
\end{proof}

\begin{theorem}\label{tm4}
For metric $k$CP with odd $k$, Alg.3 is a polynomial-time $(3/4-0.25/k)$-approximation algorithm.
\end{theorem}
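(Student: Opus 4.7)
The plan is to combine Lemma~\ref{lb4} with the lower bound on $w(\M^*_p)$ established just before Alg.3 and then show the arithmetic collapses to exactly $3/4-0.25/k$.

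First I would sum the per-set expected bound over all $n/k$ edge sets. Since $\S_1,\dots,\S_{n/k}$ partition $\M^*_p$ and the random orientations are made independently on different sets, linearity of expectation together with Lemma~\ref{lb4} gives
\[
\EE{w(\C_k)}=\sum_{i=1}^{n/k}\EE{w(C_i)}\geq \frac{3m+1}{2m}\sum_{i=1}^{n/k}w(\S_i)=\frac{3m+1}{2m}\,w(\M^*_p),
\]
where $m=(k-1)/2$.

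Next I would plug in the matching lower bound derived at the start of the subsection. Deleting the cheapest edge from each $k$-cycle of $\C^*_k$ yields a $k$-path packing $\P_k$ with $w(\P_k)\geq(1-1/k)w(\C^*_k)$. Because $k$ is odd, each $k$-path has an even number of edges, so $\P_k$ splits into two edge-disjoint matchings, each of size $p$; taking the heavier one gives $2w(\M^*_p)\geq w(\P_k)\geq (1-1/k)w(\C^*_k)$, i.e., $w(\M^*_p)\geq \tfrac{k-1}{2k}w(\C^*_k)=\tfrac{m}{k}w(\C^*_k)$. Substituting this into the bound above yields
\[
\EE{w(\C_k)}\geq \frac{3m+1}{2m}\cdot\frac{m}{k}\,w(\C^*_k)=\frac{3m+1}{2k}\,w(\C^*_k)=\frac{3(k-1)/2+1}{2k}\,w(\C^*_k)=\frac{3k-1}{4k}\,w(\C^*_k),
\]
which is exactly $(3/4-0.25/k)\,w(\C^*_k)$.

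The only remaining step is to remove the randomization: since the algorithm only uses $O(n)$ independent random bits (one per edge of $\M^*_p$) and the objective is a sum of terms each depending on a constant number of these bits, the standard method of conditional expectations derandomizes Alg.3 in polynomial time, as already noted in the text. Computing $\M^*_p$ via the polynomial-time maximum weight matching algorithms cited earlier makes the whole procedure polynomial. I do not anticipate a real obstacle here; the proof is essentially a bookkeeping exercise once Lemma~\ref{lb4} and the matching bound are in hand, with the small care point being to verify that $k$ odd makes the number of edges per $k$-path even so that the two-matching decomposition of $\P_k$ is balanced at size $p$.
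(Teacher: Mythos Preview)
Your proposal is correct and mirrors the paper's own proof: it sums Lemma~\ref{lb4} over the $\S_i$, applies the bound $2w(\M^*_p)\geq(1-1/k)w(\C^*_k)$, simplifies $\frac{3m+1}{4m}\cdot\frac{k-1}{k}$ to $3/4-0.25/k$, and invokes conditional expectations for derandomization. The only cosmetic difference is that you rewrite $(1-1/k)/2$ as $m/k$ before multiplying, whereas the paper leaves it as $\frac{3m+1}{4m}(1-1/k)$ and simplifies at the end.
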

\begin{proof}
Recall that $2w(\M^*_p)\geq (1-1/k)w(\C^*_k)$ and $\M^*_p=\bigcup_{i=1}^{n/k}\S_i$. Using a derandomization based on conditional expectations~\cite{williamson2011design}, by Lemma~\ref{lb4}, we can get that 
\[
w(\C_k)\geq \sum_{i=1}^{n/k}\frac{3m+1}{2m}w(\S_i)=\frac{3m+1}{2m}w(\M^*_p)\geq \frac{3m+1}{4m}\lrA{1-\frac{1}{k}}w(\C^*_k).
\]
Since $m=(k-1)/2$, we can get an approximation ratio of $\frac{3m+1}{4m}(1-\frac{1}{k})=3/4-0.25/k$.
\end{proof}

By Theorem~\ref{tm4}, we obtain a $7/10$-approximation algorithm for metric $5$CP, which improves the previous ratio $17/25$ in Theorem~\ref{tm3}, and the ratio $3/5$ in~\cite{li2023cyclepack}. Note that for metric $k$CP with odd $k>5$, the result in Theorem~\ref{tm4} is worse than Theorem~\ref{tm3}.

\begin{corollary}
For metric $5$CP, Alg.3 is a polynomial-time $7/10$-approximation algorithm.
\end{corollary}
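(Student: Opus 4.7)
The plan is an immediate substitution into Theorem~\ref{tm4}. Since $k=5$ is odd, Theorem~\ref{tm4} applies directly and yields an approximation ratio of $3/4 - 0.25/k$. I would simply evaluate this expression at $k=5$:
\[
\frac{3}{4} - \frac{0.25}{5} = \frac{3}{4} - \frac{1}{20} = \frac{15}{20} - \frac{1}{20} = \frac{14}{20} = \frac{7}{10}.
\]
The polynomial running time is inherited from Theorem~\ref{tm4}, which in turn relies on the polynomial-time computability of the maximum weight matching $\M^*_p$ and the derandomization by conditional expectations referenced in the proof of that theorem.

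There is essentially no obstacle here beyond verifying the arithmetic; the corollary is stated separately only to highlight that for $k=5$ the bound $(3/4-0.25/k)=7/10$ of Theorem~\ref{tm4} beats both the ratio $17/25$ coming from Theorem~\ref{tm3} and the previous best ratio $3/5$ of~\cite{li2023cyclepack}. So my ``proof'' would consist of one sentence invoking Theorem~\ref{tm4} with $k=5$ and one line of arithmetic.
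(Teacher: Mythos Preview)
Your proposal is correct and matches the paper's approach exactly: the paper does not even write out a separate proof for this corollary, but simply remarks before stating it that ``By Theorem~\ref{tm4}, we obtain a $7/10$-approximation algorithm for metric $5$CP,'' which is precisely the substitution $3/4-0.25/5=7/10$ you perform.
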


The analysis of our algorithm is tight. Figure~\ref{fig02} shows an example, where there are $n=25$ vertices, the weight of each solid edge is 2, and the weight of each omitted edge is 1. Note that the maximum weight 5-cycle packing $\{u_{(i,1)}u_{(i,2)}u_{(i,3)}u_{(i,4)}u_{(i,5)}u_{(i,1)}\}_{i=1}^{5}$ has a weight of 100. The algorithm may find a maximum weight matching $\M^*_p=\{u_{(i,1)}u_{(i,2)},u_{(i,3)}u_{(i,4)}\}_{i=1}^{5}$ of size $p=(n/k)\cdot (k-1)/2=10$, and get a set of two edges and an isolated vertex $\{u_{(i,1)}u_{(i,2)},u_{(i\bmod 5+1,3)}u_{(i\bmod 5+1,4)}, u_{(i+1\bmod5+1,5)}\}$ for each $1\leq i\leq5$, which can only form a $5$-cycle with a weight of $7$. Then, the weight of obtained 5-cycle packing is $35$, and the approximation ratio is $35/50=7/10$.

\begin{figure}[ht]
\centering
\begin{tikzpicture}[scale=0.8]
\foreach \y in {1,...,5}
{
    \foreach \x in {1,...,5}
    {
        \filldraw [black] (\y,0.75*\x) circle [radius=2pt];
        \draw (\x,0.75*6-0.75*\y+0.3) node{$u_{(\y,\x)}$};
    }
    \draw[very thick] (1,0.75*\y) to (5,0.75*\y);
    \draw[very thick] (1,0.75*\y) ..controls (2,0.75*\y+0.2) and (4,0.75*\y+0.2).. (5,0.75*\y);
}
\end{tikzpicture}
\caption{A tight example of the $7/10$-approximation algorithm for metric 5CP}
\label{fig02}
\end{figure}
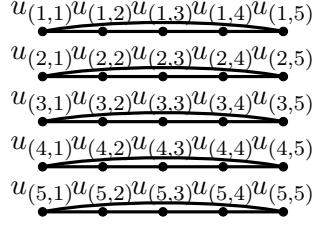

For metric $3$CP, the approximation ratio in Theorem~\ref{tm4} achieves $2/3\approx 0.66666$. However, there exist better results: a randomized $(0.66768-\varepsilon)$-approximation algorithm~\cite{DBLP:journals/jco/ChenCLWZ21} and a deterministic $(0.66835-\varepsilon)$-approximation algorithm~\cite{DBLP:journals/corr/abs-2402-08216}.
It is also worth noting that using a similar framework we can obtain a $3/4$-approximation algorithm for metric $k$CP with even $k$. The ratio is the same as the $3/4$-approximation algorithm for metric $4$CP~\cite{li2023cyclepack}, and worse than the result in Theorem~\ref{tm3} for metric $k$CP with even $k>4$. So, we omit it. In Section~\ref{sec5}, we will obtain an improved $5/6\approx0.833$-approximation algorithm for metric $4$CP. Next, we first consider metric $k$PP.

\section{Approximation Algorithms for Metric $k$PP}\label{sec4}
In this section, we consider metric $k$PP. Using a reduction from metric $k$PP to metric TSP, metric $k$PP admits a $7/8\cdot (1-1/k)$-approximation algorithm~\cite{hassin1997approximation}. Note that unlike metric $k$CP it is not easy to construct a better black box to improve the ratio. However, we will combine the properties of the $7/8$-approximation algorithm for metric TSP with an algorithm based on matching to obtain a better approximation ratio for even $6\leq k\leq 10$. So, in this section, we assume that $k$ is even.

The first algorithm, denoted by Alg.4, is to use the reduction from metric $k$PP to metric TSP~\cite{hassin1997approximation}.

\medskip
\noindent\textbf{Step~1.} Obtain a Hamiltonian cycle $H$ using the $7/8$-approximation algorithm for metric TSP;

\noindent\textbf{Step~2.} Get a $k$-path packing $\P_k$ with $w(\P_k)\geq (1-1/k)w(H)$ from $H$ using the same method in Step~2 of Alg.1.
\medskip

For every $P_i=v_{i1}v_{i2}\cdots v_{ik}\in\P^*_k$, let 
\[
\E_{i}^{'}=\{v_{i(2j-1)}v_{i(2j)}\}_{j=1}^{k/2}\quad \mbox{and}\quad\E_{i}^{''}=\{v_{i(2j)}v_{i(2j+1)}\}_{j=1}^{(k-2)/2}.
\]
Then, we can obtain a matching $\M_{n/2}=\bigcup_{i}\E_{i}^{'}$ of size $n/2$ and a matching $\M_p=\bigcup_{i}\E_{i}^{''}$ of size $p\coloneqq(n/k)\cdot (k-2)/2$. Note that $w(\M_{n/2})+w(\M_p)=w(\P^*_k)$. We have the following bounds.

\begin{lemma}\label{p1}
$w(\C^*_k)\geq\frac{k-2}{k-1}w(\P^*_k)+\frac{2}{k-1}w(\M_{n/2})$.
\end{lemma}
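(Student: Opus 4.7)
The plan is to apply the patching construction from Lemma~\ref{path-cycle} directly to the optimal $k$-path packing $\P^*_k$ and then invoke the optimality of $\C^*_k$. The first observation is that the matching $\M_{n/2}$ is precisely the set of ``odd-indexed'' edges of $\P^*_k$: by definition, $\widetilde{w}(P_i) = \sum_{j=1}^{k/2} w(v_{i(2j-1)},v_{i(2j)}) = w(\E'_i)$, so summing over the paths in $\P^*_k$ yields $\widetilde{w}(\P^*_k) = w(\M_{n/2})$. This identification is the bridge between the quantity $\widetilde{w}$ that appears in Lemma~\ref{path-cycle} and the matching weight $w(\M_{n/2})$ that appears in the statement.

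Next, for every $P_i = v_{i1}v_{i2}\cdots v_{ik} \in \P^*_k$, I would form the $k-1$ candidate cycles
\[
C_{ij} = v_{i1}v_{i2}\cdots v_{ij}v_{ik}v_{i(k-1)}\cdots v_{i(j+1)}v_{i1}, \qquad j=1,\dots,k-1,
\]
exactly as in the New Step~3 of Alg.2, and pick the heaviest one $C_{ij_i}$. The averaging argument in the proof of Lemma~\ref{path-cycle} (which uses the triangle inequality and the evenness of $k$) gives, without any change,
\[
w(C_{ij_i}) \;\geq\; \frac{(k-2)\,w(P_i) + 2\,\widetilde{w}(P_i)}{k-1}.
\]
Collecting one such cycle from each path in $\P^*_k$ yields a $k$-cycle packing $\C_k = \{C_{ij_i}\}_{i=1}^{n/k}$ that covers all vertices.

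Summing the inequality over $i = 1, \dots, n/k$ and using $\widetilde{w}(\P^*_k) = w(\M_{n/2})$ gives
\[
w(\C_k) \;\geq\; \frac{(k-2)\,w(\P^*_k) + 2\,\widetilde{w}(\P^*_k)}{k-1} \;=\; \frac{k-2}{k-1} w(\P^*_k) + \frac{2}{k-1} w(\M_{n/2}).
\]
Finally, since $\C^*_k$ is by definition the maximum weight $k$-cycle packing, $w(\C^*_k) \geq w(\C_k)$, which is the desired inequality.

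There is no real obstacle here: once one recognizes that $\widetilde{w}(\P^*_k)$ coincides with $w(\M_{n/2})$ under the chosen indexing convention, the bound is an immediate corollary of Lemma~\ref{path-cycle} applied to $\P^*_k$. The only points to be careful about are that $k$ must be even (so that $\widetilde{w}$ is defined on each $P_i$ and the odd-indexed edges form a perfect matching of size $n/2$), and that the triangle-inequality step inside Lemma~\ref{path-cycle} is legitimate because we are working in a metric graph.
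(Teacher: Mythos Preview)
Your proof is correct and follows essentially the same approach as the paper: identify $\widetilde{w}(\P^*_k)=w(\M_{n/2})$, apply the construction and bound of Lemma~\ref{path-cycle} to $\P^*_k$, and then invoke the optimality of $\C^*_k$. The paper's argument is identical, only slightly more terse in that it cites ``the proof of Lemma~\ref{path-cycle}'' rather than re-deriving the per-path inequality.
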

\begin{proof}
For every $P_i=v_{i1}v_{i2}\cdots v_{ik}\in\P^*_k$, it holds that $\widetilde{w}(P_i)=\sum_{j=1}^{k/2}w(v_{i(2j-1)},v_{i(2j)})=w(\E_{i}^{'})$. Hence, $\widetilde{w}(\P^*_k)=w(\M_{n/2})$. By the proof of Lemma~\ref{path-cycle}, we can get a $k$-cycle packing $\C_k$ from $\P^*_k$ such that 
\[
w(\C_k)\geq\frac{k-2}{k-1}w(\P^*_k)+\frac{2}{k-1}\widetilde{w}(\P^*_k)=\frac{k-2}{k-1}w(\P^*_k)+\frac{2}{k-1}w(\M_{n/2}).
\]
Since $\C^*_k$ is the maximum weight $k$-cycle packing, we have $w(\C^*_k)\geq w(\C_k)$.
\end{proof}

\begin{lemma}\label{p2}
$w(\P_k)\geq\frac{5k-10}{8k}w(\P^*_k)+\frac{2k+3}{4k}w(\M_{n/2})$.
\end{lemma}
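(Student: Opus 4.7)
The plan is to chain together the three bounds already at our disposal: the structural inequality in Lemma~\ref{7/8} for the $7/8$-approximate Hamiltonian cycle, the cycle-from-paths lower bound of Lemma~\ref{p1}, and the path-from-cycle bound produced by Step~2 of Alg.4. Since $\M_{n/2}$ is a matching of size $n/2$, we automatically have $w(\M^*)\geq w(\M_{n/2})$; this trivial observation is what lets us feed the $\M_{n/2}$ term of Lemma~\ref{p1} into the $w(\M^*)$ slot of Lemma~\ref{7/8}.

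Concretely, I would first apply Lemma~\ref{7/8} to the cycle $H$ produced by Step~1, giving
\[
w(H)\;\geq\;\tfrac{5}{8}w(\C^*)+\tfrac{1}{2}w(\M^*).
\]
Then I would lower-bound $w(\C^*)\geq w(\C^*_k)$ via Lemma~\ref{p1} and replace $w(\M^*)$ by $w(\M_{n/2})$, yielding
\[
w(H)\;\geq\;\tfrac{5}{8}\!\left(\tfrac{k-2}{k-1}w(\P^*_k)+\tfrac{2}{k-1}w(\M_{n/2})\right)+\tfrac{1}{2}w(\M_{n/2}).
\]
Finally, using the Step~2 guarantee $w(\P_k)\geq\tfrac{k-1}{k}w(H)$ and simplifying, the factor $\tfrac{k-1}{k}$ cancels the $\tfrac{1}{k-1}$ appearing in both terms, which is exactly what makes the arithmetic collapse to the claimed coefficients $\tfrac{5k-10}{8k}$ and $\tfrac{2k+3}{4k}$.

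The proof is essentially a one-shot calculation; there is no genuine obstacle, only the bookkeeping of combining the three inequalities. The only point worth stating carefully is the substitution $w(\M^*)\geq w(\M_{n/2})$, which is legitimate because $\M_{n/2}=\bigcup_i\E_i^{'}$ is by construction a matching of size $n/2$, and $\M^*$ is the maximum weight such matching. Once that is noted, the remaining work is the short algebraic simplification sketched above.
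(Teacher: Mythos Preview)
Your proposal is correct and follows essentially the same route as the paper: both proofs chain Lemma~\ref{7/8}, the trivial bounds $w(\C^*)\geq w(\C^*_k)$ and $w(\M^*)\geq w(\M_{n/2})$, Lemma~\ref{p1}, and the Step~2 guarantee $w(\P_k)\geq\frac{k-1}{k}w(H)$, then simplify. The only cosmetic remark is that $w(\C^*)\geq w(\C^*_k)$ is not itself ``via Lemma~\ref{p1}'' but simply the optimality of $\C^*$; Lemma~\ref{p1} is what you then apply to $w(\C^*_k)$, exactly as your displayed inequality shows.
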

\begin{proof}
By Lemma~\ref{7/8}, we can get $w(H)\geq \frac{5}{8}w(\C^*)+\frac{1}{2}w(\M^*)$. Since $w(\P_k)\geq \frac{k-1}{k}w(H)$, $w(\C^*)\geq w(\C^*_k)$, and $w(\M^*)\geq w(\M_{n/2})$, we have 
\begin{align*}
w(\P_k)\geq&\ \frac{k-1}{k}\lrA{\frac{5}{8}w(\C^*_k)+\frac{1}{2}w(\M_{n/2})}\\
\geq&\ \frac{k-1}{k}\lrA{\frac{5}{8}\lrA{\frac{k-2}{k-1}w(\P^*_k)+\frac{2}{k-1}w(\M_{n/2})}+\frac{1}{2}w(\M_{n/2})}\\
=&\ \frac{5k-10}{8k}w(\P^*_k)+\frac{2k+3}{4k}w(\M_{n/2}),
\end{align*}
where the second inequality follows from Lemma~\ref{p1}.
\end{proof}

Next, we propose an algorithm, denoted by Alg.5, to obtain another $k$-path packing $\P'_k$, which can be used to make a trade-off with $\P_k$. The framework of Alg.5 is similar to Alg.3 in Section~\ref{sec3.3}. Let $\M^*_p$ denote the maximum weight matching of size $p=(n/k)\cdot (k-2)/2$, which can be computed in polynomial time~\cite{gabow1974implementation,lawler1976combinatorial}. Note that $w(\M^*_p)\geq w(\M_p)$. There are also $2n/k$ \emph{isolated vertices} not covered by $\M^*_p$. Next, we construct a $k$-path packing using $\M^*_p$ with isolated vertices. Alg.5 mainly contains three steps.

\medskip
\noindent\textbf{Step~1.} Arbitrarily partition the $p$ edges of $\M^*_p$ into $n/k$ sets with the same size, denoted by $\S_1,\S_2,\dots,\S_{n/k}$. Note that each edge set contains $m\coloneqq(k-2)/2$ edges. For each of the $n/k$ edge sets, arbitrarily assign two isolated vertices.

\noindent\textbf{Step~2.} Consider an arbitrary edge set $\S_i=\{e_1,e_2,\dots,e_{m}\}$ with the two isolated vertices $u$ and $v$. Assume w.o.l.g. that $w(e_1)\geq w(e_m)\geq w(e_i)$ for $2\leq i<m$, i.e., $w(e_1)+w(e_m)\geq (2/m)w(\S_i)$. Orient each edge $e_i$ uniformly at random from the two choices. Let $t_i$ (resp., $h_i$) denote the tail (resp., head) vertex of $e_i$. Construct a $k$-path $P'_i$ such that $P'_i=ut_1h_1t_2h_2\cdots t_mh_mv$.

\noindent\textbf{Step~3.} Get a $k$-path packing $\P'_k$ by packing the $k$-paths from the edge sets and the isolated vertices.
\medskip

Next, we analyze the expected weight of $P'_i=ut_1h_1t_2h_2\cdots t_mh_mv$, obtained from the edge set $\S_i$ and the two isolated vertices $u$ and $v$.
\begin{lemma}\label{p3}
It holds that $\EE{w(u,t_1)}\geq\frac{1}{2}w(e_1)$, $\EE{w(v,h_m)}\geq\frac{1}{2}w(e_m)$, and $\EE{w(h_i,t_{i+1})}\geq\frac{1}{4}(w(e_i)+w(e_{i+1}))$ for $1\leq i<m$.
\end{lemma}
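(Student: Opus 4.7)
The plan is to observe that Alg.5 performs exactly the same random orientation as Alg.3 (each edge in $\S_i$ is oriented uniformly and independently), so the computation of the three expected weights is essentially identical to the one in Lemma~\ref{lb3}. The only conceptual change is cosmetic: at the two ends of the constructed object we now have two isolated vertices $u$ and $v$ instead of a single vertex $v$ wrapping around a cycle. Since the $k$-path $P'_i$ does not close up, the endpoint bounds are estimated independently against $u$ and against $v$, but each one is proved by the very same one-line argument as in Lemma~\ref{lb3}.

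First, for $\EE{w(u,t_1)}$: since $e_1$ is oriented uniformly at random, each of its two endpoints plays the role of $t_1$ with probability $1/2$, so $\EE{w(u,t_1)} = \frac{1}{2}\sum_{x \in e_1} w(u,x)$. One application of the triangle inequality on the two endpoints of $e_1$ together with $u$ gives $\sum_{x \in e_1} w(u,x) \geq w(e_1)$, and this yields the claimed $\frac{1}{2}w(e_1)$ lower bound. The symmetric argument, using $v$ and $e_m$, gives $\EE{w(v,h_m)} \geq \frac{1}{2} w(e_m)$.

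Next, for $\EE{w(h_i,t_{i+1})}$ with $1 \leq i < m$: the orientations of $e_i$ and $e_{i+1}$ are independent, so every pairing of an endpoint of $e_i$ with an endpoint of $e_{i+1}$ is realized with probability $1/4$. Writing $e_i = u'u''$ and $e_{i+1} = o'o''$, this gives $\EE{w(h_i,t_{i+1})} = \frac{1}{4}\sum_{x \in e_i, y \in e_{i+1}} w(x,y)$. The same double-application of the triangle inequality used in Lemma~\ref{lb3} — bounding the four-term sum by both $2w(e_i)$ and $2w(e_{i+1})$ via intermediate vertices in the other edge — shows the sum is at least $w(e_i) + w(e_{i+1})$, giving the claimed $\frac{1}{4}(w(e_i)+w(e_{i+1}))$ lower bound.

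Because the random experiment and the triangle-inequality estimates are literally reused, no genuine obstacle arises; the statement is really a ``re-packaging'' of Lemma~\ref{lb3} for the path setting. This is why the lemma can be stated almost verbatim, and the proof can essentially be a reference back to the earlier calculation with the cosmetic substitution $v \mapsto u$ at the left endpoint.
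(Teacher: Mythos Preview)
Your proposal is correct and matches the paper's approach exactly: the paper's entire proof of Lemma~\ref{p3} is the single sentence ``See the proof of Lemma~\ref{lb3}.'' You have simply unpacked that reference, correctly noting that the random orientation and the triangle-inequality estimates carry over verbatim, with the only cosmetic change being that the left endpoint is the isolated vertex $u$ rather than $v$.
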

\begin{proof}
See the proof of Lemma~\ref{lb3}.
\end{proof}

\begin{lemma}\label{p4}
It holds that $\EE{w(\P'_k)}\geq\frac{3k-4}{2k-4}w(\M^*_p)$.
\end{lemma}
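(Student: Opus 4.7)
The plan is to mirror the argument of Lemma~\ref{lb4} with the new parameter $m=(k-2)/2$. Each $k$-path $P'_i=ut_1h_1t_2h_2\cdots t_mh_mv$ decomposes into the $m$ matching edges from $\S_i$ (contributing $w(\S_i)$ deterministically), two ``boundary'' edges $ut_1$ and $h_mv$ connecting the two isolated vertices to the path, and $m-1$ ``interior'' edges $h_it_{i+1}$ linking consecutive matching edges. So I would first write
\[
w(P'_i)=w(\S_i)+w(u,t_1)+w(v,h_m)+\sum_{i=1}^{m-1}w(h_i,t_{i+1}),
\]
take expectations, and invoke Lemma~\ref{p3} on each of the last three terms.

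Next I would collect terms. Lemma~\ref{p3} contributes $\tfrac12 w(e_1)$ and $\tfrac12 w(e_m)$ from the boundary edges, and $\tfrac14(w(e_i)+w(e_{i+1}))$ from each interior edge; summing the interior contributions telescopes to $\tfrac12 w(\S_i)-\tfrac14(w(e_1)+w(e_m))$ because every $w(e_j)$ appears twice except the endpoints. This yields
\[
\EE{w(P'_i)}\geq \tfrac{3}{2}w(\S_i)+\tfrac{1}{4}\bigl(w(e_1)+w(e_m)\bigr).
\]
Then I would use the sortedness assumption $w(e_1)+w(e_m)\geq (2/m)w(\S_i)$ built into Step~2 of Alg.5 to get $\EE{w(P'_i)}\geq \tfrac{3m+1}{2m}w(\S_i)$.

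Finally, I would plug in $m=(k-2)/2$, which simplifies $\tfrac{3m+1}{2m}$ to $\tfrac{3k-4}{2k-4}$, and sum over $i=1,\dots,n/k$. Since the $\S_i$ partition $\M^*_p$, we conclude $\EE{w(\P'_k)}\geq \tfrac{3k-4}{2k-4}w(\M^*_p)$, as claimed.

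There is no real obstacle here: the argument is essentially the same computation as Lemma~\ref{lb4}, only with one fewer isolated vertex replaced by a second endpoint vertex, so the path structure and the expectation bounds from Lemma~\ref{p3} are already in the exact form needed. The only thing to double check is the telescoping step and the arithmetic converting $(3m+1)/(2m)$ into $(3k-4)/(2k-4)$, both of which are routine.
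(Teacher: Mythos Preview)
Your proposal is correct and follows essentially the same route as the paper's proof: decompose $w(P'_i)$, apply Lemma~\ref{p3} termwise, telescope to $\tfrac{3}{2}w(\S_i)+\tfrac{1}{4}(w(e_1)+w(e_m))$, use $w(e_1)+w(e_m)\geq (2/m)w(\S_i)$, and then substitute $m=(k-2)/2$ and sum over $i$. There is nothing to add.
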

\begin{proof}
It is sufficient to prove $\EE{w(P'_i)}\geq\frac{3m+1}{2m}w(\S_i)$. Note that
\[
w(P'_i)=w(\S_i)+w(u,t_1)+w(v,h_m)+\sum_{i=1}^{m-1}w(h_i,t_{i+1}).
\]
Since $m=\frac{k-2}{2}$, we can get that
\begin{align*}
\EE{w(P'_i)}\geq&\ w(\S_i)+\frac{1}{2}(w(e_1)+w(e_m))+\frac{1}{4}\sum_{i=1}^{m-1}(w(e_i)+w(e_{i+1}))\\
=&\ w(\S_i)+\frac{1}{2}(w(e_1)+w(e_m))+\frac{1}{2}w(\S_i)-\frac{1}{4}(w(e_1)+w(e_m))\\
=&\ \frac{3}{2}w(\S_i)+\frac{1}{4}(w(e_1)+w(e_m))\\
\geq&\ \lrA{\frac{3}{2}+\frac{1}{2m}}w(\S_i)
=\frac{3k-4}{2k-4}w(\S_i),
\end{align*}
where the first inequality follows from Lemma~\ref{p3}, and the second from $w(e_1)+w(e_m)\geq (2/m)w(\S_i)$ by the algorithm.
\end{proof}

Alg.5 can also be derandomized efficiently by the method of conditional expectations~\cite{williamson2011design}. 

\begin{lemma}\label{p5}
$w(\P'_k)\geq\frac{3k-4}{2k-4}w(\M_p)$.
\end{lemma}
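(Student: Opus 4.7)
The plan is to deduce Lemma~\ref{p5} as an immediate corollary of Lemma~\ref{p4}, the derandomization of Alg.5, and the optimality of $\M^*_p$. No new construction or case analysis is required.

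First, I would invoke the derandomization. Lemma~\ref{p4} gives the expectation bound $\EE{w(\P'_k)} \geq \frac{3k-4}{2k-4} w(\M^*_p)$ when the orientations in Step~2 of Alg.5 are chosen uniformly at random. Since the bound in Lemma~\ref{p4} is actually proved edge set by edge set (each $\EE{w(P'_i)} \geq \frac{3m+1}{2m} w(\S_i)$ is a sum of expectations of individual edge weights, each depending on the orientation of at most two edges), the method of conditional expectations applies directly: one can sequentially fix each orientation to the choice that preserves the inequality, yielding a deterministic $\P'_k$ with $w(\P'_k) \geq \frac{3k-4}{2k-4} w(\M^*_p)$. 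This is exactly the derandomization already noted right after Lemma~\ref{p4}.

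Second, I would compare $\M^*_p$ with $\M_p$. Recall that $\M_p = \bigcup_i \E''_i$ is a specific matching of size $p = (n/k) \cdot (k-2)/2$ extracted from the optimal $k$-path packing $\P^*_k$, while $\M^*_p$ is by definition a maximum weight matching of the same size $p$. Hence $w(\M^*_p) \geq w(\M_p)$. Chaining this with the deterministic bound above gives
\[
w(\P'_k) \geq \frac{3k-4}{2k-4}\, w(\M^*_p) \geq \frac{3k-4}{2k-4}\, w(\M_p),
\]
which is Lemma~\ref{p5}.

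I do not expect any real obstacle: the only subtle point is ensuring that the derandomization actually yields a deterministic guarantee matching the expectation bound, but this follows from the standard conditional-expectations argument since each orientation variable appears in only finitely many terms of the objective $w(P'_i)$. Beyond that, the lemma is purely a matter of substituting the optimality inequality $w(\M^*_p) \geq w(\M_p)$ into Lemma~\ref{p4}.
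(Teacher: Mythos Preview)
Your proposal is correct and follows essentially the same approach as the paper's proof: invoke Lemma~\ref{p4} together with the derandomization to obtain $w(\P'_k)\geq\frac{3k-4}{2k-4}w(\M^*_p)$, then use the optimality inequality $w(\M^*_p)\geq w(\M_p)$ to conclude. The paper's proof is just a one-line version of what you wrote.
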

\begin{proof}
Recall that $w(\M^*_p)\geq w(\M_p)$. By Lemma~\ref{p4} with a derandomization, we can get $w(\P'_k)\geq\frac{3k-4}{2k-4}w(\M_p)$.
\end{proof}

\begin{theorem}\label{tmp}
For metric $k$PP with even $k$, there is a polynomial-time $\frac{27k^2-48k+16}{32k^2-36k-24}$-approximation algorithm.
\end{theorem}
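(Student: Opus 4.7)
The plan is to return the heavier of the two packings produced by Alg.4 and Alg.5, and to analyze this choice by a convex combination of the bounds of Lemma~\ref{p2} and Lemma~\ref{p5}. The structural fact driving the proof is that $\M_{n/2}$ and $\M_p$ are the two edge-disjoint matchings obtained by 2-coloring the edges of $\P^*_k$, so $w(\M_{n/2}) + w(\M_p) = w(\P^*_k)$. This identity collapses the two unknowns into a single parameter $a := w(\M_{n/2}) \in [0, w(\P^*_k)]$, chosen adversarially.

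First I would write, for any $\lambda \in [0,1]$,
\[
\max(w(\P_k), w(\P'_k)) \;\geq\; \lambda\, w(\P_k) + (1-\lambda)\, w(\P'_k) \;\geq\; \lambda\!\left[\tfrac{5k-10}{8k} w(\P^*_k) + \tfrac{2k+3}{4k} a\right] + (1-\lambda)\,\tfrac{3k-4}{2k-4}\bigl(w(\P^*_k)-a\bigr),
\]
using Lemmas~\ref{p2} and~\ref{p5} and the identity above. Collecting terms rewrites the right-hand side as $A(\lambda)\, w(\P^*_k) + B(\lambda)\, a$ with $A,B$ linear in $\lambda$. Since the right-hand side is linear in $a$, the adversary picks one of the two endpoints $a=0$ or $a=w(\P^*_k)$, and the worst-case guarantee is maximized by the standard equalizer choice $B(\lambda)=0$.

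Next I would solve $B(\lambda)=0$, i.e., $\lambda\,\tfrac{2k+3}{4k} = (1-\lambda)\,\tfrac{3k-4}{2k-4}$, yielding
\[
\lambda \;=\; \frac{6k^2-8k}{8k^2-9k-6},
\]
which lies in $(0,1)$ for all even $k \geq 4$, so the convex combination is legitimate. Substituting back and using the elementary identity $\tfrac{5k-10}{8k} + \tfrac{2k+3}{4k} = \tfrac{9k-4}{8k}$, the coefficient $A(\lambda)$ simplifies to
\[
A(\lambda) \;=\; \lambda \cdot \frac{9k-4}{8k} \;=\; \frac{(3k-4)(9k-4)}{4(8k^2-9k-6)} \;=\; \frac{27k^2-48k+16}{32k^2-36k-24},
\]
which is the claimed ratio. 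Polynomial running time follows since Alg.4 runs the $7/8$-approximation for metric TSP and Alg.5 computes a maximum weight matching followed by a conditional-expectations derandomization.

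The main obstacle is really just the minimax bookkeeping above; all the algorithmic content (the $7/8$-approximation call, the matching computation, and the derandomization that converts Lemma~\ref{p4} into Lemma~\ref{p5}) is already packaged in the preceding lemmas, and the only nontrivial part of the analysis is realizing that the trade-off between the ``$w(\M_{n/2})$-heavy'' bound for $\P_k$ and the ``$w(\M_p)$-heavy'' bound for $\P'_k$ should be balanced by equalizing the adversary's two extreme strategies, which uniquely pins down $\lambda$ and hence the ratio.
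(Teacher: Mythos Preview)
Your proposal is correct and follows essentially the same approach as the paper: take the heavier of $\P_k$ and $\P'_k$, bound a weighted combination using Lemmas~\ref{p2} and~\ref{p5} together with $w(\M_{n/2})+w(\M_p)=w(\P^*_k)$, and choose the weights so that the contributions of $w(\M_{n/2})$ and $w(\M_p)$ coincide. The only cosmetic difference is that the paper writes the combination with unnormalized coefficients $\tfrac{3k-4}{2k-4}$ on $\P_k$ and $\tfrac{2k+3}{4k}$ on $\P'_k$ (chosen so the cross terms match directly), whereas you normalize to a convex parameter $\lambda$ and solve $B(\lambda)=0$; these yield the identical $\lambda=\tfrac{6k^2-8k}{8k^2-9k-6}$ and hence the same ratio.
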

\begin{proof}
The algorithm returns a better $k$-path packing from $\P_k$ and $\P'_k$. Recall that $w(\M_{n/2})+w(\M_p)=w(\P^*_k)$. By Lemmas~\ref{p2} and \ref{p5}, we have 
\begin{align*}
&\lrA{\frac{3k-4}{2k-4}+\frac{2k+3}{4k}}\cdot\max\{w(\P_k),w(\P'_k)\}\\
&\ \geq \frac{3k-4}{2k-4}w(\P_k)+\frac{2k+3}{4k}w(\P'_k)\\
&\ \geq\frac{3k-4}{2k-4}\cdot\frac{5k-10}{8k}w(\P^*_k)+\frac{3k-4}{2k-4}\cdot\frac{2k+3}{4k}(w(\M_{n/2})+w(\M_p))\\
&\ =\lrA{\frac{3k-4}{2k-4}\cdot\frac{5k-10}{8k}+\frac{3k-4}{2k-4}\cdot\frac{2k+3}{4k}}w(\P^*_k)\\
&\ =\frac{3k-4}{2k-4}\cdot\frac{9k-4}{8k}w(\P^*_k).
\end{align*}
Hence, $\max\{w(\P_k),w(\P'_k)\}\geq\frac{3k-4}{2k-4}\cdot\frac{9k-4}{8k}/\lra{\frac{3k-4}{2k-4}+\frac{2k+3}{4k}}w(\P^*_k)=\frac{27k^2-48k+16}{32k^2-36k-24}w(\P^*_k)$. 
\end{proof}

The approximation ratio in Theorem~\ref{tmp} is better than $7/8\cdot (1-1/k)$ for even $10\geq k\geq 6$ (see Table~\ref{res2}). For metric $4$PP, the ratio is even worse than the ratio $3/4$ in~\cite{hassin1997approximation}. But, in the next section, we will show an improved $14/17\approx 0.823$-approximation algorithm for this case.

\section{Approximation Algorithms for the Case of $k=4$}\label{sec5}
In this section, we study the case of $k=4$ for metric/general $k$CP and $k$PP. For metric 4CP, we improve the best-known ratio from $3/4$~\cite{li2023cyclepack} to $5/6$. For general 4CP, we improve the best-known ratio from $2/3$~\cite{li2023cyclepack} to $3/4$. For metric 4PP, we improve the best-known ratio from $3/4$~\cite{hassin1997approximation} to $14/17$.

\subsection{General 4CP}\label{sec5.1}
Some structural properties between the minimum weight $4$-cycle packing and the minimum weight matching of size $n/2$ were proved in \cite{DBLP:journals/jda/MonnotT08,DBLP:journals/corr/abs-2210-16534}. The properties can be directly extended to the maximum weight $4$-cycle packing and the maximum weight matching of size $n/2$.


\begin{lemma}[\cite{DBLP:journals/jda/MonnotT08,DBLP:journals/corr/abs-2210-16534}]\label{mod4}
Given the maximum weight 4-cycle packing $\C^*_4$ and the maximum weight matching $\M^*$ of size $n/2$, there is a way to color edges in $\C^*_4$ with red and blue such that
\begin{enumerate}
    \item [(1)] the blue (resp., red) edges form a matching of size $n/2$ $\M_b$ (resp., $\M_r$);
    \item [(2)] $\C^*_4=\M_b\cup\M_r$;
    \item [(3)] $\M_b\cup\M^*$ is a cycle packing such that the length of every cycle is divisible by 4.
\end{enumerate}
\end{lemma}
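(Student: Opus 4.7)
Conditions (1) and (2) are essentially automatic: any $4$-cycle admits exactly two proper $2$-colorings of its edges, and in each coloring the two color classes form the two perfect matchings on the four cycle vertices. Hence any global choice of one coloring per $4$-cycle of $\C^*_4$ yields matchings $\M_b$ and $\M_r$ of size $n/2$ with $\M_b \cup \M_r = \C^*_4$. The real content of the lemma is condition (3), i.e., making the choice in a way that controls the cycle structure of $\M_b \cup \M^*$.

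I would first recast (3) as a parity problem. Since $\M_b$ and $\M^*$ are each perfect matchings of $V$, the multiset $\M_b \cup \M^*$ is $2$-regular, so it decomposes into cycles, and every such cycle alternates strictly between $\M_b$- and $\M^*$-edges; in particular each cycle has even length $2\ell$, and the task is to ensure $\ell$ is always even. I would then reduce to a local problem by considering the connected components of the multigraph $\C^*_4 \cup \M^*$: since $\C^*_4$ is a disjoint union of $4$-cycles and every connected component is closed under neighbours in both $\C^*_4$ and $\M^*$, the vertex set of any such component $K$ is a union of whole $4$-cycles of $\C^*_4$, say $t$ of them, with $|V(K)|=4t$. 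The choices inside different components are independent, so it suffices to treat one component at a time.

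Within one component $K$ I would proceed by a flip argument. Starting from an arbitrary coloring, suppose some cycle $D \subseteq K$ of $\M_b \cup \M^*$ has length $\equiv 2 \pmod 4$. Pick a $4$-cycle $C_i$ of $\C^*_4$ visited by $D$ and flip its coloring: this exchanges the two $\M_b$-edges inside $C_i$ with the complementary pair of $C_i$-edges. The effect on $\M_b \cup \M^*$ is purely local at $V(C_i)$, and a case analysis on how $D$ (and any other cycle of $\M_b \cup \M^*$ touching $C_i$) enters and leaves $V(C_i)$ shows that an appropriately chosen flip either merges or splits cycles in $K$ in a way that strictly reduces the number of cycles of length $\equiv 2 \pmod 4$. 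Iterating terminates at an assignment where every cycle has length divisible by $4$. The degenerate subcase $\M_b \cap \M^* \neq \emptyset$, which would create a length-$2$ double edge in $\M_b \cup \M^*$, is handled by exactly the same flip: flipping the coloring of the offending $4$-cycle removes the shared edge and resolves the component into a single $4$-cycle.

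The main obstacle I anticipate is the bookkeeping in the flip analysis: a single flip can simultaneously disturb several cycles of $\M_b \cup \M^*$, and one must verify that some choice of flip strictly decreases the count of bad cycles rather than merely permuting them. To sidestep a sprawling case analysis, a cleaner alternative I would attempt is to encode the divisibility-by-$4$ conditions as an $\mathbb{F}_2$-linear system in the $t$ coloring variables of each component and establish solvability by a parity/rank argument, exploiting that the total number of edges of $\M_b \cup \M^*$ in $K$ is $4t \equiv 0 \pmod 4$, which provides the global parity constraint forcing the system to be consistent.
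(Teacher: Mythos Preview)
The paper does not prove this lemma; it is cited from \cite{DBLP:journals/jda/MonnotT08,DBLP:journals/corr/abs-2210-16534} and used as a black box, so there is no in-paper argument to compare against. Your setup is correct: (1)--(2) are immediate, and for (3) the cycles of $\M_b\cup\M^*$ alternate, so length $\equiv 0\pmod 4$ is equivalent to containing an even number of $\M^*$-edges, and within a component on $4t$ vertices the bad cycles (length $\equiv 2$) occur in pairs. But both of your proposed completions have real gaps.

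For the flip argument, your key claim is not justified and is in fact false as stated. Flipping a $4$-cycle $C_i$ whose two blue edges lie in distinct cycles $D_1,D_2$ merges them into one cycle of length $|D_1|+|D_2|$, so if exactly one of $D_1,D_2$ is bad the bad count is unchanged; if both blue edges lie in the same bad cycle, the flip either leaves it intact or splits it into one bad and one good piece --- again no decrease. Thus there is no guarantee that a single ``appropriately chosen'' flip strictly reduces the number of bad cycles, and you supply no secondary potential (e.g.\ total length of bad cycles) to force termination. Your handling of the $\M_b\cap\M^*\neq\emptyset$ degeneracy is also only correct when \emph{both} $\M^*$-edges of the offending $4$-cycle are internal to it.

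The $\mathbb{F}_2$ alternative is not a fix: the cycle decomposition of $\M_b\cup\M^*$ depends on the coloring variables themselves, so the divisibility conditions are not a fixed linear system in those variables, and the global parity $4t\equiv 0$ only says the bad count is even, not that it can be made zero. A route that does work cleanly: contract each $4$-cycle to a vertex to obtain a $4$-regular multigraph on the $\M^*$-edges, apply Petersen's $2$-factor theorem to write it as $A_1\cup A_2$, and for each $4$-cycle choose the blue matching so that each blue edge joins a vertex whose $\M^*$-stub lies in $A_1$ to one whose stub lies in $A_2$ (always possible, since exactly two stubs of each $4$-cycle lie in each $A_i$). Every cycle of $\M_b\cup\M^*$ then alternates $A_1$- and $A_2$-edges among its $\M^*$-edges, hence contains an even number of $\M^*$-edges and has length divisible by $4$.
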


Next, we describe the approximation algorithm for general 4CP, denoted by Alg.6.

\medskip
\noindent\textbf{Step~1.} Find a maximum weight matching $\M^*$ of size $n/2$.

\noindent\textbf{Step~2.} Construct a multi-graph $G/\M^*$ such that there are $n/2$ super-vertices one-to-one corresponding to the $n/2$ edges in $\M^*$, i.e., there is a function $f$, and for two super-vertices $f(e_i),f(e_j)$ such that $e_i,e_j\in\M^*$, there are four super-edges $f(e_i)f(e_j)$ between them, corresponding to the four edges $uv$ with a weight of $w(u,v)$ ($u\in e_i, v\in e_j$).

\noindent\textbf{Step~3.} Find a maximum weight matching $\M^{**}_{n/4}$ of size $n/4$ in graph $G/\M^*$. Note that $\M^*\cup\M^{**}_{n/4}$ corresponds to a $4$-path packing $\P_4$ in graph $G$.

\noindent\textbf{Step~4.} Obtain a $4$-cycle packing $\C_4$ by completing the $4$-path packing $\P_4$.
\medskip

We can get that $w(\C_4)\geq w(\P_4)=w(\M^*)+w(\M^{**}_{n/4})$.

\begin{lemma}\label{lb5}
$w(\M^{**}_{n/4})\geq \frac{1}{2}w(\M_b)$.
\end{lemma}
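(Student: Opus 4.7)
The plan is to exhibit a specific matching of size $n/4$ in $G/\M^*$ whose weight is at least $\tfrac{1}{2}w(\M_b)$ and then invoke the optimality of $\M^{**}_{n/4}$.

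First, I will apply Lemma~\ref{mod4} to the pair $(\C^*_4,\M^*)$ to obtain the matching $\M_b$ of size $n/2$ with the property that $\M_b\cup\M^*$ is a disjoint union of cycles $C_1,\dots,C_t$ in $G$, each of length divisible by $4$. Write $|C_i|=4\ell_i$, so $\sum_{i=1}^t 4\ell_i=n$. Within each $C_i$ the edges alternate between $\M_b$ and $\M^*$, so $C_i$ contains exactly $2\ell_i$ blue edges and $2\ell_i$ matching edges.

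Next, I will push the cycles into the contracted graph $G/\M^*$. Since each $\M^*$-edge becomes a super-vertex under $f$, the cycle $C_i$ projects to a closed walk on the $2\ell_i$ super-vertices corresponding to its $\M^*$-edges, whose super-edges are precisely the images of the $2\ell_i$ blue edges of $C_i$ (each with the same weight as the corresponding edge in $G$). Because adjacent blue edges in $C_i$ share a unique $\M^*$-edge but have otherwise distinct endpoints, this projection is actually a simple cycle $\widetilde C_i$ of length $2\ell_i$ in $G/\M^*$, and these cycles across different $i$ are vertex-disjoint since the $\M^*$-edges partition $V$.

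Then, from each even cycle $\widetilde C_i$ I will take the heavier of its two alternating edge sets, obtaining a matching $F_i$ in $G/\M^*$ of size $\ell_i$ with $w(F_i)\geq \tfrac{1}{2}w(\widetilde C_i)$, where $w(\widetilde C_i)$ equals the total blue weight along $C_i$. Setting $F=\bigcup_{i=1}^t F_i$ gives a matching in $G/\M^*$ of size $\sum_i \ell_i = n/4$ with
\[
w(F)=\sum_{i=1}^t w(F_i)\geq \frac{1}{2}\sum_{i=1}^t w(\widetilde C_i)=\frac{1}{2}w(\M_b).
\]
Since $\M^{**}_{n/4}$ is the maximum weight matching of size $n/4$ in $G/\M^*$, we conclude $w(\M^{**}_{n/4})\geq w(F)\geq \tfrac{1}{2}w(\M_b)$.

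The only delicate point is verifying that the projected object $\widetilde C_i$ is indeed a simple cycle of even length in $G/\M^*$ (rather than a multigraph closed walk), which is why the divisibility-by-$4$ conclusion in Lemma~\ref{mod4} is essential: it guarantees $2\ell_i\geq 2$ blue edges arranged so that each super-vertex of $\widetilde C_i$ is incident to exactly two of them. Once that is in place, the alternating-matching argument is routine.
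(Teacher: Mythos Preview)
Your proof is correct and follows the same approach as the paper: project the alternating cycles of $\M_b\cup\M^*$ into $G/\M^*$, observe that the divisibility-by-$4$ condition makes each image an even cycle there, split the blue edges of each image into two perfect matchings, and compare to $\M^{**}_{n/4}$ by optimality. Your write-up is more explicit about the projection step; the only cosmetic quibble is that when $\ell_i=1$ the image $\widetilde C_i$ is a $2$-cycle (two parallel super-edges) in the multigraph $G/\M^*$ rather than a simple cycle in the usual sense, but your degree-two observation already covers this and the alternating-matching argument goes through unchanged.
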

\begin{proof}
For every cycle of $\M_b\cup\M^*$, since the length is divisible by 4, the number of blue edges is even, and then it corresponds to an even cycle in $G/\M^*$. Hence, we can obtain two matchings of size $n/4$ by decomposing the blue edges in $G/\M^*$. Since $\M^{**}_{n/4}$ is the maximum weight matching, we can get $w(\M^{**}_{n/4})\geq \frac{1}{2}w(\M_b)$.
\end{proof}

\begin{lemma}\label{useful}
$w(\P_4)\geq \frac{1}{2}w(\M^*)+\frac{1}{2}w(\C^*_4)$.
\end{lemma}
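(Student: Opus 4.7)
The plan is to combine three ingredients that are already on the table: the algorithmic identity $w(\P_4)=w(\M^*)+w(\M^{**}_{n/4})$ coming from Step~3--4 of Alg.6, the structural decomposition of $\C^*_4$ from Lemma~\ref{mod4}, and the matching bound from Lemma~\ref{lb5}. Concretely, I would first write $\C^*_4=\M_b\cup\M_r$ where $\M_b,\M_r$ are the two matchings of size $n/2$ guaranteed by Lemma~\ref{mod4}. Since $\M^*$ is the maximum weight matching of size $n/2$, it immediately dominates each of these: $w(\M^*)\geq w(\M_r)$ (and also $w(\M^*)\geq w(\M_b)$, though only one direction is needed).

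Next, I would invoke Lemma~\ref{lb5} to get $w(\M^{**}_{n/4})\geq \tfrac{1}{2}w(\M_b)$. The combination is then a one-line manipulation: split the $w(\M^*)$ term in half and pair one half with $w(\M^{**}_{n/4})$, giving
\[
w(\P_4)=\tfrac{1}{2}w(\M^*)+\bigl(\tfrac{1}{2}w(\M^*)+w(\M^{**}_{n/4})\bigr)\geq \tfrac{1}{2}w(\M^*)+\tfrac{1}{2}w(\M_r)+\tfrac{1}{2}w(\M_b)=\tfrac{1}{2}w(\M^*)+\tfrac{1}{2}w(\C^*_4),
\]
which is exactly the stated bound.

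There is not really a hard step here once Lemmas~\ref{mod4} and \ref{lb5} are in hand; the only thing to be careful about is which matching ($\M_b$ or $\M_r$) to pair with Lemma~\ref{lb5} and which to pair with the ``$\M^*$ dominates'' bound. The lemma asserts the divisibility-by-$4$ property only for $\M_b\cup\M^*$, so Lemma~\ref{lb5} is tied to $\M_b$, which forces us to bound $\M_r$ via $w(\M^*)\geq w(\M_r)$ instead; the asymmetry turns out to be harmless because the ``wasteful'' factor $\tfrac{1}{2}$ appears on both $\M_b$ and $\M_r$ in the target inequality.
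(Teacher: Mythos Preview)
Your proposal is correct and follows essentially the same approach as the paper: both start from $w(\P_4)=w(\M^*)+w(\M^{**}_{n/4})$, apply Lemma~\ref{lb5} to bound $w(\M^{**}_{n/4})\geq\tfrac{1}{2}w(\M_b)$, use $w(\M^*)\geq w(\M_r)$ on half of the $w(\M^*)$ term, and then recombine via $\C^*_4=\M_b\cup\M_r$. Your remark about why Lemma~\ref{lb5} must go with $\M_b$ rather than $\M_r$ is exactly the right observation.
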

\begin{proof}
Recall that $w(\P_4)=w(\M^*)+w(\M^{**}_{n/4})$. By Lemma~\ref{lb5}, we have 
\[
w(\P_4)\geq w(\M^*)+\frac{1}{2}w(\M_b).
\]
Note that $w(\M^*)\geq \max\{w(\M_b),w(\M_r)\}$ and $\C^*_4=\M_b\cup\M_r$. Hence, we can get 
\begin{align*}
w(\P_4)&\geq w(\M^*)+\frac{1}{2}w(\M_b)\geq \frac{1}{2}w(\M^*)+\frac{1}{2}w(\M_r)+\frac{1}{2}w(\M_b)\\
&=\frac{1}{2}w(\M^*)+\frac{1}{2}w(\C^*_4).    
\end{align*}
\end{proof}

\begin{theorem}\label{tm5}
For general 4CP, Alg.6 is a polynomial-time $3/4$-approximation algorithm.
\end{theorem}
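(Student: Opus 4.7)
The plan is to combine Lemma~\ref{useful} with the standard lower bound $w(\M^*)\geq \tfrac{1}{2}w(\C^*_4)$ arising from the matching decomposition $\C^*_4=\M_b\cup\M_r$ guaranteed by Lemma~\ref{mod4}. Since Alg.6 runs in polynomial time (the only nontrivial steps are two maximum-weight matching computations, which are polynomial by \cite{gabow1974implementation,lawler1976combinatorial}), the whole argument reduces to chaining a few inequalities already in hand.

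First I would observe that by Lemma~\ref{mod4} the edge set $\C^*_4$ splits into two matchings $\M_b$ and $\M_r$ of size $n/2$, so $w(\M_b)+w(\M_r)=w(\C^*_4)$, which gives $\max\{w(\M_b),w(\M_r)\}\geq \tfrac{1}{2}w(\C^*_4)$. Since $\M^*$ is the maximum-weight matching of size $n/2$, this yields
\[
w(\M^*)\geq \max\{w(\M_b),w(\M_r)\}\geq \tfrac{1}{2}w(\C^*_4).
\]
Next I would plug this into Lemma~\ref{useful} to obtain
\[
w(\P_4)\geq \tfrac{1}{2}w(\M^*)+\tfrac{1}{2}w(\C^*_4)\geq \tfrac{1}{4}w(\C^*_4)+\tfrac{1}{2}w(\C^*_4)=\tfrac{3}{4}w(\C^*_4).
\]
Finally, since $\C_4$ is obtained by completing $\P_4$ and all edge weights are non-negative, $w(\C_4)\geq w(\P_4)\geq \tfrac{3}{4}w(\C^*_4)$, giving the claimed $3/4$ ratio.

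There is essentially no main obstacle left: the nontrivial structural work (the parity property of $\M_b\cup\M^*$, and the trade-off $w(\P_4)\geq \tfrac{1}{2}w(\M^*)+\tfrac{1}{2}w(\C^*_4)$) has already been done in Lemmas~\ref{mod4}, \ref{lb5}, and \ref{useful}. The theorem is a short corollary obtained by invoking optimality of $\M^*$ against the two halves of the decomposition of $\C^*_4$.
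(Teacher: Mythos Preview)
Your proposal is correct and follows essentially the same approach as the paper's proof: combine Lemma~\ref{useful} with the bound $w(\M^*)\geq\tfrac{1}{2}w(\C^*_4)$, then use $w(\C_4)\geq w(\P_4)$. The only cosmetic difference is that the paper cites Lemma~\ref{lb2} for the matching bound, whereas you re-derive it from the decomposition $\C^*_4=\M_b\cup\M_r$ of Lemma~\ref{mod4}; the argument is otherwise identical.
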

\begin{proof}
Recall that $w(\C_4)\geq w(\P_4)$.
By Lemma~\ref{useful}, we have $w(\P_4)\geq\frac{1}{2}w(\M^*)+\frac{1}{2}w(\C^*_4)$. Moreover, by Lemma~\ref{lb2}, we have $w(\M^*)\geq\frac{1}{2}w(\C^*_4)$. So, $w(\C_4)\geq \frac{3}{4}w(\C^*_4)$.
\end{proof}

The analysis is tight, even on $\{0,1\}$-weighted graphs. Figure~\ref{fig04} shows an example, where there are $n=12$ vertices, the weight of each solid edge is 1, and the weight of each omitted edge is 0. Note that the maximum weight 4-cycle packing $\{v_2v_3v_8v_9v_2, v_4v_5v_{10}v_{11}v_4, v_6v_7v_{12}v_1v_6\}$ has a weight of 12. The algorithm may find a maximum weight matching $\M^*=\{v_1v_2,v_3v_4,\dots,v_{11}v_{12}\}$ of size $n/2$ with a weight of 6, and get a maximum weight 4-cycle packing $\C_4=\{v_1v_2v_3v_4v_1, v_5v_6v_7v_8v_5, v_9v_{10}v_{11}v_{12}v_9\}$ containing the edges of $\M^*$. Then, the weight of $\C_4$ is $9$, and the approximation ratio is $9/12=3/4$.

\begin{figure}[ht]
\centering
\begin{tikzpicture}[scale=0.8]
\filldraw[black]
(0.5, 1.866) circle [radius=2pt]
(1.366, 1.366) circle [radius=2pt]
(1.866, 0.5) circle [radius=2pt]

(-0.5, 1.866) circle [radius=2pt]
(-1.366, 1.366) circle [radius=2pt]
(-1.866, 0.5) circle [radius=2pt]

(-0.5, -1.866) circle [radius=2pt]
(-1.366, -1.366) circle [radius=2pt]
(-1.866, -0.5) circle [radius=2pt]

(0.5, -1.866) circle [radius=2pt]
(1.366, -1.366) circle [radius=2pt]
(1.866, -0.5) circle [radius=2pt]
;

\node (up) at (-0.5, 1.866+0.25) {\small $v_1$};
\node (up) at (0.5, 1.866+0.25) {\small $v_2$};

\node (northeast) at (1.366+0.25, 1.366+0.15) {\small $v_3$};
\node (right) at (1.866+0.25, 0.5) {\small $v_4$};

\node (right) at (1.866+0.25, -0.5) {\small $v_5$};
\node (southeast) at (1.366+0.25, -1.366-0.15) {\small $v_6$};

\node (down) at (0.5, -1.866-0.25) {\small $v_7$};
\node (down) at (-0.5, -1.866-0.25) {\small $v_8$};

\node (southwest) at (-1.366-0.25, -1.366-0.15) {\small $v_9$};
\node (left) at (-1.866-0.35, -0.5) {\small $v_{10}$};

\node (left) at (-1.866-0.35, 0.5) {\small $v_{11}$};
\node (northwest) at (-1.366-0.35, 1.366+0.15) {\small $v_{12}$};

\draw[very thick] (-0.5, 1.866) to (0.5, 1.866);
\draw[very thick] (1.366, 1.366) to (0.5, 1.866);
\draw[very thick] (1.366, 1.366) to (1.866, 0.5);
\draw[very thick] (1.866, -0.5) to (1.866, 0.5);
\draw[very thick] (1.866, -0.5) to (1.366, -1.366);
\draw[very thick] (0.5, -1.866) to (1.366, -1.366);
\draw[very thick] (0.5, -1.866) to (-0.5, -1.866);
\draw[very thick] (-1.366, -1.366) to (-0.5, -1.866);
\draw[very thick] (-1.366, -1.366) to (-1.866, -0.5);
\draw[very thick] (-1.866, 0.5) to (-1.866, -0.5);
\draw[very thick] (-1.866, 0.5) to (-1.366, 1.366);
\draw[very thick] (-0.5, 1.866) to (-1.366, 1.366);

\draw[very thick] (-0.5, 1.866) to (1.366, -1.366);
\draw[very thick] (-1.366, 1.366) to (0.5, -1.866);

\draw[very thick] (0.5, 1.866) to (-1.366, -1.366);
\draw[very thick] (1.366, 1.366) to (-0.5, -1.866);

\draw[very thick] (-1.866, 0.5) to (1.866, 0.5);
\draw[very thick] (-1.866, -0.5) to (1.866, -0.5);

\end{tikzpicture}
\caption{A tight example of the $3/4$-approximation algorithm for general 4CP, where each solid edge has a weight of 1 and each omitted edge has a weight of 0}
\label{fig04}
\end{figure}
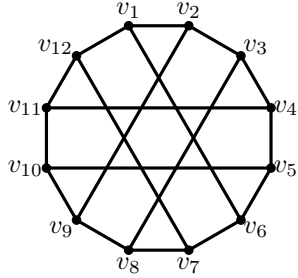

The current-best approximation ratio for general $4$PP is $3/4$~\cite{hassin1997approximation}. By the proof of Theorem~\ref{tm5}, the 4-path packing $\P_4$ holds that $w(\P_4)\geq\frac{3}{4}w(\C^*_4)$. Since $w(\C^*_4)\geq w(\P^*_4)$, we have $w(\P_4)\geq \frac{3}{4}w(\P^*_4)$. The simple algorithm of $\P_4$ achieves an approximation ratio of $3/4$. Indeed, the algorithm is the same as that in~\cite{hassin1997approximation}, where they also gave a $\{0,1,2\}$-weighted graph to show the tightness of the algorithm. We show that the analysis is tight even on $\{0,1\}$-weighted graphs. Figure~\ref{fig04+} shows an example, where there are $n=16$ vertices, the weight of each solid edge is 1, and the weight of each omitted edge is 0. The maximum weight 4-path packing $\{v_{14}v_{15}v_{16}v_1, v_2v_3v_4v_5, v_6v_7v_8v_{9},v_{10}v_{11}v_{12}v_{13}\}$ has a weight of 8. The algorithm may find a maximum weight matching $\M^*=\{v_1v_2,v_3v_4,\dots,v_{15}v_{16}\}$ of size $n/2$ with a weight of 4, and get a maximum weight 4-path packing $\P_4=\{v_1v_2v_3v_4, v_5v_6v_7v_8, v_9v_{10}v_{11}v_{12},v_{13}v_{14}v_{15}v_{16}\}$ containing the edges of $\M^*$. Then, the weight of $\P_4$ is $6$, and the approximation ratio is $6/8=3/4$.

\begin{figure}[ht]
\centering
\begin{tikzpicture}[scale=0.9]
\filldraw[black]
(0, 0.25) circle [radius=2pt]
(1, 0.75) circle [radius=2pt]
(1.5, 0.75) circle [radius=2pt]
(2.5, 0.75) circle [radius=2pt]
(3, 0.75) circle [radius=2pt]
(4, 0.75) circle [radius=2pt]
(4.5, 0.75) circle [radius=2pt]
(5.5, 0.25) circle [radius=2pt]

(0, -0.25) circle [radius=2pt]
(1, -0.75) circle [radius=2pt]
(1.5, -0.75) circle [radius=2pt]
(2.5, -0.75) circle [radius=2pt]
(3, -0.75) circle [radius=2pt]
(4, -0.75) circle [radius=2pt]
(4.5, -0.75) circle [radius=2pt]
(5.5, -0.25) circle [radius=2pt];

\node (up) at (0, 0.25+0.25) {\small $v_{16}$};
\node (up) at (1, 0.75+0.25) {\small $v_1$};
\node (up) at (1.5, 0.75+0.25) {\small $v_2$};
\node (up) at (2.5, 0.75+0.25) {\small $v_3$};
\node (up) at (3, 0.75+0.25) {\small $v_4$};
\node (up) at (4, 0.75+0.25) {\small $v_5$};
\node (up) at (4.5, 0.75+0.25) {\small $v_6$};
\node (up) at (5.5, 0.25+0.25) {\small $v_7$};

\node (up) at (0, -0.25-0.25) {\small $v_{15}$};
\node (up) at (1, -0.75-0.25) {\small $v_{14}$};
\node (up) at (1.5, -0.75-0.25) {\small $v_{13}$};
\node (up) at (2.5, -0.75-0.25) {\small $v_{12}$};
\node (up) at (3, -0.75-0.25) {\small $v_{11}$};
\node (up) at (4, -0.75-0.25) {\small $v_{10}$};
\node (up) at (4.5, -0.75-0.25) {\small $v_{9}$};
\node (up) at (5.5, -0.25-0.25) {\small $v_8$};

\draw[very thick] (0, -0.25) to (0, 0.25);
\draw[very thick] (0, 0.25) to (1, 0.75);
\draw[very thick] (2.5, 0.75) to (3, 0.75);
\draw[very thick] (1.5, 0.75) to (2.5, 0.75);
\draw[very thick] (5.5, -0.25) to (5.5, 0.25);
\draw[very thick] (5.5, -0.25) to (4.5, -0.75);
\draw[very thick] (2.5, -0.75) to (3, -0.75);
\draw[very thick] (3, -0.75) to (4, -0.75);
\end{tikzpicture}
\caption{A tight example of the $3/4$-approximation algorithm for general 4PP, where each solid edge has a weight of 1 and each omitted edge has a weight of 0}
\label{fig04+}
\end{figure}
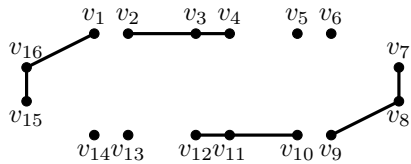

\subsection{Metric 4CP}\label{sec5.2}
Li and Yu~\cite{li2023cyclepack} proved an almost trivial approximation ratio of $3/4$. We show that their algorithm, denoted by Alg.7, actually achieves an approximation ratio of $5/6$. Moreover, we will give a tight example for this algorithm. 

\medskip
\noindent\textbf{Step~1.} Find a maximum weight matching $\M^*$ of size $n/2$.

\noindent\textbf{Step~2.} Construct a multi-graph $G/\M^*$ such that there are $n/2$ super-vertices one-to-one corresponding to the $n/2$ edges in $\M^*$, i.e., there is a function $f$, and for two super-vertices $f(e_i),f(e_j)$ such that $e_i,e_j\in\M^*$, there are two super-edges $f(e_i)f(e_j)$ between them, corresponding to the edge sets $\{uz,xy\}$ and $\{uy,xz\}$ with a weight of $w(u,z)+w(x,y)$ and $w(u,y)+w(x,z)$ ($ux\in e_i, yz\in e_j$).

\noindent\textbf{Step~3.} Find a maximum weight matching $\M^{**}_{n/4}$ of size $n/4$ in graph $G/\M^*$. Note that $\M^*\cup\M^{**}_{n/4}$ corresponds to a $4$-cycle packing $\C_4$ in graph $G$ if we decompose each super-edge of $\M^{**}_{n/4}$ into two normal edges.

\noindent\textbf{Step~4.} Return $\C_4$.
\medskip

Note that $\C_4$ is the maximum weight 4-cycle packing containing the edges of $\M^*$ by the optimality of $\M^{**}_{n/4}$. Recall that we can get a 4-path packing $\P_4$ such that $w(\P_4)\geq \frac{1}{2}w(\M^*)+\frac{1}{2}w(\C^*_4)$ by Lemma~\ref{useful}. Moreover, if $\P_4=\{u_ix_iy_iz_i\}_{i=1}^{n/4}$, $\M^*$ represents the edge set $\{u_ix_i,y_iz_i\}_{i=1}^{n/4}$. Let $\overline{\P_4}$ denote the edge set $\{u_iz_i\}_{i=1}^{n/4}$. We have the following bounds.

\begin{lemma}\label{lb6}
$w(\C_4)\geq \frac{3}{4}w(\C^*_4)+w(\overline{\P_4})$.
\end{lemma}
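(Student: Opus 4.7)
The plan is to find a particular 4-cycle packing that contains $\M^*$ and whose weight is easy to bound, then invoke the fact that $\C_4$ beats every such packing. The natural candidate is the 4-cycle packing obtained by closing each 4-path $u_ix_iy_iz_i\in\P_4$ into the 4-cycle $u_ix_iy_iz_iu_i$ via the extra edge $u_iz_i\in\overline{\P_4}$. This packing contains the edges of $\M^*$ (namely $u_ix_i$ and $y_iz_i$ for every $i$), and its total weight is exactly $w(\P_4)+w(\overline{\P_4})$.

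Next, I would observe that by construction of Alg.7, $\C_4$ is the maximum-weight 4-cycle packing that contains $\M^*$: any such packing must pair up the edges of $\M^*$ into 4-cycles, and each pairing plus choice of cross-edges corresponds to selecting one of the two super-edges between two super-vertices in $G/\M^*$, so the optimality of $\M^{**}_{n/4}$ gives $w(\C_4)\ge w(\P_4)+w(\overline{\P_4})$. This is the only structural step; everything that follows is just arithmetic.

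Finally, I would chain the inequalities. Lemma~\ref{useful} supplies $w(\P_4)\ge \tfrac12 w(\M^*)+\tfrac12 w(\C^*_4)$, and since $k=4$ is even, Lemma~\ref{lb2} gives $w(\M^*)\ge \tfrac12 w(\C^*_4)$. Combining,
\[
w(\C_4)\;\ge\; w(\P_4)+w(\overline{\P_4})\;\ge\;\tfrac12 w(\M^*)+\tfrac12 w(\C^*_4)+w(\overline{\P_4})\;\ge\;\tfrac14 w(\C^*_4)+\tfrac12 w(\C^*_4)+w(\overline{\P_4}),
\]
which is the claimed bound $w(\C_4)\ge \tfrac34 w(\C^*_4)+w(\overline{\P_4})$. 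The main (mild) obstacle is just clearly justifying the first inequality, i.e.\ that the completion of $\P_4$ via $\overline{\P_4}$ is indeed a valid competitor for $\C_4$ because it contains $\M^*$; the remainder is a direct substitution.
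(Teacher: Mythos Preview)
Your proof is correct and follows essentially the same approach as the paper: both complete $\P_4$ into a 4-cycle packing containing $\M^*$, use optimality of $\C_4$ among such packings, and then combine Lemma~\ref{useful} with $w(\M^*)\ge \tfrac12 w(\C^*_4)$ to get $w(\P_4)\ge \tfrac34 w(\C^*_4)$. The paper compresses that last step by citing the proof of Theorem~\ref{tm5}, but the content is identical.
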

\begin{proof}
Obtain a $4$-cycle packing $\C'_4$ such that $C'_i=u_ix_iy_iz_iu_i\in\C'_4$ for every 4-path $u_ix_iy_iz_i\in\P_4$. Then, we can get $w(\C'_4)=w(\P_4)+w(\overline{\P_4})$. Since $\C'_4$ contains the edges of $\M^*$ and $\C_4$ is the maximum weight 4-cycle packing containing the edges of $\M^*$, we have $w(\C_4)\geq w(\C'_4)$. 
Recall that $w(\P_4)\geq \frac{1}{2}w(\M^*)+\frac{1}{2}w(\C^*_4)$. Moreover, by the proof of Theorem~\ref{tm5}, we have $w(\P_4)\geq \frac{3}{4}w(\C^*_4)$. So, we have
\[
w(\C_4)\geq w(\C'_4)\geq \frac{3}{4}w(\C^*_4)+w(\overline{\P_4}).
\]
\end{proof}

\begin{lemma}\label{lb7}
$w(\C_4)\geq w(\C^*_4)-2w(\overline{\P_4})$.
\end{lemma}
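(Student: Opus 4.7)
The plan is to exhibit a specific $4$-cycle packing $\C''_4$ that contains $\M^*$, and then bound $w(\C''_4)$ from below directly. Since $\C_4$ is the maximum-weight $4$-cycle packing containing $\M^*$ (by the optimality of $\M^{**}_{n/4}$ in Alg.7), any such $\C''_4$ will satisfy $w(\C_4)\geq w(\C''_4)$, so it suffices to produce one with $w(\C''_4)\geq w(\C^*_4)-2w(\overline{\P_4})$.

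The construction I would use is the ``twisted'' closure of $\P_4$: for each $4$-path $P_i=u_ix_iy_iz_i\in\P_4$, form the $4$-cycle $C''_i=u_ix_iz_iy_iu_i$ on the same vertex set, whose edge set is $\{u_ix_i,\,x_iz_i,\,z_iy_i,\,y_iu_i\}$. The $\M^*$-edges $u_ix_i$ and $y_iz_i$ appear as non-adjacent edges of $C''_i$, so $\C''_4=\{C''_i\}_{i=1}^{n/4}$ is indeed a $4$-cycle packing containing $\M^*$, and its weight decomposes as $w(C''_i)=w(\M^*\cap P_i)+w(x_i,z_i)+w(u_i,y_i)$.

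The key step will be a two-fold use of the triangle inequality to lower-bound the two ``new'' edges of $C''_i$: from the triangle on $\{u_i,x_i,z_i\}$ one gets $w(x_i,z_i)\geq w(u_i,x_i)-w(u_i,z_i)$, and from the triangle on $\{y_i,z_i,u_i\}$ one gets $w(u_i,y_i)\geq w(y_i,z_i)-w(u_i,z_i)$. Adding these and substituting into the weight of $C''_i$ yields $w(C''_i)\geq 2w(\M^*\cap P_i)-2w(u_i,z_i)$. Summing over $i$ then gives $w(\C''_4)\geq 2w(\M^*)-2w(\overline{\P_4})$, and invoking Lemma~\ref{lb2} (which asserts $2w(\M^*)\geq w(\C^*_4)$ for even $k$, since $\C^*_4$ decomposes into two matchings of size $n/2$) completes the proof.

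I do not expect any serious obstacle beyond spotting the correct construction: the natural closure of $\P_4$ used in Lemma~\ref{lb6} is tight when $w(\overline{\P_4})$ is small, while the twisted closure $\C''_4$ described here is complementary and becomes informative precisely when $w(\overline{\P_4})$ is large. Combining the two inequalities with weights $2/3$ and $1/3$ then yields the $5/6$-approximation ratio in the next theorem.
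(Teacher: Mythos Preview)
Your proof is correct and essentially identical to the paper's own argument: the same twisted $4$-cycle packing $\C''_4=\{u_ix_iz_iy_iu_i\}$ is constructed, the same triangle-inequality bound $w(C''_i)\geq 2(w(u_i,x_i)+w(y_i,z_i))-2w(u_i,z_i)$ is derived (the paper phrases the two triangle inequalities as a single combined inequality, but the content is the same), and the same final appeal to $w(\M^*)\geq\tfrac{1}{2}w(\C^*_4)$ is made. Your remark about combining Lemmas~\ref{lb6} and~\ref{lb7} with weights $2/3$ and $1/3$ to obtain the $5/6$ ratio also matches exactly what the paper does in Theorem~\ref{tm6}.
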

\begin{proof}
Obtain a $4$-cycle packing $\C''_4$ such that $C''_i=u_ix_iz_iy_iu_i\in\C''_4$ for every 4-path $u_ix_iy_iz_i\in\P_4$. Then, we can get 
\[
w(C''_i)=w(u_i,x_i)+w(x_i,z_i)+w(z_i,y_i)+w(y_i,u_i)\geq 2(w(u_i,x_i)+w(z_i,y_i))-2w(u_i,z_i)
\]
since $w(x_i,z_i)+w(y_i,u_i)+2w(u_i,z_i)\geq w(u_i,x_i)+w(z_i,y_i)$ by the triangle inequality. Recall that $\M^*$ represents the edge set $\{u_ix_i,y_iz_i\}_{i=1}^{n/4}$ and $\overline{\P_4}$ is the edge set $\{u_iz_i\}_{i=1}^{n/4}$. Then, we can get $w(\C''_4)\geq 2w(\M^*)-2w(\overline{\P_4})$. Since $\C''_4$ contains the edges of $\M^*$ and $\C_4$ is the maximum weight 4-cycle packing containing the edges of $\M^*$, we have $w(\C_4)\geq w(\C''_4)$. Recall that $w(\M^*)\geq \frac{1}{2}w(\C^*_4)$ by Lemma~\ref{lb3}. So, we have 
\[
w(\C_4)\geq w(\C''_4)\geq w(\C^*_4)-2w(\overline{\P_4}).
\]
\end{proof}

\begin{theorem}\label{tm6}
For metric 4CP, Alg.7 is a polynomial-time $5/6$-approximation algorithm.
\end{theorem}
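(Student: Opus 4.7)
The plan is to combine Lemmas~\ref{lb6} and \ref{lb7} by a convex combination that eliminates the $w(\overline{\P_4})$ term. Concretely, Lemma~\ref{lb6} gives a bound that improves as $w(\overline{\P_4})$ grows, while Lemma~\ref{lb7} gives a bound that degrades as $w(\overline{\P_4})$ grows; since these bounds go in opposite directions with respect to $w(\overline{\P_4})$, some weighted average will yield a bound depending only on $w(\C^*_4)$.

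First I would write down the two inequalities
\[
w(\C_4) \;\geq\; \tfrac{3}{4}\,w(\C^*_4) + w(\overline{\P_4}) \qquad \text{and} \qquad w(\C_4) \;\geq\; w(\C^*_4) - 2\,w(\overline{\P_4}).
\]
Next I would multiply the first by $\tfrac{2}{3}$ and the second by $\tfrac{1}{3}$ (these coefficients are forced by the requirement that the coefficients of $w(\overline{\P_4})$ cancel: $\tfrac{2}{3}\cdot 1 + \tfrac{1}{3}\cdot(-2) = 0$) and add. The resulting inequality is
\[
w(\C_4) \;\geq\; \Bigl(\tfrac{2}{3}\cdot\tfrac{3}{4} + \tfrac{1}{3}\Bigr)\,w(\C^*_4) \;=\; \tfrac{5}{6}\,w(\C^*_4),
\]
which is exactly the claimed bound. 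Finally I would observe that every step of Alg.7 runs in polynomial time: Step~1 is a maximum weight matching computation, Step~2 builds a multigraph in linear time, and Step~3 is another maximum weight matching; this, together with the inequality above, finishes the proof.

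There is no real obstacle here since the heavy lifting is already done by Lemmas~\ref{lb6} and \ref{lb7}; the only thing to notice is that the two bounds have opposite dependence on $w(\overline{\P_4})$, which allows one to eliminate that unknown quantity. It is also worth pointing out (for emphasis, not as part of the proof) that the ratio $5/6$ is the best one can extract from these two particular bounds: the worst case is exactly when $w(\overline{\P_4}) = \tfrac{1}{12}w(\C^*_4)$, so that both bounds coincide and equal $\tfrac{5}{6}w(\C^*_4)$.
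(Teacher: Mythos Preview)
Your proof is correct and is essentially identical to the paper's own argument: the paper also combines Lemmas~\ref{lb6} and~\ref{lb7} with weights $2$ and $1$ (equivalently your $\tfrac{2}{3}$ and $\tfrac{1}{3}$) so that the $w(\overline{\P_4})$ terms cancel, yielding $3\,w(\C_4)\geq \tfrac{5}{2}\,w(\C^*_4)$.
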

\begin{proof}
By Lemmas~\ref{lb6} and \ref{lb7}, we can get 
\begin{align*}
3w(\C_4)&\geq 2\cdot\lrA{\frac{3}{4}w(\C^*_4)+w(\overline{\P_4})}+1\cdot\lrA{w(\C^*_4)-2w(\overline{\P_4})}\\
&=\frac{3}{2}w(\C^*_4)+w(\C^*_4)=\frac{5}{2}w(\C^*_4).    
\end{align*}
Therefore, we have $w(\C_4)\geq\frac{5}{6}w(\C^*_4)$, and the approximation ratio is $5/6$.
\end{proof}

Figure~\ref{fig05} shows a tight example, where there are $n=8$ vertices, each gray edge has a weight of $1$, each red edge has a weight of $2$, each blue edge has a weight of $3$, and each orange edge has a weight of $4$. One can easily verify that the graph is a metric graph. Note that the maximum weight 4-cycle packing $\{v_2v_3v_5v_4v_2,v_6v_7v_1v_8v_6\}$ has a weight of 24. The algorithm may find a maximum weight matching $\M^*=\{v_1v_2,v_3v_4,v_5v_6,v_7v_8\}$ of size $n/2$ with a weight of 12, and get a maximum weight 4-cycle packing $\C_4=\{v_1v_2v_5v_6v_1, v_3v_4v_7v_8v_3\}$ containing the edges of $\M^*$. Then, the weight of $\C_4$ is $20$, and the approximation ratio is $20/24=5/6$.

\begin{figure}[ht]
\centering
\begin{tikzpicture}[scale=0.85]
\filldraw[black]
(-0.5,1.5) circle [radius=2pt]
(-1.5,0.5) circle [radius=2pt]
(-0.5,-1.5) circle [radius=2pt]
(-1.5,-0.5) circle [radius=2pt]
(0.5,1.5) circle [radius=2pt]
(1.5,0.5) circle [radius=2pt]
(0.5,-1.5) circle [radius=2pt]
(1.5,-0.5) circle [radius=2pt];

\node (up) at (-0.5,1.75) {\small $v_{1}$};

\node (up) at (0.5,1.75) {\small $v_{2}$};
\node (down) at (-0.5,-1.75) {\small $v_{6}$};
\node (down) at (0.5,-1.75) {\small $v_{5}$};
\node (left) at (-1.75,0.5) {\small $v_{8}$};
\node (left) at (-1.75,-0.5) {\small $v_{7}$};
\node (right) at (1.75,0.5) {\small $v_{3}$};
\node (right) at (1.75,-0.5) {\small $v_{4}$};

\node[orange] (up) at (0,1.75) {\small $4$};
\node[orange] (up) at (0,-1.75) {\small $4$};
\node[red] (left) at (1.75,0) {\small $2$};
\node[red] (left) at (-1.75,0) {\small $2$};
\node[gray] at (0,0.75) {\small $1$};
\node[gray] at (0,-0.75) {\small $1$};

\node[blue] (northeast) at (1.176,1.176) {\small $3$};
\node[blue] (southeast) at (1.176,-1.176) {\small $3$};
\node[blue] (southwest) at (-1.176,-1.176) {\small $3$};
\node[blue] (northwest) at (-1.176,1.176) {\small $3$};

\draw[very thick,orange] (-0.5,1.5) to (0.5,1.5);
\draw[very thick,orange] (-0.5,-1.5) to (0.5,-1.5);

\draw[very thick,blue] (1.5,0.5) to (0.5,1.5);
\draw[very thick,blue] (1.5,-0.5) to (0.5,-1.5);
\draw[very thick,blue] (-1.5,0.5) to (-0.5,1.5);
\draw[very thick,blue] (-1.5,-0.5) to (-0.5,-1.5);

\draw[very thick,gray] (-0.5,1.5) to (1.5,0.5);
\draw[very thick,gray] (-0.5,1.5) to (1.5,-0.5);
\draw[very thick,blue] (-1.5,-0.5) to (-0.5,1.5);
\draw[very thick,gray] (0.5,1.5) to (-1.5,0.5);
\draw[very thick,gray] (0.5,1.5) to (-1.5,-0.5);
\draw[very thick,blue] (1.5,-0.5) to (0.5,1.5);

\draw[very thick,gray] (-0.5,-1.5) to (1.5,0.5);
\draw[very thick,gray] (-0.5,-1.5) to (1.5,-0.5);
\draw[very thick,blue] (-0.5,-1.5) to (-1.5,0.5);
\draw[very thick,gray] (0.5,-1.5) to (-1.5,0.5);
\draw[very thick,gray] (0.5,-1.5) to (-1.5,-0.5);
\draw[very thick,blue] (0.5,-1.5) to (1.5,0.5);

\draw[very thick,red] (-1.5,0.5) to (-1.5,-0.5);
\draw[very thick,red] (1.5,0.5) to (1.5,-0.5);
\draw[very thick,red] (-0.5,1.5) to (-0.5,-1.5);
\draw[very thick,red] (-0.5,1.5) to (0.5,-1.5);
\draw[very thick,red] (0.5,1.5) to (-0.5,-1.5);
\draw[very thick,red] (0.5,1.5) to (0.5,-1.5);
\draw[very thick,red] (-1.5,0.5) to (1.5,0.5);
\draw[very thick,red] (-1.5,0.5) to (1.5,-0.5);
\draw[very thick,red] (-1.5,-0.5) to (1.5,0.5);
\draw[very thick,red] (-1.5,-0.5) to (1.5,-0.5);
\end{tikzpicture}
\caption{A tight example of the $5/6$-approximation algorithm for metric 4CP, where each gray edge has a weight of $1$, each red edge has a weight of $2$, each blue edge has a weight of $3$, and each orange edge has a weight of $4$}
\label{fig05}
\end{figure}
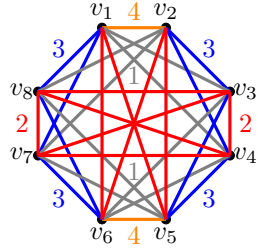

On $\{1,2\}$-weighted graphs we may obtain a better approximation ratio. 

\begin{theorem}\label{tm6+}
For metric 4CP on $\{1,2\}$-weighted graphs, Alg.7 is a polynomial-time $7/8$-approximation algorithm.
\end{theorem}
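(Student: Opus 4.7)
The plan is to exploit the numerical structure of $\{1,2\}$-weighted metric graphs to sharpen Lemma~\ref{lb6} directly into a $7/8$ bound, without even invoking Lemma~\ref{lb7}. The key observation will be that on such graphs both $w(\overline{\P_4})$ (a sum over $n/4$ closing edges) and $w(\C^*_4)$ (a sum over $n$ cycle edges) are pinned between explicit multiples of $n$, so the slack term $w(\overline{\P_4})$ that appears in Lemma~\ref{lb6} is already guaranteed to contribute at least $\tfrac{1}{8}w(\C^*_4)$ on top of the $\tfrac{3}{4}w(\C^*_4)$ baseline.

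Concretely, I would first note that since every edge has weight at least $1$ and $\overline{\P_4}$ consists of exactly $n/4$ closing edges (one per 4-path $u_ix_iy_iz_i \in \P_4$), we have $w(\overline{\P_4}) \geq n/4$. Symmetrically, because $\C^*_4$ consists of exactly $n$ edges and each has weight at most $2$, we have $w(\C^*_4) \leq 2n$. Combining these two inequalities yields
\[
w(\overline{\P_4}) \;\geq\; \frac{n}{4} \;\geq\; \frac{1}{8}\, w(\C^*_4).
\]

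Finally, I would plug this into Lemma~\ref{lb6} to obtain
\[
w(\C_4) \;\geq\; \frac{3}{4}\, w(\C^*_4) + w(\overline{\P_4}) \;\geq\; \frac{3}{4}\, w(\C^*_4) + \frac{1}{8}\, w(\C^*_4) \;=\; \frac{7}{8}\, w(\C^*_4),
\]
which is the claimed ratio. There is essentially no obstacle here beyond noticing that the restriction $w(e) \in \{1,2\}$ simultaneously lower-bounds $w(\overline{\P_4})/n$ and upper-bounds $w(\C^*_4)/n$; the "extra" term $w(\overline{\P_4})$ that was only used to trade against Lemma~\ref{lb7} in the general metric $5/6$ analysis becomes, on its own, large enough to lift the ratio from $3/4$ all the way to $7/8$.
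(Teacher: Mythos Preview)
Your proposal is correct and is essentially identical to the paper's own proof: the paper also invokes Lemma~\ref{lb6}, bounds $w(\overline{\P_4})\geq n/4$ and $w(\C^*_4)\leq 2n$ from the $\{1,2\}$ edge weights, deduces $w(\overline{\P_4})\geq \tfrac{1}{8}w(\C^*_4)$, and concludes $w(\C_4)\geq \tfrac{7}{8}w(\C^*_4)$.
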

\begin{proof}
By Lemma~\ref{lb6}, we have $w(\C_4)\geq \frac{3}{4}w(\C^*_4)+w(\overline{\P_4})$. Note that $\overline{\P_4}$ is a matching of size $n/4$. On $\{1,2\}$-weighted graphs, we have $w(\overline{\P_4})\geq n/4$ and $w(\C^*_4)\leq 2n$ since every edge in $\overline{\P_4}$ has a weight of at least 1 and every edge in $\C^*_4$ has a weight of at most 2. So, $w(\overline{\P_4})\geq \frac{1}{8}w(\C^*_4)$, and then $w(\C_4)\geq\frac{7}{8}w(\C^*_4)$. The approximation ratio is $7/8$.
\end{proof}

Note that by modifying the example in Figure~\ref{fig04} such that each solid edge has a weight of 2 and each omitted edge has a weight of 1 we can prove that the analysis is tight by a similar argument.

\subsection{Metric 4PP}\label{sec5.3}
At last, we will consider metric 4PP. Recall that we can get a 4-path packing $\P_4$ such that $w(\P_4)\geq \frac{1}{2}w(\M^*)+\frac{1}{2}w(\C^*_4)$ by Lemma~\ref{useful}. For metric 4PP, we will construct another 4-path packing $\P'_4$. The algorithm, denoted by Alg.8, is shown as follows.

\medskip
\noindent\textbf{Step~1.} Obtain a 4-path packing $\P_4$ such that $w(\P_4)\geq \frac{1}{2}w(\M^*)+\frac{1}{2}w(\C^*_4)$ using Alg.6.

\noindent\textbf{Step~2.} Obtain a maximum weight matching $\M^{**}_{n/4}$ of size $n/4$ in graph $G$. Note that there are also $n/2$ isolated vertices not covered by $\M^{**}_{n/4}$.

\noindent\textbf{Step~3.} Arbitrarily assign two isolated vertices $u_i,z_i$ for each edge $x_iy_i\in\M^{**}_{n/4}$. Assume w.l.o.g. that $w(u_i,x_i)+w(y_i,z_i)\geq w(z_i,x_i)+w(y_i,u_i)$.

\noindent\textbf{Step~4.} Obtain another 4-path packing $\P'_4$ by taking a $4$-path $u_ix_iy_iz_i$ for every edge $x_iy_i\in\M^{**}_{n/4}$ with the two isolated vertices $u_i,z_i$.
\medskip

Let $\C_4$ be the 4-cycle packing obtained by completing the maximum weight 4-path packing $\P^*_4$, i.e., for every 4-path $P_i=u_ix_iy_iz_i\in \P_4$, obtain a 4-cycle $C_i=u_ix_iy_iz_iu_i$. Then, let $\C_4=\P^*_4\cup\overline{\P^*_4}$. Moreover, let $\C_4=\M_1\cup\M_2$ such that $\M_1$ and $\M_2$ are two matchings of size $n/2$, and $\M_1\cap\overline{\P^*_4}=\emptyset$.
Obtain another 4-cycle packing $\C'_4$ such that for every 4-path $P_i=u_ix_iy_iz_i\in \P_4$ there is a 4-cycle $C'_i=u_ix_iz_iy_iu_i$ in $\C'_4$.

We have the following bounds.

\begin{lemma}\label{lb8}
$w(\P_4)\geq \max\{\frac{1}{2}w(\M_1)+\frac{1}{2}w(\P^*_4)+\frac{1}{2}w(\overline{\P^*_4}),\frac{3}{2}w(\M_1)-w(\overline{\P^*_4})\}$.
\end{lemma}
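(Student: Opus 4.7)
The plan is to prove the two bounds separately, both leveraging the machinery from Section~\ref{sec5.1} (Lemmas~\ref{mod4}, \ref{lb5}, and~\ref{useful}). For the first bound, I would observe that $\M_1$ is merely some matching of size $n/2$ (so $w(\M^*)\geq w(\M_1)$), and that $\P^*_4\cup\overline{\P^*_4}$ is merely some $4$-cycle packing (so $w(\C^*_4)\geq w(\P^*_4)+w(\overline{\P^*_4})$). Plugging these weakenings into Lemma~\ref{useful} immediately gives
\[
w(\P_4)\ \geq\ \tfrac{1}{2}w(\M^*)+\tfrac{1}{2}w(\C^*_4)\ \geq\ \tfrac{1}{2}w(\M_1)+\tfrac{1}{2}w(\P^*_4)+\tfrac{1}{2}w(\overline{\P^*_4}).
\]

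For the second bound, the key idea is to apply the structural argument underlying Lemmas~\ref{mod4} and~\ref{lb5} not to $\C_4=\P^*_4\cup\overline{\P^*_4}$ but to an auxiliary $4$-cycle packing $\C'_4=\{u_ix_iz_iy_iu_i\}_{i=1}^{n/4}$, obtained by rerouting each $4$-path $u_ix_iy_iz_i\in\P^*_4$ through the diagonals $x_iz_i$ and $u_iy_i$ in place of the chord $u_iz_i$. The two matching decompositions of $\C'_4$ are $\M_1$ and $\M_r'=\{x_iz_i,u_iy_i\}_i$, and the structural argument of Lemma~\ref{mod4} guarantees that one of them, call it $\M_b$, combined with $\M^*$ produces cycles of length divisible by $4$; as in the proof of Lemma~\ref{lb5}, this yields $w(\M^{**}_{n/4})\geq\tfrac{1}{2}w(\M_b)$.

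To handle both possibilities for $\M_b$ uniformly, I would use two triangle inequalities per index, $w(u_i,x_i)\leq w(u_i,z_i)+w(x_i,z_i)$ and $w(y_i,z_i)\leq w(y_i,u_i)+w(u_i,z_i)$, which after summing over $i$ give $w(\M_1)\leq 2w(\overline{\P^*_4})+w(\M_r')$. In particular $w(\M_r')\geq w(\M_1)-2w(\overline{\P^*_4})$, so in either case $w(\M_b)\geq w(\M_1)-2w(\overline{\P^*_4})$. Combining with $w(\M^*)\geq w(\M_1)$ yields $w(\P_4)=w(\M^*)+w(\M^{**}_{n/4})\geq w(\M_1)+\tfrac{1}{2}w(\M_1)-w(\overline{\P^*_4})=\tfrac{3}{2}w(\M_1)-w(\overline{\P^*_4})$.

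The main obstacle is identifying the right auxiliary cycle packing. Applying Lemma~\ref{mod4} to the natural $\C_4$ just reproves the first bound, so one must reroute through the diagonals: this is precisely what lets the alternative matching $\M_r'$ be compared with $\M_1$ via two triangle inequalities whose combined error is exactly $2w(\overline{\P^*_4})$. Verifying that $\C'_4$ is a valid $4$-cycle packing (vertex-disjoint $4$-cycles covering $V$) is immediate from the vertex-disjointness of the paths in $\P^*_4$.
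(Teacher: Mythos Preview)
Your first bound is identical to the paper's. For the second bound, your argument is correct but takes a more elaborate route than the paper. You re-invoke the structural coloring of Lemma~\ref{mod4} on the rerouted packing $\C'_4=\{u_ix_iz_iy_iu_i\}_i$ and then bound $w(\M^{**}_{n/4})$ directly, handling separately whichever of $\M_1$ or $\M_r'$ happens to be the blue matching. The paper instead stays entirely within Lemma~\ref{useful}: it simply observes that $\C'_4$ is \emph{some} $4$-cycle packing, so $w(\C^*_4)\geq w(\C'_4)$, and then bounds $w(\C'_4)\geq 2w(\M_1)-2w(\overline{\P^*_4})$ using exactly the same two triangle inequalities you wrote down (equivalently, $w(x_i,z_i)+w(y_i,u_i)+2w(u_i,z_i)\geq w(u_i,x_i)+w(z_i,y_i)$). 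Plugging this into $w(\P_4)\geq\frac{1}{2}w(\M^*)+\frac{1}{2}w(\C^*_4)$ together with $w(\M^*)\geq w(\M_1)$ gives the bound in one line. So the core inequality is the same; the difference is that the paper uses optimality of $\C^*_4$ to avoid re-running the structural lemma on $\C'_4$, while your approach needs that lemma to hold for an arbitrary $4$-cycle packing (which it does, since the coloring is purely combinatorial, though the paper only states it for $\C^*_4$). The paper's route is cleaner; yours is a valid alternative.
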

\begin{proof}
Firstly, we show that $w(\P_4)\geq\frac{1}{2}w(\M_1)+\frac{1}{2}w(\P^*_4)+\frac{1}{2}w(\overline{\P^*_4})$.
Recall that $w(\P_4)\geq \frac{1}{2}w(\M^*)+\frac{1}{2}w(\C^*_4)$, $w(\C^*_4)\geq w(\C_4)=w(\P^*_4)+w(\overline{\P^*_4})$, and $w(\M^*)\geq w(\M_1)$. Then, we have 
\[
w(\P_4)\geq \frac{1}{2}w(\M_1)+\frac{1}{2}w(\C_4)=\frac{1}{2}w(\M_1)+\frac{1}{2}w(\P^*_4)+\frac{1}{2}w(\overline{\P^*_4}).
\]

Then, we show that $w(\P_4)\geq\frac{3}{2}w(\M_1)-w(\overline{\P^*_4})$. Note that we can get a 4-cycle packing $\C'_4$ such that $C'_i=u_ix_iz_iy_iu_i$ for every 4-path $P_i=u_ix_iy_iz_i\in \P^*_4$. We have 
\[
w(C'_i)=w(u_i,x_i)+w(x_i,z_i)+w(z_i,y_i)+w(y_i,u_i)\geq 2(w(u_i,x_i)+w(z_i,y_i))-2w(u_i,z_i)
\]
since $w(x_i,z_i)+w(y_i,u_i)+2w(u_i,z_i)\geq w(u_i,x_i)+w(z_i,y_i)$ by the triangle inequality. Note that $\M_1$ represents the edge set $\{u_ix_i,y_iz_i\}_{i=1}^{n/4}$ and $\overline{\P^*_4}$ represents the edge set $\{x_iy_i\}_{i=1}^{n/4}$. Hence, we can get $w(\C'_4)\geq 2w(\M_1)-2w(\overline{\P^*_4})$. Recall that $w(\P_4)\geq \frac{1}{2}w(\M^*)+\frac{1}{2}w(\C^*_4)$, $w(\C^*_4)\geq w(\C'_4)$, and $w(\M^*)\geq w(\M_1)$. Then, we have 
\begin{align*}
w(\P_4)&\geq \frac{1}{2}w(\M_1)+\frac{1}{2}w(\C'_4)\\
&\geq \frac{1}{2}w(\M_1)+w(\M_1)-w(\overline{\P^*_4})=\frac{3}{2}w(\M_1)-w(\overline{\P^*_4}).    
\end{align*}
\end{proof}

\begin{lemma}\label{lb9}
$w(\P'_4)\geq 2w(\P^*_4)-2w(\M_1)$.
\end{lemma}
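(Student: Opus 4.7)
The plan is to show $w(\P'_4) \ge 2w(\M^{**}_{n/4})$ directly from the triangle inequality and the WLOG ordering, and then argue that $w(\M^{**}_{n/4}) \ge w(\P^*_4) - w(\M_1)$, which combined yields the claim.

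First, write the weight of $\P'_4$ explicitly. For each $i$, the $4$-path $u_i x_i y_i z_i$ contributes $w(x_i,y_i)+w(u_i,x_i)+w(y_i,z_i)$, and the middle edges $x_iy_i$ are exactly the edges of $\M^{**}_{n/4}$, so
\[
w(\P'_4) \;=\; w(\M^{**}_{n/4}) \;+\; \sum_{i=1}^{n/4}\bigl(w(u_i,x_i)+w(y_i,z_i)\bigr).
\]

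Second, I would prove that for each $i$, $w(u_i,x_i)+w(y_i,z_i) \ge w(x_i,y_i)$. By the triangle inequality applied to the triangles $u_ix_iy_i$ and $z_ix_iy_i$,
\[
2\,w(x_i,y_i) \;\le\; w(u_i,x_i)+w(u_i,y_i)+w(z_i,x_i)+w(z_i,y_i).
\]
By the WLOG assumption of Step~3 of Alg.8, $w(u_i,x_i)+w(y_i,z_i) \ge w(z_i,x_i)+w(y_i,u_i)$, so the left-hand side of the chosen pair is at least half of the four-term sum, namely
\[
w(u_i,x_i)+w(y_i,z_i) \;\ge\; \tfrac12\bigl[w(u_i,x_i)+w(u_i,y_i)+w(z_i,x_i)+w(z_i,y_i)\bigr] \;\ge\; w(x_i,y_i).
\]
Summing over $i$ gives $\sum_i \bigl(w(u_i,x_i)+w(y_i,z_i)\bigr) \ge w(\M^{**}_{n/4})$, hence $w(\P'_4) \ge 2\,w(\M^{**}_{n/4})$.

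Third, I would bound $w(\M^{**}_{n/4})$ in terms of $\P^*_4$ and $\M_1$. Writing the optimal paths as $P_i = a_ib_ic_id_i \in \P^*_4$, the decomposition $\C_4 = \M_1 \cup \M_2$ with $\M_1 \cap \overline{\P^*_4} = \emptyset$ forces $\M_1 = \{a_ib_i, c_id_i\}_{i=1}^{n/4}$, so the middle edges $\{b_ic_i\}_{i=1}^{n/4}$ form a matching of size $n/4$ with total weight $w(\P^*_4)-w(\M_1)$. Since $\M^{**}_{n/4}$ is the \emph{maximum} weight matching of size $n/4$,
\[
w(\M^{**}_{n/4}) \;\ge\; w(\P^*_4)-w(\M_1),
\]
and combining with $w(\P'_4) \ge 2\,w(\M^{**}_{n/4})$ yields $w(\P'_4) \ge 2\,w(\P^*_4) - 2\,w(\M_1)$, as required.

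The only subtle step is the second one: one has to notice that using the triangle inequality on both endpoints $u_i$ and $z_i$ of the matching edge $x_iy_i$, and then averaging via the WLOG ordering, gives a clean per-index inequality $w(u_i,x_i)+w(y_i,z_i) \ge w(x_i,y_i)$. Everything else is bookkeeping.
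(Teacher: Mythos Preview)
Your proof is correct and follows essentially the same approach as the paper: first establish $w(\P'_4)\ge 2\,w(\M^{**}_{n/4})$ via the triangle inequality combined with the WLOG ordering in Step~3 of Alg.8, then use that the middle edges of $\P^*_4$ (namely $\P^*_4\setminus\M_1$) form a matching of size $n/4$ to conclude $w(\M^{**}_{n/4})\ge w(\P^*_4)-w(\M_1)$. The per-index inequality $w(u_i,x_i)+w(y_i,z_i)\ge w(x_i,y_i)$ and its derivation are exactly what the paper uses.
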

\begin{proof}
For every 4-path $P'_i=u_ix_iy_iz_i\in\P'_4$, we have 
\[
w(P'_i)=w(u_i,x_i)+w(x_i,y_i)+w(y_i,z_i)\geq 2w(x_i,y_i)
\]
since $w(u_i,x_i)+w(y_i,z_i)\geq w(z_i,x_i)+w(y_i,u_i)$ by Step 3 of the algorithm, and then $w(u_i,x_i)+w(y_i,z_i)\geq \frac{1}{2}(w(u_i,x_i)+w(y_i,z_i)+w(z_i,x_i)+w(y_i,u_i))\geq w(x_i,y_i)$ by the triangle inequality. Note that $\M^{**}_{n/4}$ represents the edge set $\{x_iy_i\}_{i=1}^{n/4}$. Then, we have $w(\P'_4)\geq 2w(\M^{**}_{n/4})$. Since $\C_4=\M_1\cup\M_2$ and $\M_1\cap\overline{\P^*_4}=\emptyset$, we know that $\P^*_4\setminus\M_1$ is a matching of size $n/4$. Recall that $\M^{**}_{n/4}$ is the maximum weight matching of size $n/4$. We can get $w(\M^{**}_{n/4})\geq w(\P^*_4)-w(\M_1)$. Then, we have 
\[
w(\P'_4)\geq 2w(\M^{**}_{n/4})\geq 2w(\P^*_4)-2w(\M_1).
\]
\end{proof}

\begin{theorem}\label{tm7}
For metric 4PP, there is a polynomial-time $14/17$-approximation algorithm.
\end{theorem}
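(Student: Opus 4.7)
The plan is to return the heavier of the two $4$-path packings $\P_4$ and $\P'_4$ produced by Alg.8 and argue that its weight is at least $(14/17)\, w(\P^*_4)$. To streamline the bookkeeping I would abbreviate $a = w(\M_1)$, $b = w(\P^*_4)$, and $c = w(\overline{\P^*_4})$, so that Lemmas~\ref{lb8} and~\ref{lb9} provide three linear lower bounds: $w(\P_4) \geq (1/2)(a+b+c)$, $w(\P_4) \geq (3/2)a - c$, and $w(\P'_4) \geq 2b - 2a$.

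The heart of the argument is to combine these three bounds so that the right-hand side collapses onto $b$ alone. Since $\max\{w(\P_4), w(\P'_4)\}$ dominates any convex combination $\alpha\, w(\P_4) + (1-\alpha)\, w(\P'_4)$, and since $w(\P_4)$ in turn dominates any convex combination of its own two bounds, I would search for nonnegative weights $\lambda_1, \lambda_2, \lambda_3$ summing to $1$ to place on the three bounds (with $\lambda_1+\lambda_2$ playing the role of $\alpha$) such that the coefficients of $a$ and $c$ in the weighted sum both vanish. This is a $2\times 3$ linear system with one degree of freedom; maximizing the resulting coefficient of $b$ pins down $(\lambda_1, \lambda_2, \lambda_3) = (8/17,\, 4/17,\, 5/17)$ and yields a coefficient of $14/17$ on $b$.

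With these weights in hand, the verification is one line of arithmetic:
\[
\max\{w(\P_4), w(\P'_4)\} \geq \frac{12}{17}\, w(\P_4) + \frac{5}{17}\, w(\P'_4) \geq \frac{8}{17} \cdot \frac{a+b+c}{2} + \frac{4}{17}\lrA{\frac{3a}{2} - c} + \frac{5}{17}(2b - 2a) = \frac{14}{17}\, b.
\]
The main obstacle is really just identifying the correct coefficients, which amounts to solving a tiny LP; once they are known, the cancellation of the $a$ and $c$ terms is immediate and the $14/17$ ratio drops out. Polynomial runtime is clear since both Alg.6 (which builds $\P_4$) and maximum-weight matching (which is used to build $\P'_4$) run in polynomial time.
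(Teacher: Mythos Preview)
Your proposal is correct and essentially identical to the paper's own proof: both take $\max\{w(\P_4),w(\P'_4)\}$, bound it below by the convex combination with weights $12/17$ and $5/17$, and then split the $w(\P_4)$ contribution as $8/17$ times the first inequality of Lemma~\ref{lb8} plus $4/17$ times the second, together with $5/17$ times Lemma~\ref{lb9}, so that the $w(\M_1)$ and $w(\overline{\P^*_4})$ terms cancel and $\tfrac{14}{17}w(\P^*_4)$ remains. The only cosmetic difference is that the paper clears denominators by multiplying through by $17$, whereas you keep the coefficients normalized and add a sentence on how the weights were found.
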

\begin{proof}
The algorithm returns a better 4-path packing from $\P_4$ and $\P'_4$. By Lemmas~\ref{lb8} and \ref{lb9}, we have 
\begin{align*}
&17\cdot\max\{w(\P_4),w(\P'_4)\}\\
&\geq 12w(\P_4)+5w(\P'_4)\\
&\geq 8\cdot\lrA{\frac{1}{2}w(\M_1)+\frac{1}{2}w(\P^*_4)+\frac{1}{2}w(\overline{\P^*_4})}+4\cdot\lrA{\frac{3}{2}w(\M_1)-w(\overline{\P^*_4})}\\
&\quad\quad+5\cdot\lrA{2w(\P^*_4)-2w(\M_1)}\\
&=4w(\M_1)+4w(\P^*_4)+4w(\overline{\P^*_4})+6w(\M_1)-4w(\overline{\P^*_4})+10w(\P^*_4)-10w(\M_1)\\
&=14w(\P^*_4).
\end{align*}
Therefore, we have $\max\{w(\P_4),w(\P'_4)\}\geq\frac{14}{17}w(\P^*_4)$, and the approximation ratio is $14/17$.
\end{proof}

Note that the approximation ratio is achieved by doing a trade-off between two 4-path packings. A tight example may not be easy to construct. So, we will not consider the tightness of its analysis. We also remark that on $\{1,2\}$-weighted graphs the current-best approximation ratio is $9/10$~\cite{DBLP:journals/jda/MonnotT08}. 

\section{Approximation algorithms on $\{1,2\}$-weighted graphs}\label{sec7}
In this section, we study approximation algorithms on $\{1,2\}$-weighted graphs. 
Our motivation is that there are many studies on $\{0,1\}$-weighted graphs~\cite{DBLP:journals/dam/Bar-NoyPRV18,hassin2013local,DBLP:conf/soda/BermanK06} but a few on $\{1,2\}$-weighted graphs~\cite{DBLP:journals/jda/MonnotT08}. We show that we can reduce $k$CP/PP on $\{1,2\}$-weighted graph to $k$CP/PP on $\{0,1\}$-weighted graphs with some properties related to the approximation ratios.


\begin{theorem}\label{spec}
For $k$CP/$k$PP, given a $\rho$-approximation algorithm on $\{0,1\}$-weighted graphs, we can obtain an $(1+\rho)/2$-approximation algorithm on $\{1,2\}$-weighted graphs.
\end{theorem}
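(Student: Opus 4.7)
The plan is to reduce a $\{1,2\}$-weighted instance to a $\{0,1\}$-weighted instance by a uniform shift of the weight function, exploiting the fact that every feasible $k$-cycle (resp.\ $k$-path) packing uses exactly the same number of edges. Concretely, given input $(G,w)$ with $w:E\to\{1,2\}$, I would define $w'(e)=w(e)-1\in\{0,1\}$ and feed $(G,w')$ to the given $\rho$-approximation algorithm.

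The key structural observation is that if $S$ is any $k$-cycle packing then $|S|=n$, and if $S$ is any $k$-path packing then $|S|=n(k-1)/k$; call this common number $m$. Consequently, for every feasible packing $S$ we have the identity $w(S)=w'(S)+m$, so the optima coincide on the same packings, and in particular $\mathrm{OPT}'=\mathrm{OPT}-m$, where $\mathrm{OPT}$ and $\mathrm{OPT}'$ denote the optimal packing weights under $w$ and $w'$, respectively.

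Let $S$ be the packing returned by running the $\rho$-approximation algorithm on $(G,w')$, so $w'(S)\ge \rho\cdot\mathrm{OPT}'$. Translating back via the shift,
\[
w(S)=w'(S)+m\ \ge\ \rho(\mathrm{OPT}-m)+m\ =\ \rho\cdot\mathrm{OPT}+(1-\rho)m.
\]
Since every edge under $w$ has weight at most $2$, the optimal packing satisfies $\mathrm{OPT}\le 2m$, i.e.\ $m\ge \mathrm{OPT}/2$. Plugging this in gives
\[
w(S)\ \ge\ \rho\cdot\mathrm{OPT}+(1-\rho)\cdot\frac{\mathrm{OPT}}{2}\ =\ \frac{1+\rho}{2}\cdot\mathrm{OPT},
\]
which is exactly the claimed approximation guarantee.

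There is no real obstacle here; the proof is essentially a bookkeeping argument. The only subtlety worth mentioning explicitly is that the shift $w\mapsto w-1$ preserves the set of \emph{feasible} packings (which is purely combinatorial, depending only on the vertex set) and shifts every feasible packing's weight by exactly the same constant $m$, which is why the approximation ratio can be cleanly transferred. I would also note that the same argument works uniformly for both $k$CP and $k$PP, with the only difference being the value of $m$.
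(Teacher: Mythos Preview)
Your proof is correct and follows essentially the same approach as the paper: shift the weights by $-1$, exploit that every feasible packing has the same number $m$ of edges so that $w(S)=w'(S)+m$ and $\mathrm{OPT}'=\mathrm{OPT}-m$, and then use $\mathrm{OPT}\le 2m$ (the paper writes this as $n-w(\C'^{*}_k)\ge 0$ in the $k$CP case) to convert the $\rho$-guarantee into $(1+\rho)/2$. Your presentation is slightly cleaner in that you parametrize by $m$ to handle $k$CP and $k$PP uniformly, whereas the paper argues for $k$CP with $m=n$ and remarks that the analysis extends to $k$PP.
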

\begin{proof}
We only consider $k$CP, and the analysis can be extended to $k$PP.

The reduction is shown as follows. Given an $\{1,2\}$-weighted graph $G$, we first obtain a $\{0,1\}$-weighted graph $G'$ by regarding each 1-weighted edge in $G$ as a 0-weighted edge and each 2-weighted edge in $G$ as an 1-weighted edge. Then, we call a $\rho$-approximation algorithm for $k$CP on $G'$ to obtain a $k$-cycle packing $\C'_k$, which corresponds to a $k$-cycle packing $\C_k$ in $G$. 
Note that 
$
w(\C_k)=w(\C'_k)+n.
$
It is easy to see that the reduction takes polynomial time. 
Next, we analyze the quality of $\C_k$.

Let $\C^{*}_k$ (resp., $\C'^{*}_k$) denote the optimal solution of $k$CP on $G$ (resp., $G'$). Similarly, we have $w(\C^{*}_k)=w(\C'^{*}_k)+n$.
Then, the number of 1-weighted edges in $\C'^{*}_k$ is $w(\C'^{*}_k)$, and the number of 0-weighted edges in $\C'^{*}_k$ is $n-w(\C'^{*}_k)$. Hence, we have 
$
n-w(\C'^{*}_k)\geq 0.
$
Since $\C'_k$ is a $\rho$-approximate solution on $G'$, we have $w(\C'_k)\geq \rho \cdot w(\C'^{*}_k)$. We can get
\begin{align*}
\frac{w(\C_k)}{w(\C^*_k)}&=\frac{w(\C'_k)+n}{w(\C'^*_k)+n}\\
&\geq\frac{\rho\cdot w(\C'^{*}_k)+n}{w(\C'^*_k)+n}=\frac{(1+\rho)\cdot w(\C'^{*}_k)+n-w(\C'^{*}_k)}{2w(\C'^*_k)+n-w(\C'^{*}_k)}\geq \frac{1+\rho}{2},    
\end{align*}
where the last inequality follows from $n-w(\C'^{*}_k)\geq 0$ and $1\geq\rho\geq 0$.

Therefore, we can get an $(1+\rho)/2$-approximate $k$-cycle packing $\C_k$ on $G$ in polynomial time.
\end{proof}

For 3CP on $\{0,1\}$-weighted graphs, there is a $3/5$-approximation algorithm~\cite{DBLP:journals/dam/Bar-NoyPRV18}. By Theorem~\ref{spec}, we can directly get a $4/5$-approximation algorithm for 3CP on $\{1,2\}$-weighted graphs. We show that based on the property of their algorithm, we can get an approximation ratio of $9/11$.

\begin{theorem}
For 3CP on $\{1,2\}$-weighted graphs, there is a $9/11$-approximation algorithm.
\end{theorem}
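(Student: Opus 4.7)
The plan is to strengthen the generic reduction of Theorem~\ref{spec} by coupling the Bar-Noy--Peleg--Rawitz--Vainstein $3/5$-approximation for 3CP on $\{0,1\}$-weighted graphs with a complementary lower bound extracted from the internal structure of that algorithm. Let $G$ be the input $\{1,2\}$-weighted graph and let $G'$ be the $\{0,1\}$-weighted graph obtained by subtracting $1$ from every edge weight, exactly as in the proof of Theorem~\ref{spec}. Write $x:=w(\C'^{*}_3)\in[0,n]$, so that $w(\C^*_3)=n+x$. Running Bar-Noy et al.\ on $G'$ produces a $3$-cycle packing $\C'_3$ with $w(\C'_3)\geq\tfrac{3}{5}x$, which lifts to a packing $\C_3$ of $G$ with $w(\C_3)=n+w(\C'_3)\geq n+\tfrac{3}{5}x$.

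A short calculation shows that this first bound alone already achieves the target ratio on a large range: one checks
\[
\frac{n+\tfrac{3}{5}x}{n+x}\ \geq\ \frac{9}{11}\quad\Longleftrightarrow\quad x\ \leq\ \tfrac{5n}{6},
\]
with equality at the break-even point $x=\tfrac{5n}{6}$, where $w(\C_3)=\tfrac{3n}{2}=\tfrac{9}{11}\cdot\tfrac{11n}{6}=\tfrac{9}{11}\,w(\C^*_3)$. So the only regime left to handle is $x>\tfrac{5n}{6}$, in which the optimum in $G$ is dominated by weight-$2$ edges and, equivalently, the $1$-edge subgraph of $G'$ admits a near-perfect triangle packing.

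For this regime the plan is to derive a second lower bound of the form $w(\C_3)\geq\tfrac{3n}{2}+\lambda\bigl(x-\tfrac{5n}{6}\bigr)$ for some $\lambda>0$, by exposing an internal matching- or triangle-packing certificate of the Bar-Noy--Peleg--Rawitz--Vainstein algorithm whose weight grows with $x$. Each unit of that certificate corresponds to a weight-$2$ edge of $G$ that can be completed to a triangle of weight at least $5$ rather than the baseline $3$, so a certificate of weight $\Omega(x-\tfrac{5n}{6})$ buys exactly the slack needed. Combining the two bounds via a two-regime case split that meets continuously at $x=\tfrac{5n}{6}$ then yields the $9/11$ ratio uniformly in $x\in[0,n]$.

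The main obstacle is precisely this second bound: the generic reduction of Theorem~\ref{spec} throws away everything about Bar-Noy--Peleg--Rawitz--Vainstein beyond the $3/5$ ratio, so improving from $4/5$ to $9/11$ requires opening up the algorithm's internal workings (most naturally, the matching or local-search structure that supports its analysis) and faithfully translating that structure across the $\{1,2\}\mapsto\{0,1\}$ reduction. Once that structural bound is in hand, the remainder is the routine case split sketched above.
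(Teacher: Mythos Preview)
Your proposal is a plan, not a proof: you yourself name the missing piece (``The main obstacle is precisely this second bound'') and never supply it. Everything up to the break-even computation at $x=5n/6$ is correct, but the hard case $x>5n/6$ is left entirely to the promissory phrase ``once that structural bound is in hand.''

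The paper fills exactly this gap, and it does so not via the single aggregate parameter $x=w(\C'^{*}_3)$ but via the finer decomposition $n_i=$ number of triangles of weight $i$ in $\C'^{*}_3$, $i\in\{0,1,2,3\}$. From the analysis in Bar-Noy--Peleg--Rawitz--Vainstein one extracts two inequalities,
\[
w(\C'_3)\ \geq\ 2n_3+n_2+n_1 \qquad\text{and}\qquad w(\C'_3)\ \geq\ \tfrac{3}{4}\bigl(2n_3+2n_2+n_1\bigr),
\]
the second coming from the $3/4$-approximation for $3$PP applied to the $3$-path packing obtained by deleting the lightest edge of every optimal triangle. Neither inequality dominates the generic $\tfrac{3}{5}x$ bound uniformly, so a case split on $x$ alone is not the natural move; instead one takes the convex combination with weights $\tfrac{9}{11}$ and $\tfrac{2}{11}$ and checks directly that the resulting lower bound on $w(\C_3)=n+w(\C'_3)$ is at least $\tfrac{9}{11}(n+3n_3+2n_2+n_1)=\tfrac{9}{11}\,w(\C^*_3)$.

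Two further remarks. First, your hoped-for inequality $w(\C_3)\geq\tfrac{3n}{2}+\lambda(x-\tfrac{5n}{6})$ cannot be established by any argument that sees only $x$: with $n_3=n_2=n/6$ and $n_1=n_0=0$ one has $x=5n/6$ and both of the displayed bounds are tight at $n/2$, so the ratio is exactly $9/11$; the right parameterisation is by the tuple $(n_0,n_1,n_2,n_3)$. Second, your sketch of the certificate (``each unit \ldots\ can be completed to a triangle of weight at least $5$'') is off: in a $\{1,2\}$-graph a triangle containing a single weight-$2$ edge has weight at least $4$, not $5$, and in any case the actual argument never constructs such triangles explicitly but works purely with the two algebraic bounds above.
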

\begin{proof}
We use $G$ to denote an $\{1,2\}$-weighted graph, and $G'$ to denote the $\{0,1\}$-weighted graph obtained by the reduction in the proof of Theorem~\ref{spec}. Let $\C^{*}_3$ (resp., $\C'^{*}_3$) denote the optimal solution of 3CP on $G$ (resp., $G'$), and $n_i$ denote the number of $i$-weighted 3-cycles in $\C'^{*}_3$, where $i\in\{0,1,2,3\}$. Note that $n/3= n_3+n_2+n_1+n_0$ and $w(\C'^*_3)=3n_3+2n_2+n_1$.

We use the $3/5$-approximation algorithm~\cite{DBLP:journals/dam/Bar-NoyPRV18} to get a 3-cycle packing $\C'_3$ on $G'$. Bar-Noy et al.~\cite{DBLP:journals/dam/Bar-NoyPRV18} proved that
\[
w(\C'_3)\geq \max\{2n_3+n_2+n_1,\ \delta\cdot(2n_3+2n_2+n_3)\},
\]
where $\delta=3/4$ is the approximation ratio of 3PP on $\{0,1\}$-weighted graphs~\cite{DBLP:journals/dam/Bar-NoyPRV18}.

By the proof of Theorem~\ref{spec}, the obtained 3-cycle packing $\C_3$ (by the reduction) with respect to $\C'_3$ has a weight of $w(\C_3)=w(\C'_3)+n$, and it holds that $w(\C^{*}_3)=w(\C'^{*}_3)+n$. Hence, we have
\begin{align*}
w(\C_3)&=w(\C'_3)+n\\
&\geq\max\lrC{2n_3+n_2+n_1,\ \frac{3}{4}\cdot(2n_3+2n_2+n_3)}+n\\
&\geq\frac{9}{11}(2n_3+n_2+n_1)+\frac{2}{11}\lrA{\frac{3}{2}n_3+\frac{3}{2}n_2+\frac{3}{4}n_3}+n\\
&=\frac{21}{11}n_3+\frac{12}{11}n_2+\frac{21}{22}n_1+n\\
&=\frac{21}{11}n_3+\frac{12}{11}n_2+\frac{21}{22}n_1+(3n_3+3n_2+3n_1)+(n-3n_3-3n_2-3n_1)\\
&=\frac{54}{11}n_3+\frac{45}{11}n_2+\frac{87}{22}n_1+(n-3n_3-3n_2-3n_1)\\
&\geq\frac{54}{11}n_3+\frac{45}{11}n_2+\frac{36}{11}n_1+(n-3n_3-3n_2-3n_1).
\end{align*}
Recall that $w(\C^{*}_3)=w(\C'^{*}_3)+n$ and $w(\C'^*_3)=3n_3+2n_2+n_1$. We can get $w(\C^{*}_3)=3n_3+2n_2+n_1+n=6n_3+5n_2+4n_1+(n-3n_3-3n_2-3n_1)$. Then, we have
\[
\frac{w(\C_3)}{w(\C^*_3)}\geq\frac{\frac{54}{11}n_3+\frac{45}{11}n_2+\frac{36}{11}n_1+(n-3n_3-3n_2-3n_1)}{6n_3+5n_2+4n_1+(n-3n_3-3n_2-3n_1)}\geq\frac{9}{11},
\]
where the last inequality follows from $n=3n_3+3n_2+3n_1+3n_0\geq 3n_3+3n_2+3n_1$.
\end{proof}

\section{Conclusion}\label{sec6}
In this paper, we consider approximation algorithms for metric/general $k$CP and $k$PP. Most of our results are based on simple algorithms but with deep analysis.
In the future, it would be interesting to improve these approximation ratios, even on $\{0,1\}$-weighted or $\{1,2\}$-weighted graphs. One challenging direction is to design better algorithms for metric/general 3CP and 3PP.

\bibliographystyle{plain}
\bibliography{main}

\begin{thebibliography}{10}

\bibitem{arkin1998local}
Esther~M Arkin and Refael Hassin.
\newblock On local search for weighted k-set packing.
\newblock {\em Mathematics of Operations Research}, 23(3):640--648, 1998.

\bibitem{DBLP:journals/dam/Bar-NoyPRV18}
Amotz Bar{-}Noy, David Peleg, George Rabanca, and Ivo Vigan.
\newblock Improved approximation algorithms for weighted 2-path partitions.
\newblock {\em Discrete Applied Mathematics}, 239:15--37, 2018.

\bibitem{DBLP:journals/njc/Berman00}
Piotr Berman.
\newblock A $d/2$ approximation for maximum weight independent set in d-claw
  free graphs.
\newblock {\em Nordic Journal of Computing}, 7(3):178--184, 2000.

\bibitem{DBLP:conf/soda/BermanK06}
Piotr Berman and Marek Karpinski.
\newblock 8/7-approximation algorithm for (1, 2)-tsp.
\newblock In {\em Proceedings of the Seventeenth Annual {ACM-SIAM} Symposium on
  Discrete Algorithms, {SODA} 2006, Miami, Florida, USA, January 22-26, 2006},
  pages 641--648. {ACM} Press, 2006.

\bibitem{DBLP:journals/jco/ChenCLWZ21}
Yong Chen, Zhi{-}Zhong Chen, Guohui Lin, Lusheng Wang, and An~Zhang.
\newblock A randomized approximation algorithm for metric triangle packing.
\newblock {\em Journal of Combinatorial Optimization}, 41(1):12--27, 2021.

\bibitem{DBLP:journals/jco/ChenN07}
Zhi{-}Zhong Chen and Takayuki Nagoya.
\newblock Improved approximation algorithms for metric maxtsp.
\newblock {\em Journal of Combinatorial Optimization}, 13(4):321--336, 2007.

\bibitem{DBLP:journals/dam/ChenTW09}
Zhi{-}Zhong Chen, Ruka Tanahashi, and Lusheng Wang.
\newblock An improved randomized approximation algorithm for maximum triangle
  packing.
\newblock {\em Discrete Applied Mathematics}, 157(7):1640--1646, 2009.

\bibitem{DBLP:journals/dam/ChenTW10}
Zhi{-}Zhong Chen, Ruka Tanahashi, and Lusheng Wang.
\newblock Erratum to ``an improved randomized approximation algorithm for
  maximum triangle packing" [discrete appl. math. 157 {(2009)} 1640-1646].
\newblock {\em Discrete Applied Mathematics}, 158(9):1045--1047, 2010.

\bibitem{DBLP:conf/ipco/DudyczMPR17}
Szymon Dudycz, Jan Marcinkowski, Katarzyna~E. Paluch, and Bartosz Rybicki.
\newblock A 4/5 - approximation algorithm for the maximum traveling salesman
  problem.
\newblock In {\em {IPCO} 2017}, volume 10328, pages 173--185. Springer, 2017.

\bibitem{gabow1974implementation}
Harold~Neil Gabow.
\newblock {\em Implementation of algorithms for maximum matching on
  nonbipartite graphs}.
\newblock PhD thesis, Stanford University, 1974.

\bibitem{hassin1997approximation}
Refael Hassin and Shlomi Rubinstein.
\newblock An approximation algorithm for maximum packing of 3-edge paths.
\newblock {\em Information Processing Letters}, 63(2):63--67, 1997.

\bibitem{DBLP:journals/ipl/HassinR02}
Refael Hassin and Shlomi Rubinstein.
\newblock A 7/8-approximation algorithm for metric max {TSP}.
\newblock {\em Information Processing Letters}, 81(5):247--251, 2002.

\bibitem{hassin2006approximation}
Refael Hassin and Shlomi Rubinstein.
\newblock An approximation algorithm for maximum triangle packing.
\newblock {\em Discrete Applied Mathematics}, 154(6):971--979, 2006.

\bibitem{DBLP:journals/dam/HassinR06a}
Refael Hassin and Shlomi Rubinstein.
\newblock Erratum to ``an approximation algorithm for maximum triangle
  packing": [discrete applied mathematics 154 {(2006)} 971-979].
\newblock {\em Discrete Applied Mathematics}, 154(18):2620, 2006.

\bibitem{hassin1997approximation1}
Refael Hassin, Shlomi Rubinstein, and Arie Tamir.
\newblock Approximation algorithms for maximum dispersion.
\newblock {\em Operations Research Letters}, 21(3):133--137, 1997.

\bibitem{hassin2013local}
Refael Hassin and Ohad Schneider.
\newblock A local search algorithm for binary maximum 2-path partitioning.
\newblock {\em Discrete Optimization}, 10(4):333--360, 2013.

\bibitem{KirkpatrickH78}
David~G. Kirkpatrick and Pavol Hell.
\newblock On the completeness of a generalized matching problem.
\newblock In {\em {STOC} 1978}, pages 240--245. {ACM}, 1978.

\bibitem{kostochka1985polynomial}
AV~Kostochka and AI~Serdyukov.
\newblock Polynomial algorithms with the estimates 3/4 and 5/6 for the
  traveling salesman problem of the maximum.
\newblock {\em Upravliaemie systemy}, 26:55--59, 1985.

\bibitem{DBLP:journals/tcs/KowalikM09}
Lukasz Kowalik and Marcin Mucha.
\newblock Deterministic 7/8-approximation for the metric maximum {TSP}.
\newblock {\em Theoretical Computer Science}, 410(47-49):5000--5009, 2009.

\bibitem{lawler1976combinatorial}
Eugene Lawler.
\newblock {\em Combinatorial Optimization: Networks and Matroids}.
\newblock Holt, Rinehart and Winston, 1976.

\bibitem{li2023cyclepack}
Shiming Li and Wei Yu.
\newblock Approximation algorithms for the maximum-weight cycle/path packing
  problems.
\newblock {\em Asia-Pacific Journal of Operational Research}, 40(04):2340003,
  2023.

\bibitem{manthey2008approximating}
Bodo Manthey.
\newblock On approximating restricted cycle covers.
\newblock {\em SIAM Journal on Computing}, 38(1):181--206, 2008.

\bibitem{DBLP:journals/jda/MonnotT08}
J{\'{e}}r{\^{o}}me Monnot and Sophie Toulouse.
\newblock Approximation results for the weighted $\rm{P_4}$ partition problem.
\newblock {\em Journal of Discrete Algorithms}, 6(2):299--312, 2008.

\bibitem{DBLP:conf/stacs/Neuwohner21}
Meike Neuwohner.
\newblock An improved approximation algorithm for the maximum weight
  independent set problem in d-claw free graphs.
\newblock In {\em {STACS} 2021}, volume 187 of {\em LIPIcs}, pages 53:1--53:20.
  Schloss Dagstuhl - Leibniz-Zentrum f{\"{u}}r Informatik, 2021.

\bibitem{DBLP:conf/ipco/Neuwohner22}
Meike Neuwohner.
\newblock The limits of local search for weighted k-set packing.
\newblock In {\em Integer Programming and Combinatorial Optimization - 23rd
  International Conference, {IPCO} 2022, Eindhoven, The Netherlands, June
  27-29, 2022, Proceedings}, volume 13265 of {\em Lecture Notes in Computer
  Science}, pages 415--428. Springer, 2022.

\bibitem{DBLP:conf/soda/Neuwohner23}
Meike Neuwohner.
\newblock Passing the limits of pure local search for weighted \emph{k}-set
  packing.
\newblock In {\em Proceedings of the 2023 {ACM-SIAM} Symposium on Discrete
  Algorithms, {SODA} 2023, Florence, Italy, January 22-25, 2023}, pages
  1090--1137. {SIAM}, 2023.

\bibitem{tanahashi2010deterministic}
Ruka Tanahashi and Zhi-Zhong Chen.
\newblock A deterministic approximation algorithm for maximum 2-path packing.
\newblock {\em IEICE TRANSACTIONS on Information and Systems}, 93(2):241--249,
  2010.

\bibitem{thiery2023improved}
Theophile Thiery and Justin Ward.
\newblock An improved approximation for maximum weighted k-set packing.
\newblock In {\em {SODA} 2023}, pages 1138--1162. SIAM, 2023.

\bibitem{DBLP:journals/dam/Zuylen13}
Anke van Zuylen.
\newblock Deterministic approximation algorithms for the maximum traveling
  salesman and maximum triangle packing problems.
\newblock {\em Discrete Applied Mathematics}, 161(13-14):2142--2157, 2013.

\bibitem{williamson2011design}
David~P Williamson and David~B Shmoys.
\newblock {\em The design of approximation algorithms}.
\newblock Cambridge university press, 2011.

\bibitem{DBLP:journals/corr/abs-2210-16534}
Jingyang Zhao and Mingyu Xiao.
\newblock Improved approximation algorithms for capacitated vehicle routing
  with fixed capacity.
\newblock {\em CoRR}, abs/2210.16534, 2022.

\bibitem{DBLP:journals/corr/abs-2402-08216}
Jingyang Zhao and Mingyu Xiao.
\newblock An improved approximation algorithm for metric triangle packing.
\newblock {\em CoRR}, abs/2402.08216, 2024.

\bibitem{DBLP:conf/walcom/0001024}
Jingyang Zhao and Mingyu Xiao.
\newblock Improved approximation algorithms for cycle and path packings.
\newblock In {\em {WALCOM} 2024}, volume 14549 of {\em Lecture Notes in
  Computer Science}, pages 179--193. Springer, 2024.

\end{thebibliography}
\end{document}